\def\conf{0}
\newcommand\numberthis{\addtocounter{equation}{1}\tag{\theequation}}
\newtheoremstyle{mythm}%	name
\theoremstyle{mythm}
\renewenvironment{proof}[1][\proofname]{%
	\par\pushQED{\qed}\normalfont%
	\trivlist\item[\hskip\labelsep\bfseries#1.]%
	\ignorespaces
}{%
	\popQED
}
\let\oldproofname=\proofname
\renewcommand{\proofname}{\rm\bf{\oldproofname}}
\newtheorem{thm}{Theorem}
\newtheorem{cor}[thm]{Corollary}
\newtheorem{lemma}{Lemma}[section]
\newtheorem{claim}[lemma]{Claim}
\newtheorem{corollary}[lemma]{Corollary}
\newtheorem{fact}[lemma]{Fact}
\newtheorem{definition}{Definition}[section]
\newcommand{\mnote}[1]{{\color{red}$\star$}	\marginpar{\tiny\bf
		\begin{minipage}[t]{0.5in}
			\raggedright #1
\end{minipage}}}
\newcommand{\link}[1]{\hyperref[#1]{\color{black}\textsf{#1}}}
\newcommand{\eps}{\varepsilon}
\renewcommand{\th}{^{\textrm{th}}}
\DeclareMathOperator{\poly}{poly}
\newcommand{\mA}{\mathcal{A}}
\newcommand{\mAlg}{\mathcal{ALG}}
\newcommand{\mG}{\mathcal{G}}
\newcommand{\mFA}{\mathcal{F}^{\mA}}
\newcommand{\mN}{\mathcal{N}}
\newcommand{\mC}{{\mathcal{	C}}}
\newcommand{\mD}{{\mathcal{D}}}
\newcommand{\mOC}{{\mathcal{O}}}
\newcommand{\vT}{\vec{T}}
\newcommand{\vC}{\vec{C}}
\newcommand{\vI}{\vec{I}}
\newcommand{\mI}{\mathcal{I}}
\newcommand{\mR}{{\mathcal{R}}}
\newcommand{\mS}{\mathcal{S}}
 \newcommand{\mFAI}{\mathcal{F}^{\mA, \vI}}
\newcommand{\wtAI}{\omega^{\mA, \vI}}
\newcommand{\mPAI}{\mathcal{P}^{\mA,\vI}}
 \newcommand{\Approx}{\hyperref[alg:const-approx]{\color{black}{\bf Approx-Cliques}}}
\newcommand{\Main}{\hyperref[alg:main]{\color{black}{\bf Approx-Cliques-with-Is-Active}}}
\newcommand{\IsActive}{\hyperref[alg:is-active]{\color{black}{\bf Is-Active}}}
\newcommand{\SampleSet}{\hyperref[alg:sample-set]{\color{black}{\bf Sample-a-Set}}}
\newcommand{\hnk}{\widehat{n}_k}
\newcommand{\tnk}{\widetilde{n}_k}
\newcommand{\tm}{\widetilde{m}}
\def\withcolors{0}
\newcommand{\tnote}[1]{{\color{red}{#1}}}
\newcommand{\dnote}[1]{{\color{blue}{#1}}}
\newcommand{\tnote}[1]{{{#1}}}
\newcommand{\dnote}[1]{{{#1}}}
\renewcommand{\mnote}[1]{}
\newcommand{\tauu}{\tau^{\scriptscriptstyle  U}}
\newcommand{\taul}{\tau^{\scriptscriptstyle  L}}
\newcommand{\vtauu}{\vec{\tau}^{\scriptscriptstyle  U}}
\newcommand{\vtaul}{\vec{\tau}^{\scriptscriptstyle  L}}
\newcommand{\hck}{\widehat{c}_k}
\DeclareMathAlphabet{\mathpzc}{OT1}{pzc}{m}{it}
\newcommand{\EX}{{\rm Ex}}
\newcommand{\cldots}{\cdot \ldots\cdot}
\newcommand{\confeqn}[1]{
	\ifnum\conf=0
	\[#1\]
	\else
	$#1$
	\fi
}
\newcommand{\clk}{n_k}
\newcommand{\wt}{\omega}
\newcommand{\wtP}{\wt^{\mA}}
\newcommand{\wtPI}{\wt^{\mA^*, \vI}}
\newcommand{\asIs}{\pi^{\mA^*}_{\vI}}
\newcommand{\pgood}{{$(\eps,\vec{\tau})$-good}}
\def\SumMindCliques{
\begin{proof}
		\ifnum\conf=1
	[Proof of Claim~\ref{clm:sum_min_d_cliques}]
	\fi
		An acyclic orientation of $G$ is obtained by directing every edge in $G$ such that
	the resulting digraph is acyclic. A standard fact regarding arboricity
	is the existence of an acyclic orientation $D$ of $G$ such that the outdegree of every vertex in $D$
	is at most $\alpha(G)$~\cite{MB83}. 
For any $t$-clique $T$ in $D$, let $\ell^D(T)$ be the least vertex in $T$, according to the ordering in $D$.
	For every $v \in V$, let $\mD_v = \{T: \ell^D(T) = v\}$. Since outdegrees in $D$ are bounded by $\alpha(G)$,
	$\forall v, |\mD_v| \leq \alpha(G)^{t-1}$. Furthermore, the sets $\mD_v$ form a partition of $\mC_t(G)$.
	\begin{equation}
	d(\mC_{t}(G)) = \sum_{T \in \mC_{t}(G)} d(T)  = \sum_{v \in V} \sum_{T \in \mD_v} d(T)
	\leq \sum_{v \in V} d(v) |\mD_v| \leq \alpha(G)^{t-1} \sum_{v \in V} d(v) = 2m\cdot\alpha(G)^{t-1}\;,
	\end{equation}
	and the claim is established.
\end{proof}
}
\def\BoundNkProof{
\begin{proof}
	\ifnum\conf=1
	[Proof of Claim~\ref{clm:d^k_leq_d^k-1}]
	\fi
	We rephrase the claim as: for all $t \geq 2$, for all graphs $H$,
	$n_t(H) \leq 2\alpha(H) n_{t-1}(H)/t$.
	We will prove the claim by induction on $t$. For the base case $t=2$, $n_2 = m(H) \leq n(H) \alpha(H)$.
	Assume that the claim holds for all values $i \leq t-1$.
	Fix an arbitrary graph $H$.
	For a vertex $v$, let $n_t(v)$ denote the number of $t$-cliques that $v$ participates in.
	Observe that
	$n_t(v) = n_{t-1}(H|_{\Gamma(v)})$ (recall that $H|_S$ is the subgraph induced by $S$).
	Note that the arboricity of a subgraph of $H$ is at most the arboricity of $H$.
	Therefore, by the induction hypothesis, for every vertex $v$,
	\begin{equation}
	n_t(v) = n_{t-1}(H|_{\Gamma(v)}) \leq \frac{2\alpha(H)}{t-1} \cdot  n_{t-2}(H|_{\Gamma(v)})=\frac{2\alpha(H)}{t-1} \cdot n_{t-1}(v)\;.
	\end{equation}
	Therefore,
	\begin{equation}
	n_t = \frac{1}{t}\sum_{v} n_t(v) \leq \frac{1}{t}\sum_{v}\frac{2\alpha(H)}{t-1} n_{t-1}(v) = \frac{2\alpha(H)}{t}\cdot n_{t-1} \;,
	\end{equation}
	and the claim is established.
\end{proof}
}
\begin{document}

\begin{titlepage}

	\title{Faster sublinear approximations of $k$-cliques \\ for low arboricity graphs}

	\author{
		Talya Eden
		\thanks{Tel Aviv University, \textit{ talyaa01@gmail.com}. This research was partially supported by a grant from the Blavatnik fund. The author is grateful to the Azrieli Foundation for the award of an Azrieli Fellowship.}
		\and  Dana Ron
		\thanks{Tel Aviv University, \textit {danaron@tau.ac.il}.
		This research was partially supported by the Israel Science Foundation grant No. 671/13 and by a grant from the Blavatnik fund.}
	\and  C. Seshadhri
		\thanks{University of California, Santa Cruz, \textit{ sesh@ucsc.edu}. This research was funded by NSF CCF-1740850 and NSF CCF-1813165.}
	}

	\date{}
	\maketitle
	\begin{abstract}
		Given query access to an undirected graph $G$, we consider the problem of
% getting
{computing a}
$(1\pm\eps)$-approximation
{of} the number of $k$-cliques in $G$. The standard query
model for general graphs allows for degree queries, neighbor queries, and pair queries.
Let $n$ be the number of vertices, $m$ be the number of edges, and $n_k$
be the number of $k$-cliques.
%$k$-clique count.
Previous work by Eden, Ron and Seshadhri (STOC 2018) gives an $O^*(\frac{n}{n^{1/k}_k} + \frac{m^{k/2}}{n_k})$-time algorithm
for this problem (we use $O^*(\cdot)$ to suppress $\poly(\log n, 1/\eps, k^k)$ dependencies).
Moreover, this bound is nearly optimal when the expression is sublinear in the size of the graph.

Our motivation is to circumvent this lower bound, by parameterizing the complexity in terms of \emph{graph arboricity}.
%The arboricity of $G$ is the minimal number of forests required to cover the edges of the graph.
The arboricity of $G$ is a measure for the graph density ``everywhere''.
%It is essentially the maximum average degree over all subgraphs.
We design an algorithm for the class of graphs with arboricity at most $\alpha$, whose
running time is $O^*(\min\{\frac{n\alpha^{k-1}}{n_k},\; \frac{n}{n_k^{1/k}}+\frac{m \alpha^{k-2}}{n_k} \} )$.
We also prove a nearly matching lower bound.
For all graphs, the arboricity is $O(\sqrt m)$, so this bound subsumes all previous results on sublinear clique approximation.

% Furthermore, all minor-closed families have constant arboricity.
As a special case of interest, consider minor-closed families of graphs, which have constant arboricity.
% Thus,
Our result implies that
for any minor-closed family of graphs,
there is a $(1\pm\eps)$-approximation algorithm for $n_k$ that has running time $O^*(\frac{n}{n_k})$.
Such a bound was not known even for the special (classic) case of
triangle counting in planar graphs.
\iffalse
We prove a more refined bound for our algorithm, which is nearly optimal for the class of graphs
with arboricity at most $\alpha$. Moreover,
for $\alpha = \sqrt{m}$ (a trivial upper bound), this refined bound subsumes all previous results
 this problem. \mnote{S: too vague?}

Our result contains a number of new ideas that bring the arboricity dependence into the running time.
We design a sampling process that builds a random set of $(t+1)$-cliques from random a set of $t$-cliques,
whose sample complexity is intimately related to the arboricity. One of our key insights is constructing
an \emph{assignment rule} that assigns $k$-cliques to $t$-cliques, for all $t \leq k$. This rule
is efficiently computable and ensures that the number of $k$-cliques assigned to any $t$-clique
is bounded by a function of the arboricity.
\fi

	\end{abstract}

\thispagestyle{empty}

\end{titlepage}
	
	\tableofcontents
\setcounter{page}{1}
	\newpage

\section{Introduction}\label{sec:intro}

The problem of counting the number of $k$-cliques in a %simple
graph
is a fundamental problem in theoretical computer science, with a wide
variety of applications~\cite{HoLe70,ChNi85,Co88,portes2000social,EcMo02,milo2002network,Burt04,V09,becchetti2008efficient,foucault2010friend,BFN+14,SeKoPi11,JRT12,ELS13,Ts15,FFF15,JS17}.
This problem has seen a resurgence of interest because of its importance
in analyzing massive real-world graphs (like social networks and biological networks).
There are a number of clever algorithms for exactly counting $k$-cliques using matrix multiplications~\cite{NP85,EG04}
or combinatorial methods~\cite{V09}. However, the complexity of
these algorithms grows with $m^{\Theta(k)}$, where $m$ is the number of edges in the graph.
%Even for the simplest case of triangle counting $k=3$,
%it appears unlikely that linear time algorithms exist.

A line of recent work has considered this question from a sublinear approximation  perspective~\cite{ELRS, ERS18}.
%for approximating the clique count
Letting $n$ denote the number of vertices, and $n_k$ denote the number of $k$-cliques, the complexity is basically $O\left(\frac{n}{n_k^{1/k}}+ \frac{m^{k/2}}{n_k}\right)$ with a nearly matching lower bound. 
%Ignoring the exponential dependence on $k$, the complexity is only sublinear when $n_k$ is sufficiently large.
% In this work, we focus on natural families of graphs for which this bound can be significantly improved.

% A natural\mnote{D: replace one `natural' with different word? addition of `useful' good?} candidate is the family of bounded arboricity graphs.\footnote{The arboricity of a graph is the minimal number of forests required to cover the edges of the graph.}
We study the problem of clique estimation in bounded arboricity graphs, with the hope
of circumventing the above lower bound.\footnote{The arboricity of a graph is the minimal number of forests required to cover the edges of the graph.}
A graph of arboricity at most $\alpha$
has the property that the average degree in any subgraph is at most $\alpha$~\cite{nash1961edge, nash1964decomposition}.
Constant arboricity families are a rich class, containing all minor-closed graph families,
bounded expansion graphs, and preferential attachment graphs. A classic result of Chiba and Nishizeki gives
an $O(n+m\alpha^{k-2})$ algorithm for exact counting of $k$-cliques in graphs of arboricity at most $\alpha$~\cite{ChNi85}.
%We study this class in the sublinear setting and provide an  $O\left(\frac{n}{n_k^{1/k}}+ \frac{m\alpha^{k-2}}{n_k}\right)$ algorithm for approximating the number of $k$-cliques.
Our primary motivation is to get a sublinear-time algorithm for approximating the number of $k$-cliques  on such graphs. We assume the standard query model for general graphs (refer to Chapter 10 of Goldreich's book~\cite{G17-book}), so that the algorithm can perform degree, neighbor and pair queries.\footnote{Let us exactly specify each query.
	\begin{inparaenum}[(1)]
		\item Degree queries: given $v \in V$, get the degree $d(v)$.
		\item  Neighbor queries: given $v \in V$ and $i \leq d(v)$
		get the $i\th$ neighbor of $v$.
		\item Pair queries: given vertices $u,v$, determine if $(u,v)$ is an edge.
\end{inparaenum}}

\subsection{Results}
Our main result is an algorithm for approximating the number of $k$-cliques,
whose complexity
depends on the arboricity. The algorithm is sublinear for $n_k = \omega(\alpha^{k-2})$,
(and we subsequently show that for smaller $n_k$, no sublinear algorithm is possible).

\begin{thm} \label{thm:main}
There exists an algorithm
that, given $n,\;k$, an approximation parameter $0<\eps<1$, query access to a graph $G$,
and an upper bound $\alpha$ on the arboricity of $G$,
outputs an estimate $\hnk$, such that with high constant probability (over the randomness of the algorithm), $$(1-\eps)\cdot \clk\leq \hnk\leq (1+\eps)\cdot \clk.$$
The expected
running time of the algorithm is \[\min\left\{\frac{n\alpha^{k-1}}{n_k},\; \frac{n}{n_k^{1/k}}+\frac{m \alpha^{k-2}}{n_k} \right\}\cdot\poly(\log n, 1/\eps, k^k),\]
and the expected  query complexity is the minimum between the expected running time and $O(m+n)$.
%running time is $n^{-1}_k (m\alpha^{k-2} + n\alpha^{k-1})\cdot\poly(\log n, 1/\eps, k^k)$.
% \mnote{S: Isn't this moot? $m \leq \alpha n$, so we just get $n\alpha^{k-1}$? T:no, the term wasn't accurate, I changed it.}
\end{thm}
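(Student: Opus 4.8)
The plan is to build two algorithms, one aimed at each of the two running-time expressions, and to run whichever the inputs predict to be faster; a cheap preliminary call to \Approx{} yields a constant-factor estimate of $n_k$ and an estimate of $m$, which both calibrates all sample sizes and selects the regime. I describe below the $(1\pm\eps)$-accurate algorithms, which is where the new ideas lie.

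The combinatorial heart of both algorithms is an \emph{assignment rule}. Fix a total order $\pi$ on $V$ in which every vertex has only $O(\alpha)$ neighbors that are $\pi$-larger than itself --- equivalently, an acyclic orientation of out-degree $O(\alpha)$, whose existence is the standard arboricity fact quoted in the excerpt; to make $\pi$ computable within the query model one may use the non-decreasing-degree order (ties by ID), with care taken for the few vertices where this is not $O(\alpha)$-forward-bounded. For $2\le t\le k$ define $\rho_t\colon\mC_k(G)\to\mC_t(G)$ by letting $\rho_t(T)$ be the set of the $t$ $\pi$-smallest vertices of $T$. Then $\rho_t$ is (i) computable from $T$ by sorting $k$ values, (ii) nested, $\rho_t(T)\subseteq\rho_{t+1}(T)$, and, crucially, (iii) has small preimages: if $T\in\rho_t^{-1}(S)$, the remaining $k-t$ vertices of $T$ are $\pi$-larger neighbors of $\max_\pi(S)$, so $|\rho_t^{-1}(S)|\le\binom{O(\alpha)}{k-t}=O(\alpha)^{k-t}$; the same argument, applied one level at a time, shows each $t$-clique extends ``forward'' to at most $O(\alpha)$ distinct $(t+1)$-cliques. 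Property (iii) is what forces arboricity, rather than worst-case degrees, to govern the estimator variances.

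For the bound $\frac{n}{n_k^{1/k}}+\frac{m\alpha^{k-2}}{n_k}$ I would run the level-by-level scheme of~\cite{ERS18} instrumented with $\rho$: maintain, for $t=2,\dots,k$, a sampler \SampleSet{} returning an almost-uniform $t$-clique together with a running estimate $\widetilde n_t$, starting at $t=2$ from the known $O^*(n/\sqrt m)$-time almost-uniform edge sampler with $\widetilde n_2\approx m$. To pass from level $t$ to $t+1$, draw a $t$-clique $S$, set $w=\max_\pi(S)$, sample a uniform neighbor of $w$, and test with $O(k^2)$ pair queries whether it extends $S$ to a $(t+1)$-clique $T'$ with $\rho_t(T')=S$; since each $(t+1)$-clique is charged to one $t$-clique and each $t$-clique is charged $O(\alpha)$ of them, this is an unbiased estimator of $n_{t+1}/n_t$ of relative variance $O^*(\alpha)$, and composing the $k-2$ steps gives an estimator of $n_k$ of relative variance $O^*(\alpha^{k-2})$ over that of $\widetilde n_2\approx m$ --- so $O^*(m\alpha^{k-2}/n_k)$ samples suffice; as in~\cite{ERS18}, producing the seed edge-samples costs $O^*(n/n_k^{1/k})$ rather than $O^*(n/\sqrt m)$ apiece by charging work only to seeds that survive extension. \IsActive{} prunes dead ends: given $S\in\mC_t$ it decides whether $\rho_t^{-1}(S)\neq\varnothing$, so only productive cliques propagate. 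For the bound $\frac{n\alpha^{k-1}}{n_k}$ I would instead use $n_k=\sum_{v\in V}|\{T\in\mC_k:\min_\pi(T)=v\}|$ with each summand $\le O(\alpha)^{k-1}$, sampling a uniform vertex $v$, producing via \SampleSet/\IsActive{} an unbiased estimate of the summand, and averaging over $O^*(n\alpha^{k-1}/n_k)$ vertices.

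I expect the main obstacle to be keeping the per-operation cost free of $\poly(\alpha)$ and $\poly(d_{\max})$ factors: one can neither enumerate the $O(\alpha)$ forward-neighbors at each of the $k$ levels (cost $\alpha^{\Theta(k)}$) nor scan a large neighborhood to locate them, so the extension step and the \IsActive{} test must be realized by sampling among forward-neighbors, which in turn drives a heavy/light split on vertices and cliques and a careful choice of $\pi$ that is simultaneously locally computable and $O(\alpha)$-forward-bounded. The second delicate point is the variance of the $(k-2)$-fold composed estimator: one must show the per-level factors multiply to $O^*(\alpha^{k-2})$ rather than to a product of worst-case degrees, and this is exactly where the excerpt's Claims ($n_t\le 2\alpha\,n_{t-1}/t$ and $d(\mC_t(G))\le 2m\,\alpha^{t-1}$) and the heavy/light split enter. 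The rest is routine: a union bound over the $k-1$ levels for the $(1\pm\eps)$-guarantee, and the standard ``truncate the run at a constant multiple of the expected time, then take a median of independent repetitions'' wrapper to turn the expected running time into the stated bound with high constant probability; the query complexity bound $O(m+n)$ follows since the algorithm can always fall back to reading the whole graph.
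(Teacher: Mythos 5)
Your proposal and the paper share the high-level skeleton (a two-level $\min$, iterative sampling from $t$-cliques to $(t+1)$-cliques via \SampleSet, a verification step \IsActive, a geometric/preliminary search to guess $n_k$), but the combinatorial core is genuinely different and, as written, has a gap that is not a technicality.

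You base everything on an orientation $\pi$ with $O(\alpha)$ forward out-degree and on the rule $\rho_t(T)=$ the $t$ $\pi$-smallest vertices. The paper never uses an orientation algorithmically --- it appears only inside the proof of the counting bound $d(\mC_t(G))\le 2m\alpha^{t-1}$ (Claim~\ref{clm:sum_min_d_cliques}). The algorithm instead always extends a clique by sampling a uniform neighbor of the \emph{minimum-degree} vertex in that clique, with no forward/backward restriction. Uniqueness of where a $k$-clique is counted is then enforced not by $\rho_t$, but by assigning each $k$-clique to the lexicographically first ordered $k$-tuple all of whose prefixes are non-sociable (Definition~\ref{def:wgt}), where sociability is a threshold $c_k(\vT)\le\tau_t\approx\alpha^{k-t}$ rather than an orientation condition. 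This is not a cosmetic difference; it is what makes the extension step implementable.

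Concretely, here is where your version breaks. First, the ordering you propose as locally computable --- nondecreasing degree, ties by ID --- does not have $O(\alpha)$ forward out-degree. Take a tree: a root $v$ of degree $D$ whose $D$ children each have $D$ extra leaves, so every child has degree $D+1>D$. All $D$ children are forward neighbors of $v$, while the arboricity is $1$. So the bound you need, $|\rho_t^{-1}(S)|\le O(\alpha)^{k-t}$, fails badly under degree order; the ordering that does give it (a degeneracy/acyclic orientation of out-degree $\alpha$) is not locally computable, which is exactly the tension you name but do not resolve. Second, even granting such a $\pi$, the extension step ``sample a uniform neighbor of $w=\max_\pi(S)$ and keep it if it is $\pi$-forward'' succeeds with probability $O(\alpha)/d(w)$, and $w$ is by construction a $\pi$-\emph{largest} vertex of $S$, so $d(w)$ has no useful upper bound; the raw outcome is therefore not an estimator of $n_{t+1}/n_t$ with relative variance $O^*(\alpha)$ as you claim, because the $1/d(w)$ factor varies wildly with $S$. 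The paper avoids both problems at once: by extending through the minimum-degree vertex, the sample space at level $t$ has size $d(\mR_t)$, which can be charged to $\sum_{\vT\in\mC_t}d(\vT)\le 2m\alpha^{t-1}$ (Claim~\ref{clm:sum_min_d_cliques}) after the $d(\vT)$-proportional bias correction in \SampleSet, and then the sociability/costliness machinery (Definitions~\ref{def:sociable} and~\ref{def:costly}, Lemma~\ref{lem:almost-good-P}, Claims~\ref{clm:rel-cost}--\ref{clm:IsActive-sociable}) both bounds the per-clique weight and makes the assignment checkable in roughly $O^*(m\alpha^{k-2}/\tnk)$ time. Your sketch of \IsActive{} (``decide whether $\rho_t^{-1}(S)\ne\varnothing$'') does neither of these jobs: it neither bounds variance nor is itself cheap. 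To fill these holes along your route you would in effect have to replace the orientation by the min-degree extension and the $\rho_t$-filter by the sociability filter, which is the paper's construction.
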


Recall that $\alpha$ is always upper bounded by $\sqrt m$,  so that the bound in Theorem~\ref{thm:main} subsumes the result for general graphs~\cite{ERS18}.
%all previous results.
%Indeed for the special case of $\alpha=\Theta(\sqrt m)$ we recover the bound
Also observe that, for $n_k \gg \poly(\log n, 1/\eps, k^k)$-factor, this bound for approximate counting
improves that of Chiba and Nishizeki~\cite{ChNi85} for exact counting.

An application of Theorem~\ref{thm:main} for the
family $\mG$ of minor-closed graphs\footnote{A family of graphs is said to be minor-closed if it is closed under vertex removals, edge removals and edge contractions.} gives the following corollary.

\begin{cor} \label{cor:minor}
Let $\mG$ be a minor-closed family of graphs. There is an
algorithm that, given $n,\;k, \eps > 0$, and query access to $G \in \mG$, outputs a $(1\pm\eps)$-approximation of $n_k$ with high constant probability.
The expected running time of the algorithm is
\ifnum\conf=0
\[\frac{n}{n_k}\cdot  \poly(\log n, 1/\eps, k^k).\]
\else
%~$(n/n_k)${\small{$\cdot \poly(\log n, 1/\eps, k^k)$}}.
~$(n/n_k) \cdot \poly(\log n, 1/\eps, k^k)$.
\fi
\end{cor}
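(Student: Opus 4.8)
The plan is to obtain the corollary as a direct instantiation of Theorem~\ref{thm:main}, exploiting the fact that every proper minor-closed family of graphs has bounded arboricity. Concretely, I would first record that there is a constant $\alpha = \alpha(\mG)$, depending only on the family $\mG$ and not on the particular graph, such that every $G \in \mG$ has arboricity at most $\alpha$. This is classical: since $\mG$ is minor-closed and, being a proper family, excludes some fixed graph, it in particular excludes some clique $K_t$; by the Kostochka--Thomason bound every $K_t$-minor-free graph on $n$ vertices has $O(t\sqrt{\log t}\cdot n)$ edges, and because $\mG$ is closed under vertex and edge deletions the same linear edge bound holds for \emph{every} subgraph of \emph{every} $G \in \mG$. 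By the Nash--Williams formula $\alpha(G) = \max_{H \subseteq G}\lceil |E(H)|/(|V(H)|-1)\rceil$, a uniform linear edge bound on all subgraphs translates into a uniform constant upper bound on $\alpha(G)$. (For familiar families the constant is tiny; e.g.\ $\alpha \le 3$ for planar graphs.)

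Given this, I would simply run the algorithm of Theorem~\ref{thm:main} on the input graph $G \in \mG$, feeding it the constant $\alpha = \alpha(\mG)$ as the promised upper bound on the arboricity. The output guarantee --- a $(1\pm\eps)$-approximation of $n_k$ with high constant probability over the algorithm's internal randomness --- is inherited verbatim, so only the running time needs attention.

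For the running time I would use the first term of the minimum in Theorem~\ref{thm:main}, namely $\frac{n\alpha^{k-1}}{n_k}\cdot\poly(\log n,1/\eps,k^k)$. Since $\alpha$ depends only on $\mG$, the factor $\alpha^{k-1}$ is $O(k^k)$: for $k \ge \alpha$ we have $\alpha^{k-1}\le k^{k-1}\le k^k$, and for the finitely many $k < \alpha$ it is bounded by the constant $\alpha^{\alpha}$. Hence $\alpha^{k-1}$ is absorbed into the $\poly(\cdot,k^k)$ factor, and the running time becomes $\frac{n}{n_k}\cdot\poly(\log n,1/\eps,k^k)$, as claimed. The only point requiring a little care --- and essentially the only ``obstacle'' --- is to invoke this first branch of the $\min$ rather than the second: since $n_k^{1/k}\le n_k$ whenever $n_k\ge 1$, the term $\frac{n}{n_k^{1/k}}$ already exceeds $\frac{n}{n_k}$, so the second branch alone would not yield the stated bound. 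Beyond this bookkeeping and the standard arboricity fact quoted above, I do not foresee any genuine difficulty.
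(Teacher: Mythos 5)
Your proposal is correct and follows exactly the route the paper intends: instantiate Theorem~\ref{thm:main} with the constant arboricity bound $\alpha = \alpha(\mG)$ inherent to a (proper) minor-closed family, take the first branch of the $\min$, and absorb $\alpha^{k-1}$ into the $\poly(\log n, 1/\eps, k^k)$ factor. The paper leaves this argument implicit, and your justification of the bounded-arboricity fact via excluded-minor edge bounds and Nash--Williams is the standard and correct way to fill it in.
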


Even for the special case of triangle counting in planar graphs, such
a result was not previously known. Ignoring the dependence on $\log n$ , $\eps$ and $k$, the bound of $O(n/n_k)$
is particularly pleasing.
We also prove that the bound of Theorem~\ref{thm:main} is nearly optimal for low arboricity graphs.

%As the next theorem implies that for constant arboricity graphs one needs $\Omega(n/n_k)$ queries to estimate the number of $k$-cliques.
%The algorithms of Corollary~\ref{cor:minor} provides a $(1+\eps)$-approximation to $n_k$ is nearly the same complexity.

\vspace{-0.7ex}
%We prove that Theorem~\ref{thm:main} is nearly optimal for low arboricity graphs.
\begin{thm} \label{thm:lb}
Consider the set $\mG$ of graphs of arboricity  at most $\alpha$.
Any multiplicative approximation algorithm that succeeds with constant probability on
all graphs in $\mG$ must make
\ifnum\conf=0
\[\Omega\left( \min\left\{\frac{n\alpha^{k-1}}{n_k},\; \frac{n}{n_k^{1/k}} \right\}+\min\left\{\frac{m (\alpha/k)^{k-2}}{n_k}, \; m \right\} \right)\]
\else
\\
~~~~~$\Omega\left( \min\left\{ \frac{n}{k \cdot n_k^{1/k}},\; \frac{n\alpha^{k-1}}{k^k \cdot n_k} \right\}+\min\left\{\frac{m (\alpha/k)^{k-2}}{n_k}, \; m \right\} \right)$
\fi
queries.
\end{thm}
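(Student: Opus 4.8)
The plan is to construct two families of hard instances and apply standard indistinguishability arguments. For the first term, $\min\{\frac{n}{k\cdot n_k^{1/k}}, \frac{n\alpha^{k-1}}{k^k\cdot n_k}\}$, I would build a graph that is mostly an ``empty'' part (low-degree vertices forming sparse structure, contributing no $k$-cliques) with a small planted region of $k$-cliques. Concretely, take a clique-free base graph on $n$ vertices of arboricity at most $\alpha$, and plant a disjoint union of small dense gadgets each of which is itself a low-arboricity graph supporting many $k$-cliques. The key tension is: to keep arboricity at most $\alpha$, each gadget can support at most roughly $\alpha^{k-1}$ many $k$-cliques per ``anchor'' vertex (this is exactly the content of Claim~\ref{clm:sum_min_d_cliques} / Claim~\ref{clm:d^k_leq_d^k-1}, which bounds $n_k$ in terms of $\alpha$ and the number of lower-order structures). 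An algorithm that does not query inside a gadget cannot detect the planted cliques, so it must ``hit'' a gadget vertex; since the fraction of such vertices is proportional to $n_k/(n\alpha^{k-1})$ (or $n_k/n \cdot$ the inverse gadget size when the $\alpha$-bound is not binding, giving the $n/n_k^{1/k}$ form), a birthday-style / coupon-collector argument shows $\Omega(n\alpha^{k-1}/n_k)$ or $\Omega(n/n_k^{1/k})$ queries are needed. The $k$ and $k^k$ factors in the denominators come from slack in how many cliques a single $O(k)$-vertex or $O(\alpha)$-degree gadget can hold.

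For the second term, $\min\{\frac{m(\alpha/k)^{k-2}}{n_k}, m\}$, I would instead make the graph edge-dominated: most of the edges lie in a sparse clique-free part (to force $m$ edges while keeping arboricity $\le \alpha$), and again a small planted collection of gadgets carries all the $k$-cliques. Here each gadget should look like the Chiba--Nishizeki extremal example — e.g. a complete bipartite-like or layered structure on $\Theta(\alpha)$ vertices, where $\Theta(\alpha)$ edges participate in $\Theta((\alpha/k)^{k-2})$ cliques each — so that the number of ``useful'' edges is $\Theta(n_k \cdot (k/\alpha)^{k-2})$ out of $m$ total. An algorithm that samples edges (or neighbors) uniformly must draw $\Omega(m/(n_k (k/\alpha)^{k-2}))$ of them before seeing a clique-bearing edge, and until then the instance is indistinguishable from a clique-free graph with the same degree sequence. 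The two minima in the theorem statement reflect the regimes where the planted part is already a constant fraction of the whole graph, in which case the trivial bounds $n_k^{1/k}$-type and $m$ apply.

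The formal skeleton would be: (i) define a ``yes'' instance $G_1$ (with the claimed $n, m, \alpha, n_k$) and a ``no'' instance $G_0$ (clique-free, same $n, m$, arboricity $\le \alpha$, with an identical-looking degree sequence and local neighborhoods outside the gadgets); (ii) argue that any algorithm making $q$ queries induces, by Yao's principle, a deterministic decision tree, and bound the probability that $q$ uniformly-random queries into $G_1$ ever touch a gadget vertex/edge — this probability is $O(q \cdot n_k/(n\alpha^{k-1}))$ or $O(q\cdot n_k (k/\alpha)^{k-2}/m)$ respectively; (iii) conclude that for $q$ below the stated threshold the algorithm's view has $o(1)$ total variation distance from its view on $G_0$, so it cannot $(1\pm\eps)$-distinguish $n_k$ from $0$; (iv) take the max over the two constructions, and handle the $\min\{\cdot, m\}$ and $\min\{\cdot, n/n_k^{1/k}\}$ truncations by noting the construction degenerates (the planted part becomes the whole graph) past those points.

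The main obstacle I anticipate is item (i): engineering a single gadget family that simultaneously (a) has arboricity at most $\alpha$, (b) packs the maximum $\Theta(\alpha^{k-1})$ cliques per anchor vertex or $\Theta((\alpha/k)^{k-2})$ cliques per edge — i.e., is extremal for the Chiba--Nishizeki-type bound — and (c) can be stitched into the sparse background so that vertex degrees and short neighbor-query transcripts are genuinely indistinguishable from the clique-free instance. Getting the parameters to line up so that the gadget fraction is exactly $\Theta(n_k/(n\alpha^{k-1}))$ and $\Theta(n_k(k/\alpha)^{k-2}/m)$ for a freely prescribable quadruple $(n,m,\alpha,n_k)$ in the valid range (including the boundary cases where the two terms in each inner $\min$ cross over) is where the real bookkeeping lies; the indistinguishability argument itself is then routine.
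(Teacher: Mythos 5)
Your high-level plan (two planted-gadget constructions, one tuned to vertex queries and one to edge queries, plus a Yao-style indistinguishability argument) is the right general shape, and for the vertex-dominated term it essentially coincides with what the paper does in the small-$n_k$ regime (a single planted clique of size $w\approx n_k^{1/k}$ against an independent set). However, the paper proves the remaining bounds not by a direct indistinguishability calculation but by a reduction from the communication problem $\INT_r$ (a multi-intersection variant of Set-Disjointness), following the framework of Blais--Brody--Matulef / Goldreich / Eden--Rosenbaum. This is not just a stylistic choice: the CC framework hands you lower bounds for \emph{adaptive} query strategies for free, whereas a bare ``probability of hitting a gadget'' calculation has to be reworked to handle adaptive degree/neighbor/pair queries — your item (ii) compresses precisely that nontrivial step.

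The genuine gap is in the second term, and you have in fact flagged it yourself in item (i)(c) without resolving it. In a ``plant a dense gadget in a sparse background'' construction, the vertices inside the gadget have visibly larger degrees, so a handful of degree queries reveal where the cliques are and the hitting argument collapses. The paper's construction for $\Omega\bigl(m(\alpha/k)^{k-2}/n_k\bigr)$ avoids this by an edge-\emph{routing} trick: it builds four vertex blocks $A,A',B,B'$ plus $(k-2)$ small ``amplifier'' blocks $S_1,\dots,S_{k-2}$, and for every index pair $(i,j)$ of the communication instance it inserts either the edges $(a_i,a'_{i+j}),(b_{i+j},b'_i)$ (no clique) or $(a_i,b_{i+j}),(a'_{i+j},b'_i)$ (which, together with $S_1,\dots,S_{k-2}$, creates $(\alpha/k)^{k-2}$ $k$-cliques). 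Every vertex has exactly the same degree regardless of $x,y$, so degree queries are useless and each neighbor/pair query can be answered with $O(1)$ bits of communication; the lower bound then drops out of $CC(\INT_r)=\Omega(N/r)$. Without a degree-invariant design of this kind, the ``indistinguishable from a clique-free graph with the same degree sequence'' claim in your proposal is asserted rather than achieved, and that is precisely the nontrivial content of the proof. The ``min with $m$'' truncation is handled in the paper by shrinking the $S_\ell$ blocks to size $n_k^{1/(k-2)}$ and taking $r=1$, not by the planted part ``becoming the whole graph,'' so that boundary case also needs a different argument than the one you sketch.
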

% \mnote{T: For $alpha< n_k^{1/k}$ the second term $m\alpha^{k-2}/n_k$ is indeed irrelevant. Remove it? S. Removed.}

\vspace{-1.7ex}
\subsection{Related Work} \label{sec:related}

Clique counting, and the special case of triangle counting have received
significant attention in a variety of models. We refer the interested reader
to related work sections of~\cite{ELRS} and~\cite{ERS18} for general references.
We will focus on algorithms for low arboricity graphs.

The starting point for such algorithms is the seminal work
of Chiba and Nishizeki, who give a $O(m\alpha^{k-2} + n)$ algorithm
for enumerating $k$-cliques in a graph of arboricity at most $\alpha$~\cite{ChNi85}.
The usual approach to exploit the arboricity is to use degree or degeneracy orientations,
and this method has appeared in a number of theoretical and practical
results on triangle and clique counting~\cite{Co09,SuVa11,BFN+14,FFF15,JS17,DBS18}.
Recent work by Kopelowitz at al. shows
that improving the $O(m\alpha)$  bound for triangle counting is 3-SUM hard~\cite{KPP16}.

Our work follows a line of work on estimating subgraph counts using sublinear algorithms.
The first results were average degree estimation results of
Feige~\cite{feige2006sums} and Goldreich and Ron~\cite{GR08}. These ideas were extended
by Gonen et al. to estimate star counts~\cite{GRS11}. This was the first paper
that looked at the problem of estimating triangles, albeit from a lower bound perspective.
Eden et al. gave the first sublinear algorithm for triangle estimation,
whose query complexity (ignoring $\eps$ and $\log n$ factors) was $O(n/n^{1/3}_3 + m^{3/2}/n_3)$~\cite{ELRS}. This result was
generalized by the authors for $k$-clique counting (as mentioned earlier)~\cite{ERS18}.

The relevance of arboricity for sublinear algorithms was discovered
in the context of estimating stars (or degree moments) in previous work
by the authors~\cite{ERS17}. In that work, standard lower bounds for estimating
degree moments could be avoided for low arboricity graphs, just as in Theorem~\ref{thm:main}.
Recent work of Eden at al. gives a sublinear (bicriteria) algorithm for property testing arboricity~\cite{ELR18}.

On the data mining side, Dasgupta et al. and Chierichetti et al.
consider sublinear algorithms for estimating the average degree, in weaker
models than the standard property testing model~\cite{DKS,ChDa+16}. These results require extra assumptions
on the graphs. Eden et al. build on the ideas developed in work mentioned earlier to get
a practical algorithm for estimating the degree distribution~\cite{EJP+18}.

There is a rich literature on sublinear algorithms for estimating other graph parameters
such as the minimum spanning tree, matchings, and vertex covers~\cite{DBLP:journals/siamcomp/ChazelleRT05, DBLP:journals/siamcomp/CzumajS09, DBLP:journals/siamcomp/CzumajEFMNRS05,nguyen2008constant, yoshida2009improved,DBLP:journals/tcs/ParnasR07,nguyen2008constant,DBLP:journals/talg/MarkoR09, yoshida2009improved, hassidim2009local, onak2012near}.

\vspace{-1.7ex}
\subsection{Organization of the paper}\label{subsec:org}
% Our algorithm and its analysis are somewhat complex, and
% providing full details requires quite a large amount of space.\mnote{D: modify this sentence?}
Our algorithm and its analysis are quite involved. %\mnote{D: modify this sentence? S: I shortened it. D: Good}
In Section~\ref{sec:overview} we give a fairly elaborate (but informal)
 overview of our algorithm and the ideas behind it. After introducing some preliminaries and
 defining some central notions (in Sections~\ref{sec:prel} and~\ref{sec:assn3-weight}), we
 provide our algorithm and the main procedure it uses 
 \ifnum\conf=1
 (in Sections~\ref{sec:oracle-based} and~\ref{sec:is-active}). The full details of its analysis are given in Sections~\ref{sec:missing-proofs}--\ref{sec:final}. 
 \else
\tnote{ (in Sections~\ref{sec:oracle-based} and~\ref{sec:isactive-full}). We then finalize all the details in Section~\ref{sec:final} to achieve our main theorem.}
 \fi
 We end with the lower bound in Section~\ref{sec:lb}.

\section{Overview of the algorithm and the main ideas behind it} \label{sec:overview}

As we explain  below, our starting point is similar to the one applied in~\cite{ERS18} for approximately counting
the number of $k$-cliques in general graphs (and that of~\cite{ELRS}, for $k=3$).
However, in order to exploit the fact that the graphs we consider have bounded arboricity, we depart quite early
from the~\cite{ERS18} algorithm, and introduce a variety of new ideas.
% Our result uses some of the basic ideas from the sublinear clique counting result of~\cite{ERS18} (and that of~\cite{ELRS}, for $k=3$).
% On the other hand, there are many new ideas required to get the arboricity dependence.
For the sake of simplicity of the presentation, assume that $\alpha<n_k^{1/k}$ and  $\eps$ is a constant,
so that we aim for an upper bound of roughly $O(n\alpha^{k-1}/n_k)$ (recall that $m \leq \alpha n$). In what follows we refer to~\cite{ERS18} as ERS.

\subsection{Common starting point with ERS}\label{subsec:ERS}
Assume we uniquely and arbitrarily assign each $k$-clique to one of its vertices.
For a vertex $v$  let $\wt(v)$ denote the number of $k$-cliques assigned to it, where we refer to this value as the {\em weight\/} of $v$.
% It holds that $\EX_{v \in V}[\wt(v)]=n_k/n$.
Consider sampling a set $\mR$ of vertices uniformly at random,\footnote{The algorithm may actually obtain a multiset,
but in this exposition, we abuse terminology and call it a `set'.}
and let $\wt(\mR)= \sum_{v \in \mR}\wt(v)$. Clearly, $\EX[\wt(\mR)]=\frac{n_k}{n}\cdot |\mR|$.
However, $\wt(\mR)$  might have a large variance.
For example, consider the case of $k=3$ and the wheel graph, where it is possible that the central vertex is assigned all the triangles.
Hence, we need an assignment rule that assigns almost all $k$-cliques, but minimizes the
number of $k$-cliques assigned to any vertex. Furthermore, the rule should be efficiently
computable. That is, given a vertex $v$ and a $k$-clique $C$, it should be easy to verify whether $C$ is assigned to $v$.
% We shall further discuss the assignment rule and verification process subsequently.
Assume for now that we have such an assignment rule, and that $\wt(\mR)$ is indeed close to its expected value.
%$\frac{n_k}{n}\cdot |\mR|$.

The next step is to estimate $\wt(\mR)$. Let $E_{\mR}$ denote the set of edges incident to the vertices of $\mR$,
and assume that $|E_{\mR}|$ is close to its expected value $\frac{m}{n}\cdot |\mR|$.
In ERS, $\wt(\mR)$ is approximated by sampling uniform edges in $E_{\mR}$ and extending them to %a (potential)
$k$-cliques. Consider first the (easy) case where all the vertices %in $\mR$
have degree  $O(\sqrt m)$. Then we can extend an edge $(u,v)$ for  $u \in \mR$ to a (potential) $k$-clique
by sampling  $k-2$ neighbors of $u$, each with probability roughly $1/\sqrt m$ (and checking whether we obtained a clique). The probability that this process yields a $k$-clique is roughly $\frac{\wt(\mR)}{|E_{\mR}| \cdot \sqrt m^{k-2}} \approx \frac{n_k}{m^{k/2}}$. By repeating the above process $O(m^{k/2}/n_k)$ times,\footnote{The observant reader may be worried that this requires knowing $m$ and $n_k$, where the former is not provided to the algorithm and the latter is just what we want to estimate. However, constant factor estimates of both suffice for our purposes. For $m$ this can be obtained using~\cite{ERS17}, and for $n_k$ this assumption can be removed by performing a geometric search.
For details see Section~\ref{sec:final}.}
we can get an estimate of $\wt(\mR)$ and thus of $n_k$ (assuming an efficient verification procedure for the assignment rule).
For the case when degrees are much larger than $\sqrt{m}$,
ERS gives a more complex procedure that extends edges to (potential) $k$-cliques.
In the end, each $k$-clique is still sampled with probability roughly $n_k/m^{k/2}$.

In our setting (where the arboricity is at most $\alpha$) the simple scenario discussed above of vertex degrees bounded by $O(\sqrt{m})$ corresponds to the case that all vertex degrees are $O(\alpha)$.
In such a case we can extend an edge to a (potential) $k$-clique in the same manner as ERS, and get that the success probability of sampling a $k$-clique is $\Omega\left(\frac{n_k}{m\alpha^{k-2}}\right)$. 
Unfortunately, it is not clear how to adapt the ERS approach for the unbounded-degrees case and obtain a dependence on $\alpha$ instead of $\sqrt{m}$.
% there is no clear approach for degrees larger than $\alpha$.
% \mnote{Discuss that the bound of ERS is for $\alpha=\sqrt m$ somewhere?}.
Therefore, at this point, we depart from the approach of ERS.

% \paragraph{An assignment rule and an iterative sampling process}
\subsection{An iterative sampling process}\label{subsec:it-samp}
The ERS algorithm can be viewed as a three-step process. It first samples vertices, then samples edges (incident to the sampled vertices), and then (in one step) samples $k$-cliques %(more precisely, $k$-tuples that are potentially cliques),
 that are extensions of these edges. To get a complexity depending
 on the arboricity, we devise an iterative clique sampling process. In  iteration $t$,  we obtain a sample of $t$-cliques, based on the sample of $(t-1)$-cliques from the previous iteration.

% In order to continue describe the algorithm and the high level analysis
% we first define the notion of ordered cliques, so that
It is crucial in our analysis to distinguish \emph{ordered} cliques from unordered cliques.
An unordered $t$-clique $T$ is a set of $t$ vertices $T=\{v_1, \ldots, v_t\}$ (such that every two vertices are connected),
while an ordered $t$-clique is a tuple of $t$ vertices $\vT=(v_1, \ldots, v_t)$ such that $\{v_1, \ldots, v_t\}$ is a clique.
We say that $\vT=(v_1, \ldots, v_t)$ {\em participates\/} in a clique $C$, if $\{v_1, \ldots, v_t\} \subseteq C$.
We also extend the (yet undefined) assignment rule to allow assigning $k$-cliques to ordered $t$-cliques for any $t\leq k$
(and not just to vertices, which is the special case of $t=1$).
For an ordered $t$-clique $\vT$, let $\wt(\vT)$ be the number of $k$-cliques that are assigned to $\vT$,
and for a set of ordered $t$-cliques $\mR$,  let $\wt(\mR)=\sum_{\vT \in \mR}\wt(\vT)$.
We defer the discussion of the assignment rule. For now we focus on the algorithm.

The algorithm starts by sampling a set of \dnote{$s_1$} ordered $1$-cliques (vertices), denoted $\mR_1$. Assume  that %$\wt(\mR_1)\approx\frac{n_k}{n}\cdot |\mR_1|$. 
$\wt(\mR_1)\approx\frac{n_k}{n}\cdot s_1$. 
The algorithm next
 samples a set of $s_2$ ordered $2$-cliques (ordered edges), denoted $\mR_2$, incident to the vertices of $\mR_1$.
\dnote{For $t>2$,} the $t^{\rm th}$ iteration extends $\mR_{t}$ to $\mR_{t+1}$, as described next.
% The size of the sample (and
% the assignment rule)  are set so that with high probability, for every $t$, $\wt(\mR_t)$ is close to its expected value.

% It continues iteratively, where at the end of each step $t$ it has a set $\mR_t$ of ordered $t$-cliques and in the next step it samples a set $\mR_{t+1}$ of ordered $(t+1)$-cliques that are extensions of ordered $t$-cliques in $\mR_t$.
% %Specifically,
% The sampling of $\mR_{t+1}$, given $\mR_t$, is done as follows.

For an ordered $t$-clique $\vT$, %we  let $\Gamma(\vT)$ denote the neighborhood of the minimum vertex of $\vT$, and we
let $d(\vT)$ be the degree of the {\em minimum-degree} vertex in $\vT$, and
 %=|\Gamma(\vT)|$.
for a set of ordered cliques $\mR$,
% we let $\Gamma(\mR)$ be the multiset which is the union (with multiplicity) of the neighborhoods of the cliques in $\mR$, so that
let $d(\mR) =\sum_{\vT \in\mR}d(\vT)$.
The sampling of the set $\mR_{t+1}$ is done by repeating the following $s_{t+1}$ times:
sample a clique $\vT$ in $\mR_t$ with probability proportional to $d(\vT)/d(\mR_t)$ and then
select a uniform neighbor of the least degree vertex in $\vT$. %sampling for a neighbor of $\vT$.
Hence, each $(t+1)$-tuple  that is an extension of an ordered $t$-clique in $\mR_t$ is sampled with probability $\frac{d(\vT)}{d(\mR_t)}\cdot \frac{1}{d(\vT)}=\frac{1}{d(\mR_t)}$. For each sampled $(t+1)$-tuple, the algorithm checks whether it is a $(t+1)$-clique, and if
so, adds it to $\mR_{t+1}$.  Suppose that the weight function (defined by the assignment rule) has the following property.
The weight $\wt(\mR_t)$ is 
the sum of the weights taken over all ordered $(t+1)$-cliques that are extensions of the ordered $t$-cliques in  $\mR_t$.
We can conclude that the expected value of $\wt(\mR_{t+1})$ is $\frac{\wt(\mR_t)}{d(\mR_t)}\cdot s_{t+1}$.

We need to get good upper bounds for $s_{t+1}$, ensuring that $\wt(\mR_{t+1})$ is concentrated around its mean.
Note that the probability of getting a $(t+1)$-clique is inversely proportional to $d(\mR_t)$. Thus, we need
good upper bounds on this quantity, to upper bound $s_{t+1}$. This is where the arboricity enters the picture.
We give a simple argument proving that $d(\mC_t)=\sum_{\vT\in \mC_t}d(\vT) =O(m\alpha^{t-1})$.
(Note that the case $t=3$ is precisely the Chiba and Nishizeki bound $\sum_{(u,v)\in E}\min\{d(u),d(v)\}=O(m\alpha)$~\cite{ChNi85}.)
We then show that $d(\mR_t)$ is bounded as a function of $d(\mC_t)$.

%     In order to determine number, $s_{t+1}$, of sampled $(t+1)$-tuples that suffice to ensure (with high probability) that $\wt(\mR_{t+1})$ will be close to its expected value, we show how $d(\mR_t)$ can be upper bounded as a function of
% $d(\mC_t)$.
% 
% % \dnote{Assume that the weight function (defined by the assignment rule) is such that the sum of the weights taken over all ordered $(t+1)$-cliques that are extensions of the ordered $t$-cliques in  $\mR_t$ equals $\wt(\mR_t)$.  }
% Let $\mC_t$ denote the set of  $t$-cliques in the graph.
% We give a simple argument proving that $d(\mC_t)=\sum_{\vT\in \mC_t}d(\vT) =O(m\alpha^{t-1})$.
% Note that the case $t=3$ is precisely the Chiba and Nishizeki bound
% $\sum_{(u,v)\in E}\min\{d(u),d(v)\}=O(m\alpha)$~\cite{ChNi85}.
% % A fairly simple argument proves that $\sum_{(u,v)\in E}\min\{d(u),d(v)\}=O(m\alpha)$ of Chiba and Nishizeki~\cite{ChNi85} generalizes to any $t$, so that
% In order to \dnote{determine the number, $s_{t+1}$, of sampled $(t+1)$-tuples that suffice to ensure (with high probability) that $\wt(\mR_{t+1})$ will be close to its expected value, we show how $d(\mR_t)$ can be upper bounded as a function of
% $d(\mC_t)$.}
% % edge samples  required to obtain a sufficiently large set of ordered $(t+1)$-cliques $\mR_{t+1}$, we rely on %the fact that with high probability $d(\mR_t)$ is ``well-behaved'' with respect to $d(\mC_t)$.
% \mnote{S: I'm not sure what ``well-behaved" means. D: modified. not precise. but I think gives part of the idea}
% 
\subsection{Desired properties of the assignment rule}\label{subsec:assign}
%We now return to the ideas behind our assignment rule.
Recall that we need to ensure that
with high probability, $\wt(\mR_1)$ is close to its expected value, which should be close to $\frac{n_k}{n}\cdot \dnote{s_1}$, and that
 for every $t\geq 1$, $\wt(\mR_{t+1})$ is  close to $\frac{\wt(\mR_t)}{d(\mR_t)} \cdot \dnote{s_{t+1}}$.
 In addition, we need to efficiently verify the assignment rule.
 We achieve this by defining an assignment rule that has the following properties.
 \smallskip

% \begin{inparaenum}[(1)]
\begin{asparaenum}
\item\label{prop:wtV} $\wt(V)\approx n_k$. This ensures that the expected value of $\wt(\mR_1)$ is approximately
$\frac{n_k}{n}\cdot s_1$.~
\item\label{prop:extend-mRt}
 For every $t$, the sum of the weights taken over all ordered $(t+1)$-cliques that are extensions of the ordered $t$-cliques in
 $\mR_t$ equals $\wt(\mR_t)$. This ensures that for every $t$, 
% the expected value of $\wt(\mR_{t+1})$ is  %close to 
 $\EX[\wt(\mR_{t+1})]=\frac{\wt(\mR_t)}{d(\mR_t)} \cdot \dnote{s_{t+1}}$.~
 % $\wt(\mC_{t+1}(\mR_t))\approx \wt(\mR_t)$, where for a set of $t$-cliques $\mR_t$ and $j\geq t$, $\mC_j(\mR_t)$ is the set of $j$-cliques that are incident $\mR_t$.
\item\label{prop:bounded} For every ordered $t$-clique $\vT$, $\wt(\vT)$ is not too large. This ensures that with high probability
$\wt(\mR_{t+1})$ is close to its expected value for all $t$, for a sufficiently large sample size \dnote{$s_{t+1}$ (which depends on
this upper bound on $\wt(\vT)$ as well as on $d(\mR_t)$)}.
~ % (for a sufficient number of ``clique-extension attempts'').)
\item\label{prop:verify} Given a $k$-clique $C$ and an ordered $t$-clique $\vT=(v_1, \ldots, v_t)$ such that  $\{v_1, \ldots, v_t\} \subseteq C$, we can efficiently determine if $C$ is assigned to $\vT$. This ensures that when we get the final set
    $\mR_k$ of ordered $k$-cliques, we can compute  its weight (and deduce an estimate of $n_k$).
%\end{inparaenum}
\end{asparaenum}

\smallskip
We introduce key notions in the definition of such an assignment rule.

\subsection{Sociable cliques and the assignment rule}\label{subsec:sociable-over}
For an ordered $t$-clique $\vT$,  let $c_k(\vT)$ denote the number of $k$-cliques containing $\vT$.
 An ordered $t$-clique $\vT$ is called \emph{sociable} if $c_k(\vT)$ is above a threshold $\tau_t\approx \alpha^{t-1}$.
  Otherwise, the clique is called  \emph{non-sociable}.
For $k$-clique $C = \{v_1,\dots,v_k\}$, let $\mOC(C)$ be the set of all ordered $k$-cliques
corresponding to the $k!$ tuples inducing $C$.
Let $\mOC'(C)$ be the subset of $\mOC(C)$ that contains ordered
$k$-cliques in $\mOC(C)$ such that {\em all prefixes are non-sociable\/}.
Consider the assignment rule that assigns $C$ to the first (in lexicographic order) $\vC\in \mOC'(C)$ and to each of
its prefixes.
% Consider an assignment rule that assigns a $k$-clique $C$ to some arbitrary (but fixed) ordered $k$-clique $\vC$ and to all of its prefixes, {\em conditioned on all of them being non-sociable\/}.

In a central lemma (see Lemma~\ref{lem:almost-good-P}) we prove that the number of $k$-cliques that are not assigned by this assignment rule to any ordered $k$-clique (and its prefixes) is relatively small. The proof relies on the sociability thresholds $\{\tau_t\}$ and the fact that the graph has  arboricity at most $\alpha$.
We note that ERS also defined the notion of sociable vertices (as vertices that participate in too many $k$-cliques). However, their argument for bounding the number of unassigned $k$-cliques was simpler, as they did define and account for sociable cliques for $t>1$.

The aforementioned assignment rule addresses Properties~\ref{prop:wtV} to~\ref{prop:bounded}.
We are left with Property~\ref{prop:verify} (and how it fits in the big picture)
 %the task of verifying whether a given $k$-clique $C$ is assigned to a given ordered clique $\vT$.

\subsection{Verifying an assignment and costly cliques}\label{subsec:verify-costly}
Recall that in the last iteration of the algorithm, it has a set $\mR_k$ of ordered $k$-cliques,
and it needs to compute $\wt(\mR_k)$ (which can be translated to an estimate of $n_k$). Namely, for each ordered $k$-clique $\vC=(v_1, \ldots, v_k)$ in $\mR_k$, the algorithm needs to verify whether the corresponding $k$-clique $C=\{v_1, \ldots, v_k\}$ is assigned to $\vC$.  This requires to verify
whether $\vC$ and each of its prefixes is non-sociable. Furthermore, it requires verifying that $\vC$ is the first
such ordered $k$-clique (in $\mOC(C)$).
% Thus in this worst case, the algorithm needs to decide
% for every ordered $t$-clique $\vT=(v_{i_1}, \ldots, v_{i_t})$ such that $\{v_{i_1}, \ldots, v_{i_t}\} \subseteq C$ whether $\vT$ is
% sociable.

%Consider the special case of $t=1$ so that $\vT$ is a single vertex. For a vertex $v$, let $G_v$ denote the subgraph induced by the neighbors of $v$. Deciding whether $\vT$ is sociable amounts to deciding whether the number of $(k-1)$-cliques in the subgraph $G_v$ is greater than $\tau_1$. This is done in a similar manner to the iterative process described previously. In general for a $t$-clique $\vT$ we would like to check whether the number of $(k-t)$ in the  subgraph induced by its neighbors, denoted $G_{\vT}$,  is greater than $\tau_t$.\mnote{T:maybe no need to start with $t=1$}.
% As before, this is done by an iterative process which starts by sampling a set $\mR_{t+1}$ of $(t+1)$-cliques that are extensions of $\vT$, and continues until it reaches a set of ordered $k$-cliques $\mR_k$. Recall that for verifying the sociability of a $t$-clique, we are only interested in the number of $k$-cliques incident to it, and not in their weight, and this value can be inferred from the size of the set $|\mR_k|$.
For an ordered $t$-clique $\vT$, consider the subgraph $G_{\vT}$ induced by the set of vertices that neighbor
every vertex in $\vT$. Observe that $c_k(\vT)$ equals the number of $(k-t)$-cliques in
$G_{\vT}$. Therefore, deciding whether $\vT$ is sociable amounts to deciding whether the number of $(k-t)$-cliques in the subgraph $G_{\vT}$ is greater than $\tau_t$.
Indeed this is like our original problem of estimating the number of $(k-t)$-cliques in a graph,
except that it is applied to a subgraph $G_{\vT}$ of our original graph $G$. Unfortunately, we do not have direct query access to such subgraphs. To illustrate this, consider the case of $t=1$ so that $\vT$ consists of single vertex $v$. While we can sample uniform vertices in the subgraph $G_{(v)}$, we cannot directly perform neighbor queries (without incurring a possibly large cost when simulating queries to $G_{(v)}$ by performing
queries to $G$).

However, we show that we can still follow the high-level structure of our iterative sampling algorithm
(though there are a few obstacles).
Specifically, we initialize $\mR_t = \{\vT\}$,
and for each $j = t,\dots,k-1$, we sample a set of ordered $(j+1)$-cliques $\mR_{j+1}$ given a set of ordered $j$-cliques $\mR_j$,
exactly as described in Section~\ref{subsec:it-samp}. The first difficulty that we encounter is the following.
The success probability of sampling an ordered $(j+1)$-clique that extends an ordered $j$-clique in $\mR_j$
% is $\frac{c_k(\mR_t)}{d(\mR_t)}$, where $c_k(\mR_t)=\sum_{\vT}c_k(\vT)$.
is inversely proportional to $d(\mR_j)$. Unfortunately, here we cannot argue that with
high probability $d(\mR_j)$ 
% is ``well-behaved'' 
% with respect to 
can be upper bounded as a function of
$d(\mC_j)$ (which is $O(m\alpha^{j-1})$). 
The reason is that while the algorithm described in Section~\ref{subsec:it-samp} starts with
a uniform sample of vertices $\mR_1$ (that the following samples $\mR_j$ build on),
here we start with $\mR_t = \{\vT\}$ for an arbitrary $t$-clique $\vT$. 

 We overcome this obstacle by defining the notion of
% Therefore, the complexity of the process is proportional to $d(\mR_t)$.
% Unfortunately, we can not rely on the bound $d(\mC_j(G_{\vT}))$, as the vertices
% of the subgraph $G_{\vT}$ are not chosen uniformly at random from $G$ (rather they are chosen since they participate in some $k$-clique $C$). Hence, we need to introduce the notion of
\emph{costly} cliques. We say that an ordered $t$-clique $\vT$ is costly if for some $j\geq t$, $d(\mC_j(\vT))$ is too large,
where $\mC_j(\vT)$ is the set of $j$-cliques that $\vT$ participates in.
For such ordered $t$-cliques, we cannot efficiently verify whether they are sociable.
Thus, we modify our assignment rule so that costly cliques are not  assigned any $k$-clique (even if they are non-sociable).
We prove that the additional loss in unassigned $k$-cliques is small (see Claim~\ref{clm:rel-cost}) and that we can efficiently
determine if an ordered $t$-clique is costly. (The precise definition is slightly different - see Definition~\ref{def:costly}, and the last assertion is more subtle -- see Claims~\ref{clm:IsActive-nonsociable} and~\ref{clm:IsActive-sociable}.) %, but sufficient for our needs.)

So we start with $\mR_t = \{\vT\}$, apply the iterative process, and obtain a set of ordered $k$-cliques $\mR_k$
(that are all extensions of $\vT$). To determine if $\vT$ is sociable, we need to estimate $c_k(\vT)$,
i.e., the number of $(k-t)$ cliques in $G_{\vT}$.
Luckily, it suffices to make this decision approximately.
For the analysis to go through, it suffices to distinguish between the case that $c_k(\vT)$ is ``too large'', and the case that it is
``sufficiently small''.
Therefore, given $\mR_k$, the final decision (regarding the sociability of $\vT$) can be made just based on $|\mR_k|$.
% without the need to verify the assignment of $k$-cliques that $\vT$ participates in to  extensions of $\vT$
%(as done in the main algorithm for estimating the number of $k$-cliques in $G$).
% introduce an assignment rule and weight function for $(k-t)$-cliques in $G_{\vT}$.

\subsection{Summary of our main new ideas and where arboricity comes into play}\label{subsec:main-ideas}
The following are the main differences and new ideas as compared to ERS, with an emphasis on the role of bounded arboricity.
% \mnote{D: is this helpful or too repetitive? S: Let's keep this, it's worth emphasizing.}
%\begin{itemize}
\begin{asparaenum}
\item We introduce an iterative sampling process that, starting from a uniform sample $\mR_1$ of vertices, creates intermediate samples $\mR_t$ of ordered $t$-cliques, until it obtains a sample of ordered $k$-cliques. Arboricity comes into play here since the probability of obtaining an ordered $(t+1)$-clique that can be added to $\mR_{t+1}$, is inversely proportional to $1/d(\mR_t)$, which
    in turn can be bounded  as a function of $\alpha$ (and $m$).
\item We introduce an assignment rule and corresponding weight function $\wt$ that ensures two properties.
(1) Almost every $k$-clique is assigned (to some ordered $k$-clique and all its prefixes), and (2)
no ordered clique is assigned too many $k$-cliques.
The former implies that $\wt(V)\approx n_k$.
    The latter implies that, in the iterative sampling process, each sample of larger ordered cliques ``maintains the weight'' (up to an appropriate normalization) of the previous sample.

    The arboricity $\alpha$ determines the sociability thresholds $\{\tau_t\}$ (above which an ordered clique is not assigned any
    $k$-clique). These thresholds are carefully chosen to ensure that in graphs with arboricity at most $\alpha$,
    the number of unassigned $k$-cliques is sufficiently small. These parameters directly affect the time complexity of the algorithm.

\item We show how the assignment rule can be verified. This translates to determining whether certain ordered cliques are sociable.
A key notion is that of costly cliques, whose sociability cannot be determined efficiently. Arboricity also plays a role in their definition and in the proof that the additional loss incurred by not assigning $k$-cliques to costly ordered cliques is small.
%\end{itemize}
\end{asparaenum}

\section{Preliminaries}\label{sec:prel}

For integer $j$, the set $\{1,\dots,j\}$ is denoted by $[j]$.
For a pair of integers $i \leq j$, the set of integers $\{i,\dots,j\}$ is denoted by $[i,j]$.
% \mnote{D: now have 2nd thoughts about notation since standard for real range}
For a multiset $S$, we use $|S|$ to denote the sum of multiplicities of the items in $S$.
Our algorithm gets parameters $k$ and $\eps$, where we assume that $\eps < 1/2k^2$ (or else we set $\eps = 1/2k^2$).

\label{par:notation}Let $G= (V,E)$ be a graph with $n$ vertices, $m$ edges, and arboricity %at most
$\alpha(G)$.
As noted in Section~\ref{sec:overview}, we distinguish between a $t$-clique, which is a set of $t$ vertices $T=\{v_1,\dots,v_t\}$
(with an edge between every pair of vertices in the set),
and an {\em ordered\/} $t$-clique, which is a $t$-tuple of $t$ distinct vertices $\vT=(v_1,\dots,v_t)$ such that
 $\{v_1,\dots,v_t\}$ is a clique. For an ordered $t$-clique $\vT = (v_1,\dots,v_t)$, we use
 $U(\vT)$ to denote the corresponding unordered $t$-clique $\{v_1,\dots,v_t\}$.
For  cliques (ordered cliques) of size $1$, that is, vertices, we may use $v$ instead of $\{v\}$ (respectively, $(v)$), and similarly for cliques of size $2$ (edges).
% We introduce the following notations.
We let $\mC_t(G)$ denote the set of $t$-cliques in $G$, and
 $n_t(G) = |\mC_t(G)|$. For the set of ordered $t$-cliques in $G$ we use $\mOC_t(G)$.
 When $G$ is clear from the context, we use the shorthand $\mC_t$, $n_t$ and $\mOC_t$, respectively.

\begin{definition}[Clique's least degree vertex and neighbors]\label{def:least_deg_ver}
	For a clique (or ordered clique) $C$ we let $\Gamma(C)$ denote the set of neighbors of $C$'s
minimal-degree vertex (breaking ties by ids)
and let $d(C) = |\Gamma(C)|$.
	We refer to $d(C)$ as the {\sf degree of the (ordered) clique} and to $\Gamma(C)$ as its {\sf set of neighbors}.
For a set (or multiset) of cliques (or ordered cliques) $\mR$, we use the notation $d(\mR)$ for $\sum_{C\in \mR} d(C)$.
\end{definition}
We stress that $\Gamma(C)$ (and respectively, $d(C)$) does not refer to the union of neighbors of vertices in $C$, but only to the neighbor of a single designated vertex in $C$.

\smallskip
Throughout the paper we introduce various additional notations. For the aid of the reader, they all appear in
Table~\ref{tab:notation} in Appendix~\ref{app:prel}.

\ifnum\conf=1
The proofs of the next two claims are provided in Appendix~\ref{appendix:prel-proofs}.
\fi

\begin{claim} %[$\mind(\mC_t)$]
\label{clm:sum_min_d_cliques}
	For every $t$,
	\confeqn{d(\mC_t(G))  %=\sum_{T \in \mC_{t}} d(T)
              \leq 2m\cdot \alpha(G)^{t-1}.}
\end{claim}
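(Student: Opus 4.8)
The plan is to use the standard characterization of arboricity via acyclic orientations with bounded outdegree. Write $\alpha = \alpha(G)$. By a classical result~\cite{MB83}, $G$ admits an acyclic orientation $D$ in which every vertex has outdegree at most $\alpha$; fix such a $D$, and let $<_D$ denote the topological order on $V$ induced by $D$ (so every arc goes from a $<_D$-smaller vertex to a $<_D$-larger one).

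For a $t$-clique $T$, let $\ell(T)$ be the $<_D$-least vertex of $T$, and for $v \in V$ set $\mD_v = \{T \in \mC_t(G) : \ell(T) = v\}$. These sets form a partition of $\mC_t(G)$. The key point is that if $\ell(T) = v$, then for every other vertex $u \in T$ the edge $\{u,v\}$ is oriented from $v$ to $u$ in $D$ (since $v <_D u$), so the $t-1$ vertices of $T \setminus \{v\}$ all lie in the out-neighborhood of $v$ in $D$. As $v$ has at most $\alpha$ out-neighbors, there are at most $\binom{\alpha}{t-1} \le \alpha^{t-1}$ such $(t-1)$-subsets, and hence $|\mD_v| \le \alpha^{t-1}$ for every $v$.

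It then remains to sum degrees over this partition. Since $d(T)$ is, by Definition~\ref{def:least_deg_ver}, the degree of the minimum-degree vertex of $T$, and $\ell(T) \in T$, we have $d(T) \le d(\ell(T))$. Therefore
\[
d(\mC_t(G)) \;=\; \sum_{v \in V} \sum_{T \in \mD_v} d(T) \;\le\; \sum_{v \in V} d(v)\,|\mD_v| \;\le\; \alpha^{t-1} \sum_{v \in V} d(v) \;=\; 2m\,\alpha^{t-1},
\]
which is the claimed bound.

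I do not expect a genuine obstacle here. The one external ingredient is the existence of the outdegree-$\alpha$ acyclic orientation, which is cited rather than reproved; the only mild subtlety is that $\ell(T)$ is selected by the orientation order and need not be the minimum-degree vertex of $T$, so one should be careful to use the inequality $d(T) \le d(\ell(T))$ (valid for \emph{any} vertex of $T$) rather than an equality.
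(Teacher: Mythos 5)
Your proof is correct and is essentially the same as the paper's: both use the acyclic orientation of \cite{MB83} with outdegree at most $\alpha(G)$, partition $\mC_t(G)$ by the source vertex $\ell^D(T)$ of each clique in the orientation, bound $|\mD_v| \leq \alpha(G)^{t-1}$ from the outdegree bound, and sum using $d(T) \leq d(\ell^D(T))$. You simply spell out two steps that the paper leaves implicit (why $|\mD_v| \leq \alpha^{t-1}$, and that $d(T) \leq d(\ell(T))$ holds because $\ell(T)$ is some vertex of $T$ but not necessarily the minimum-degree one), which is a reasonable clarification.
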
	
\ifnum\conf=0
\SumMindCliques
\fi

% Next we prove that the number of $k$-cliques is at most $\alpha/k$ times the number of $(k-1)$-cliques.

\begin{claim} \label{clm:d^k_leq_d^k-1}
For every $t \geq 2$,
\confeqn{n_t(G) \leq \frac{2\alpha(G)}{t} \cdot n_{t-1}(G)\;.}
\end{claim}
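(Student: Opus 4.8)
The plan is to prove the claim $n_t(G) \leq \frac{2\alpha(G)}{t}\cdot n_{t-1}(G)$ by induction on $t$, following a standard "assign each clique to its lowest vertex in an acyclic orientation" argument, or alternatively by an averaging-over-vertices argument that reduces to the subgraphs induced by neighborhoods.

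First I would set up an acyclic orientation $D$ of $G$ in which every outdegree is at most $\alpha(G)$; this exists by the arboricity bound (the same fact already invoked in the proof of Claim~\ref{clm:sum_min_d_cliques}). Every $t$-clique $T$ has a unique least vertex $\ell^D(T)$ in the order induced by $D$, and all $t-1$ other vertices of $T$ lie in the out-neighborhood of $\ell^D(T)$, which has size at most $\alpha(G)$. Hence the number of $t$-cliques with a fixed least vertex $v$ is at most $\binom{\alpha(G)}{t-1} \leq \alpha(G)^{t-1}/(t-1)!$, but this crude count gives $n_t \leq n \cdot \alpha^{t-1}/(t-1)!$ rather than the desired recursive bound, so I would instead organize the count relative to $(t-1)$-cliques.

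The cleaner route, which I would actually carry out, is the induction sketched implicitly by the surrounding material. For the base case $t=2$: $n_2(G) = m(G)$, and since the average degree of any subgraph (in particular $G$ itself) is at most $2\alpha(G)$ — equivalently $m(G)\le \alpha(G)\, n(G)$ — we have $n_2 = m \le \alpha n = \alpha n_1$, which is even stronger than $n_2 \le \alpha n_1$. Wait: we need $n_2 \le \frac{2\alpha}{2} n_1 = \alpha n_1$, so $m \le \alpha n$ suffices. For the inductive step, assume the bound for all smaller values. For a vertex $v$, let $n_t(v)$ be the number of $t$-cliques containing $v$; then $n_t(v) = n_{t-1}(G|_{\Gamma(v)})$, the number of $(t-1)$-cliques in the subgraph induced by $v$'s neighborhood. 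Since $\alpha(G|_{\Gamma(v)}) \le \alpha(G)$, the induction hypothesis applied to the graph $G|_{\Gamma(v)}$ gives $n_{t-1}(G|_{\Gamma(v)}) \le \frac{2\alpha(G)}{t-1} n_{t-2}(G|_{\Gamma(v)}) = \frac{2\alpha(G)}{t-1} n_{t-1}(v)$. Summing over $v$ and using the double-counting identities $n_t = \frac1t \sum_v n_t(v)$ and $n_{t-1} = \frac{1}{t-1}\sum_v n_{t-1}(v)$ yields
\[
n_t = \frac1t \sum_v n_t(v) \le \frac1t \sum_v \frac{2\alpha(G)}{t-1} n_{t-1}(v) = \frac{2\alpha(G)}{t(t-1)} \sum_v n_{t-1}(v) = \frac{2\alpha(G)}{t}\, n_{t-1},
\]
completing the induction.

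The only mild subtlety — the step I'd expect to be the "main obstacle," though it is minor — is getting the base case and the bookkeeping of the combinatorial identities exactly right, in particular making sure the inductive hypothesis is applied to the correct subgraph and that arboricity is monotone under taking induced subgraphs (which it is, since every subgraph of a forest-cover is still a forest-cover). One could alternatively phrase the base case as $t=1$ trivially ($n_1 = n$) or handle $t=2$ via $m \le \alpha n$; either works. Everything else is routine double counting.
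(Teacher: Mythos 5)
Your proof is correct and is essentially identical to the paper's own argument: the same induction on $t$ over all graphs $H$, the same base case $n_2 = m \le \alpha n$, the same reduction $n_t(v) = n_{t-1}(G|_{\Gamma(v)})$ with the inductive hypothesis applied to $G|_{\Gamma(v)}$ (using monotonicity of arboricity under induced subgraphs), and the same $\frac1t\sum_v$ double-counting at the end. The detour through acyclic orientations that you considered and discarded is indeed not used for this claim (the paper reserves it for Claim~\ref{clm:sum_min_d_cliques}).
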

\ifnum\conf=0
\BoundNkProof
\fi

As a corollary of Claim~\ref{clm:d^k_leq_d^k-1}, we obtain.

\begin{corollary} \label{cor:nk_vs_nt}
	For every $1\leq t <k$,
	\confeqn{n_k(G) \leq \frac{t!}{k!}\cdot n_{t}(G)\cdot (2\alpha(G))^{k-t} \;.}
\end{corollary}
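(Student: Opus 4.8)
The plan is to derive Corollary~\ref{cor:nk_vs_nt} directly from Claim~\ref{clm:d^k_leq_d^k-1} by iterating the inequality. Claim~\ref{clm:d^k_leq_d^k-1} states that for every $j \geq 2$ and every graph $G$, $n_j(G) \leq \frac{2\alpha(G)}{j} \cdot n_{j-1}(G)$. Applying this successively for $j = k, k-1, \dots, t+1$ telescopes into the desired bound.

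Concretely, I would write a short induction (or equivalently, a telescoping product) on the gap $k - t$. The base case $k - t = 1$ is just Claim~\ref{clm:d^k_leq_d^k-1} with $j = k$, which gives $n_k(G) \leq \frac{2\alpha(G)}{k} \cdot n_{k-1}(G) = \frac{(k-1)!}{k!} \cdot n_{k-1}(G) \cdot (2\alpha(G))^{1}$. For the inductive step, assume the statement holds for $t+1$, i.e. $n_k(G) \leq \frac{(t+1)!}{k!} \cdot n_{t+1}(G) \cdot (2\alpha(G))^{k-t-1}$. Then apply Claim~\ref{clm:d^k_leq_d^k-1} with $j = t+1$ to get $n_{t+1}(G) \leq \frac{2\alpha(G)}{t+1} \cdot n_t(G)$, substitute, and observe that $\frac{(t+1)!}{k!} \cdot \frac{1}{t+1} = \frac{t!}{k!}$ and $(2\alpha(G))^{k-t-1} \cdot (2\alpha(G)) = (2\alpha(G))^{k-t}$, which yields exactly $n_k(G) \leq \frac{t!}{k!} \cdot n_t(G) \cdot (2\alpha(G))^{k-t}$.

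Alternatively, and perhaps more transparently, I would just multiply the chain of inequalities $n_k \leq \frac{2\alpha}{k} n_{k-1}$, $n_{k-1} \leq \frac{2\alpha}{k-1} n_{k-2}$, \dots, $n_{t+1} \leq \frac{2\alpha}{t+1} n_t$ together, noting that the product of the coefficients is $\prod_{j=t+1}^{k} \frac{2\alpha(G)}{j} = \frac{(2\alpha(G))^{k-t}}{k!/t!} = \frac{t!}{k!}(2\alpha(G))^{k-t}$.

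There is essentially no obstacle here — the corollary is a routine consequence of iterating the claim, and the only thing to be a little careful about is bookkeeping the factorial coefficients $\prod_{j=t+1}^k (1/j) = t!/k!$ correctly. I would present it as a one-paragraph proof by induction on $k-t$ (or simply as the telescoping product), citing Claim~\ref{clm:d^k_leq_d^k-1} at each step and noting that $n_j$ is monotone enough that no positivity subtleties arise (all quantities are nonnegative integers).
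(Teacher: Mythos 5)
Your proof is correct and is exactly the intended derivation: the paper states the corollary without proof, introducing it simply as a consequence of Claim~\ref{clm:d^k_leq_d^k-1}, and your telescoping of $n_j \leq \frac{2\alpha}{j} n_{j-1}$ for $j = t+1,\dots,k$ (with the product of coefficients $\prod_{j=t+1}^k \frac{1}{j} = t!/k!$) is precisely the routine argument being left implicit.
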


\newcommand{\twt}{\widetilde{\wt}}
\newcommand{\uwt}{\twt}
\newcommand{\thresdRi}{?}
\newcommand{\ABORT}{\textsf{ABORT}}
\newcommand{\wgt}{\textbf{Is-Assigned}}
\newcommand{\hwt}{\widehat{WT}}
\newcommand{\sone}{\frac{n \tau_1}{\twt_0}\cdot \frac{3\ln(2/\beta)}{\gamma^2}}
\newcommand{\setst}{\frac{d(\mR_t)\cdot \tau_{t+1}}{\twt_{t}}\cdot \frac{3\ln(2/\beta)}{\gamma^2}}

\newcommand{\op}{\mathcal{Q}^{\mA}}

\section{Weight functions and assignments} \label{sec:assn3-weight}

As explained in the overview of our algorithm, %our starting point
a central component in our approach
is a weight function defined over ordered cliques.
We shall be interested in a weight function that is {\em legal\/} in the following sense.

\begin{definition}[A legal weight function]\label{def:legal-weight}
	A weight function $\wt:\bigcup_{t=1}^k \mOC_t \to \mathbb{N}$ is  {\sf legal\/}  if it
	satisfies the following.
	\begin{compactenum}
		\item For every ordered $k$-clique $\vC$, $\wt(\vC) \in \{0,1\}$, and
		for every unordered $k$-clique $C$, there is at most one ordered $k$-clique $\vC$ such that $C = U(\vC)$
		and $\wt(\vC)=1$.
		\item For every $t \in [k-1]$ and for every ordered $t$-clique $\vT$, $\wt(\vT) = \sum_{\vT' \in \mOC_{t+1}(\vT)} \wt(\vT')$.
	\end{compactenum}
	For a multiset of ordered cliques $\mR$, we let $\wt(\mR) = \sum_{\vT\in \mR} \wt(\vT)$.
\end{definition}
By the above definition,
% for any legal weight function $\wt$, any $t \in [k-1]$ and any multiset $\mR$
% of ordered $t$-cliques, we have that $\wt(\mR) = \wt(\mOC
\begin{fact}\label{fact:wtP-ub}
	Let $\wt$ be a legal weight function. Then
	$\wt(V) \leq n_k$.
\end{fact}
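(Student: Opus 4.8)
The plan is to reduce $\wt(V)$ to a sum over ordered $k$-cliques by iterating the second defining property of a legal weight function, and then to bound that sum by $n_k$ using the first property. First I would record the elementary combinatorial observation underlying the iteration: for each $t \in [k-1]$, the family $\{\mOC_{t+1}(\vT)\}_{\vT \in \mOC_t}$ is a partition of $\mOC_{t+1}$. Indeed, every ordered $(t+1)$-clique $\vT' = (v_1,\dots,v_{t+1})$ has a unique length-$t$ prefix $\vT = (v_1,\dots,v_t)$, which is itself an ordered $t$-clique, and $\vT'$ lies in $\mOC_{t+1}(\vT)$ for that and only that $\vT$.

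Given this, I would sum the identity $\wt(\vT) = \sum_{\vT' \in \mOC_{t+1}(\vT)} \wt(\vT')$ over all $\vT \in \mOC_t$ to obtain
\[
\sum_{\vT \in \mOC_t} \wt(\vT) \;=\; \sum_{\vT \in \mOC_t}\ \sum_{\vT' \in \mOC_{t+1}(\vT)} \wt(\vT') \;=\; \sum_{\vT' \in \mOC_{t+1}} \wt(\vT')\;,
\]
where the last equality is exactly the partition statement. Applying this for $t = 1, 2, \dots, k-1$ in turn (and noting that $\mOC_1 = V$, so that $\wt(V) = \sum_{v \in V}\wt(v) = \sum_{\vT \in \mOC_1}\wt(\vT)$), a straightforward induction gives $\wt(V) = \sum_{\vC \in \mOC_k} \wt(\vC)$.

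Finally I would invoke the first property: each $\wt(\vC) \in \{0,1\}$, and for every unordered $k$-clique $C$ there is at most one ordered $k$-clique $\vC$ with $U(\vC) = C$ and $\wt(\vC) = 1$. Since the map $\vC \mapsto U(\vC)$ is a surjection from $\mOC_k$ onto $\mC_k$, grouping the terms of $\sum_{\vC \in \mOC_k}\wt(\vC)$ according to the underlying unordered clique shows each group contributes at most $1$, so the total is at most $|\mC_k| = n_k$. Combined with the previous paragraph this yields $\wt(V) \leq n_k$. There is no real obstacle here — the only point requiring a moment's care is stating the prefix-partition of $\mOC_{t+1}$ correctly, since $\mOC_{t+1}(\vT)$ refers to extensions of $\vT$ and not to arbitrary supersets; everything else is a telescoping induction.
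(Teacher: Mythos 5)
Your proof is correct and takes the natural (and essentially unique) route; the paper simply asserts the fact ``by the above definition'' without writing out the telescoping argument, and what you have written is a faithful expansion of that implicit reasoning.
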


\label{par:active} We next show how to define a weight function based on
a subset $\mA = \bigcup_{t=1}^k \mA_t$ such that $\mA_t \subseteq \mOC_t$, which we refer to as a subset of
{\em active\/} ordered cliques.
Referring to the notions introduced informally in Section~\ref{sec:overview}, the intention is that
active ordered cliques will be non-sociable and non-costly
 cliques (where these notions are defined formally in Definitions~\ref{def:sociable} and~\ref{def:costly}, respectively).
% , but for now, these have not yet been formally defined.
%\mnote{D: ``assignable'' instead of ``active''?}
The weight function is closely linked to the notion of assigning $k$-cliques to ordered cliques
(as becomes clear in Definition~\ref{def:wgt}).
For now our goal is to define such a weight function that is legal, and such that we can
easily verify (based on $\mA$) whether an ordered $k$-clique has weight $1$ or $0$.
%As discussed in the introduction, our main challenge is in devising
We would like to devise a weight function $\wt$ such that $\wt(V)$ is not much smaller than $n_k$, and that the weight of very ordered clique is appropriately bounded.
We later provide sufficient conditions on $\mA$, which ensure that these properties hold.
%the weight function has additional
%desired properties (in particular that $\wt(V)$ is not much smaller than $n_k$, and that the
%weight of very ordered clique is appropriately bounded).

\label{par:Tleqj}In what follows, for a set $\mR$ of ordered cliques (in particular of the same size), we say that  $\vT \in \mR$ is {\em first\/} in $\mR$ if it is lexicographically first. Also, for an ordered $t$-clique $\vT$ and $j \leq t$, we use $\vT_{\leq j}$ to denote the ordered $j$-clique formed by the first $j$ elements in $\vT$.

Since we shall be interested in active ordered cliques such that all of their prefixed are also active, it will be useful to define the notion of fully active cliques.

\begin{definition}[Fully-active cliques] \label{def:fully-active}
Let  $\mA $ be a subset of  ordered cliques. % $\mA=\bigcup_{t=1}^k \mA_t$.
An ordered $t$-clique $\vT$ is {\sf fully active} with respect to $\mA$,
if all of its prefixes belong to $\mA$. That is, $\vT_{\leq j} \in \mA$ for every $j \in [t]$.
We denote the subset of $t$-cliques that are fully active with respect to $\mA$ by $\mFA_t$.
% An ordered $k$-clique $\vC$ is {\sf fully active} if it is $k$-fully-active.
\end{definition}

We are now ready to define our assignment rule. Recall that for an ordered $k$-clique $\vC=(v_1, \ldots, v_k)$, we use $U(\vC)=\{v_1, \ldots, v_k\}$ to denote its corresponding unordered clique. Hence, for an unordered $k$-clique $C$, $U^{-1}(C)$ is the set of $k!$ ordered $k$-cliques $\vC'$ such that $U(\vC')=C$.

\begin{definition}[Assignment and weight] \label{def:wgt}
	Let $\mA$ % = \bigcup_{t=1}^k \mA_t$
be a subset of ordered cliques.
%	, where $\mA_t \subseteq \mOC_t$.
For each $k$-clique $C$, if $U^{-1}(C) \cap \mFA_k \neq \emptyset$, then $C$ is
% {\sf unassigned}.\mnote{D: add w.r.t $\mA$ everywhere?} Otherwise, $C$ is
{\sf assigned}  (with respect to $\mA$) to the first ordered $k$-clique $\vC \in U^{-1}(C) \cap \mFA_k$,
and to each ordered $t$-clique $\vC_{\leq t}$ for $t \in [k-1]$.
Otherwise ($U^{-1}(C) \cap \mFA_k =\emptyset$), $C$ is {\sf unassigned}. That is, we assign the $k$-clique $C$ to the first ordered $k$-clique $\vC$ in $U(C)$ that is fully active and to all of its prefixes, if such an ordered clique $\vC$ exists. Otherwise we do not assign $C$ to any ordered clique.

\label{par:wt}	For each ordered $t$-clique $\vT$, we let $\wtP(\vT)$  denote the number of $k$-cliques that are assigned to $\vT$, and we refer to $\wtP(\vT)$ as the {\sf weight of $\vT$} (with respect to $\mA$).
	\end{definition}

Observe that by Definition~\ref{def:wgt}, an ordered $t$-clique $\vT$ is assigned some $k$-clique $C$ only if it is the $t$-prefix of some fully active (with respect to $\mA$) ordered $k$-clique $\vC$. % (such that $C = U(\vC)$).
This implies that $\vT$ is active and
hence, only  ordered cliques in $\mA$ can have non-zero weight.

The next claim follows from Definition~\ref{def:wgt}.
\begin{claim}
	For any subset $\mA$ of ordered cliques, $\wtP(\cdot)$ is a legal weight function.
\end{claim}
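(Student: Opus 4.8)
The plan is to verify the two defining properties of a legal weight function (Definition~\ref{def:legal-weight}) directly from the assignment rule in Definition~\ref{def:wgt}. The key observation is that the assignment rule is ``prefix-closed'': whenever a $k$-clique $C$ is assigned to an ordered $k$-clique $\vC$, it is simultaneously assigned to every prefix $\vC_{\leq t}$, $t \in [k-1]$, and to no other ordered clique.

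First I would check property~1. Fix an unordered $k$-clique $C$. By construction, $C$ is assigned to at most one ordered $k$-clique, namely the lexicographically first element of $U^{-1}(C) \cap \mFA_k$ (and to none if that intersection is empty). Hence $\wtP(\vC) = 1$ for exactly that one ordered clique $\vC$ with $U(\vC) = C$, and $\wtP(\vC') = 0$ for every other $\vC' \in U^{-1}(C)$. This is precisely property~1: each ordered $k$-clique has weight in $\{0,1\}$, and for each unordered $k$-clique there is at most one ordered representative of weight $1$.

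Next I would check property~2. Fix $t \in [k-1]$ and an ordered $t$-clique $\vT$. By definition $\wtP(\vT)$ counts the $k$-cliques assigned to $\vT$, and a $k$-clique $C$ is assigned to $\vT$ exactly when $C$ is assigned to some ordered $k$-clique $\vC$ with $\vC_{\leq t} = \vT$. Grouping these $k$-cliques by the ordered $(t+1)$-clique $\vC_{\leq t+1}$, note that $\vC_{\leq t+1}$ always extends $\vT$, i.e.\ lies in $\mOC_{t+1}(\vT)$; conversely, for each $\vT' \in \mOC_{t+1}(\vT)$, the $k$-cliques assigned to $\vT'$ are exactly those assigned to an ordered $k$-clique $\vC$ with $\vC_{\leq t+1} = \vT'$, and for such $\vC$ we automatically have $\vC_{\leq t} = \vT$ (since $\vT'$ extends $\vT$). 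Thus the set of $k$-cliques assigned to $\vT$ is the disjoint union, over $\vT' \in \mOC_{t+1}(\vT)$, of the sets of $k$-cliques assigned to $\vT'$. Disjointness holds because each $k$-clique is assigned to a single ordered $k$-clique and hence to a single chain of prefixes, so it contributes to $\wtP(\vT')$ for exactly one child $\vT'$. Counting gives $\wtP(\vT) = \sum_{\vT' \in \mOC_{t+1}(\vT)} \wtP(\vT')$, which is property~2.

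There is no serious obstacle here — the proof is essentially bookkeeping — but the one point requiring care is the disjointness in property~2: one must invoke that the assignment selects a \emph{single} ordered $k$-clique $\vC$ per $k$-clique $C$, so that $C$ lands in the subtree of exactly one $(t+1)$-extension of $\vT$ rather than being double-counted. With that observation in place, both properties follow immediately and the claim is established.
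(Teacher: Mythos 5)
Your proof is correct. The paper does not actually supply an argument here---it states that the claim ``follows from Definition~\ref{def:wgt}''---and your verification is precisely the direct bookkeeping that the authors leave implicit: property~1 because the assignment rule selects (at most) a single ordered $k$-clique per unordered $k$-clique, and property~2 because each assigned $k$-clique lands in the subtree of exactly one $(t+1)$-extension of $\vT$, giving the disjoint decomposition.
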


The following definition encapsulates what we require from the resulting weight function $\wt^\mA$.
\begin{definition}[Good  active subset] \label{def:goodCPC}
	For an approximation parameter $\eps$ and a vector of weight thresholds %$\vec{\tau}=(\tau_t)_{t=1}^{k-1}$,
$\vec{\tau} = (\tau_1,\dots,\tau_{k})$,
	we say that a %CPC $\mP$
	subset $\mA$ of  ordered cliques
	is {\sf \pgood }
	%	We say that a partition $\mP_t=(\mA_t,\mN_t)$ of $\mC_t$ is {\sf good}
	if the following two conditions hold:
	\begin{asparaenum}
		\item For every $t \in [k]$ and for every ordered $t$-clique $\vT$, $\wtP(\vT) \leq \tau_t$. %\mnote{D: $4\tau_t$?}
		\item $\wtP(V) \geq (1-\eps/2)n_k$.  %\in [(1-\eps/2)n_k,n_k]$. \mnote{D: only lower bound?}
	\end{asparaenum}
	% all heavy $t$-cliques in $\mC_t$ are in $\mN_t$, and if $\wtP(V) \in [(1-2k\eps)n_k,n_k]$.
	%	if all the light cliques in $\mC_t$ is in $\mA_t$.
	%	We say that a CPC $\mP$ is {\sf good} if every $\mP_t$ in $\mP$ is good.
	If only the first condition holds, then we say that %$\mP$
	$\mA$ is {\sf $\vec{\tau}$-bounded}.
\end{definition}
As we shall discuss in more detail subsequently, we obtain the first item in Definition~\ref{def:goodCPC}
by ensuring that $\mA$ includes only ordered cliques
that do not participate in too many $k$-cliques.

\section{An oracle based algorithm}\label{sec:oracle-based}

In order to make the presentation more modular, %and the analysis easier to follow,
we first present an {\em oracle-based\/} algorithm.
That is, we assume  the algorithm, \Approx, is given access to an oracle $\op$ for a
subset of active ordered cliques $\mA$: for any given ordered clique $\vT$, the oracle $\op$ returns whether
$\vT \in \mA$. The algorithm also receives an approximation parameter $\eps$, a confidence parameter $\delta$,
a ``guess estimate'' $\tnk$ of $n_k$, an estimate $\tm$ of $m$, and a vector of weight-thresholds $\vec{\tau}$.
Our main claim will roughly be
that if $\mA$ is \pgood\ (as defined in Definition~\ref{def:goodCPC}), $\tm \geq m/2$ and
$\tnk\leq n_k$, then with probability at least $1-\delta$ the algorithm outputs a $(1\pm \eps)$ approximation of $n_k$,
by approximating $\wtP(V)$. \dnote{For a precise statement, see Theorem~\ref{thm:approx-weight}.}
A constant-factor estimate $\tm$ of $m$ can be obtained by calling the moments-estimation algorithm  of~\cite{ERS17}, which is designed to work for bounded-arboricity graphs (applying it simply to the first moment).
We can  alleviate the need for the parameter  $\tnk$, by  relying on the search algorithm of~\cite{ERS18} \dnote{--
see Section~\ref{sec:final} for details.}
% The complexity of these algorithms depends on $n,m,\alpha,\vec{\tau}$ and $1/n_k$.

% Our main algorithm, \Approx, receives as input the number of vertices $n$,
% and a bound, $\alpha$ on the arboricity of $G$,\mnote{D: where is $\alpha$ used? maybe only in $\tau$?}

The algorithm \Approx\ starts by selecting a uniform sample of vertices ($1$-cliques). It then continues iteratively, where at the start of each iteration it has a sample of ordered $t$-cliques $\mR_t$. It sends this sample to the procedure \SampleSet, which returns a sample of ordered $(t+1)$-cliques $\mR_{t+1}$. The ordered cliques in $\mR_{t+1}$ are extensions of ordered cliques in $\mR_{t}$.  Once the algorithm reaches $t=k$, so that it has a sample of ordered $k$-cliques, it calls the procedure
\wgt\ on each ordered $k$-clique $\vC$ in $\mR_k$ to %compute its weight.
check whether it is assigned the unordered clique $U(\vC)$ (i.e., $\wtP(\vC)=1$).
Finally it returns an appropriately normalized version of the total weight of $\mR_k$.

\newcommand{\stthres}{ \frac{4\tm \alpha^{t-1}\cdot \tau_{t+1}}{\tnk}\cdot \frac{(k!)^2\cdot 3\ln(2/\beta)}{ \beta^{\tnote{t}}\cdot \gamma^2}}

\newcommand{\thresRk}{\frac{(k!)^2}{\beta^k} \cdot \frac{n_k\cdot \tau_k}{\tnk}\cdot  \frac{12\ln(2/\beta)}{\gamma^2}}

\def\ApproxCliques{
	\begin{figure}[htb!]
		\fbox{
			\begin{minipage}{0.95\textwidth}
				{\bf \Approx}($n,k,\alpha,\eps,\delta,\tnk,\tm,\vec{\tau},\op$) \label{alg:const-approx}
				\smallskip
				\begin{compactenum}
					\item Define $\mR_0 = V$ and $d(\mR_0) = n$.
					\item Set $\twt_0 = (1-\eps/2) \tnk$, $\beta=\delta/(3k)$ and $\gamma=\eps/(2k)$. \label{step:twt_0}
					\item Sample  $s_1=\left\lceil\sone\right\rceil$ vertices u.a.r. and let $\mR_1$ be the chosen multiset.
					\item For $t=1$ to $k-1$ do:
					\begin{compactenum}
						\item Compute $d(\mR_t)$ and set $\twt_t=(1-\gamma)\frac{\twt_{t-1} }{d(\mR_{t-1})} \cdot s_{t}$
						and $s_{t+1}=\left\lceil\setst \right\rceil$. \label{step:twt_t}
						\item If $s_{t+1} >\stthres$ then \textbf{abort}. \label{step:abort_st}
						\item Invoke \SampleSet$(t,\mR_t, s_{t+1})$
						and let $\mR_{t+1}$ be the returned mutliset. \label{step:invoke-sample-long}
					\end{compactenum}
					\item \label{step:abort_Rk} If $|\mR_k| > \thresRk$ then \textbf{abort}.
					\item \label{step:hnk}
					Let $\hnk= \frac{n \cdot d(\mR_1)\cldots d(\mR_{k-1})}{s_1 \cldots \cdot s_{k}} \cdot \sum_{\vC \in \mR_k} \wgt(\vC, \op)$.
					
					\item \textbf{Return}  $\hnk$.
				\end{compactenum}
				\vspace{3pt}
			\end{minipage}
		}
	\end{figure}
	
}

\ifnum\conf=1
For the sake of the exposition, here we provide a slightly simplified version of the algorithm.
In particular, we give approximate settings of the variables in the algorithm, and
use the notation $\eqsim$ %(and similarly $\bigsim$)
for these approximate settings. We also removed two steps in which the algorithm aborts, which are
used in order to bound the complexity of the algorithm.
For full details (as well as a complete analysis) see Section~\ref{sec:missing-proofs}.

\begin{figure}[htb!]
	\fbox{
		\begin{minipage}{0.95\textwidth}
			{\bf \Approx}($n,k,\alpha,\eps,\delta,\tm,\tnk,\vec{\tau},\op$) \label{alg:const-approx-short}
			\smallskip
			\begin{compactenum}
				\item Define $\mR_0 = V$, $d(\mR_0) = n$ and set $\twt_0 \eqsim \tnk$.
%				\item Set $\twt_0 \eqsim \tnk$. % $\beta=\delta/(2k)$ and $\gamma=\eps/(2k)$.
%                          \label{step:twt_0}
%
				%				Invoke the algorithm of \cite{ERS17} for $q=100k \cdot \log(n)$ times and let $\widetilde{M}$ be the the median of the returned values. Let $\tm=\widetilde{M}/2$. \label{step:tm}
				\item Sample  $s_1 \eqsim \frac{n \tau_1}{\tnk}$ vertices u.a.r. and let $\mR_1$ be the chosen multiset.
				\item For $t=1$ to $k-1$ do:
				\begin{compactenum}
					\item Compute $d(\mR_t)$ and set $\twt_t \eqsim \frac{\twt_{t-1}}{d(\mR_{t-1})}  \cdot s_{t}$
					and $s_{t+1} \eqsim \frac{d(\mR_t)\cdot \tau_{t+1}}{\twt_t}$. %\label{step:twt_t}
%					\item \label{step:approx-abort} If $s_{t+1}$ is larger than $ \frac{\tm\alpha^{t-1}\cdot \tau_{t+1}}{\tnk}$, then \textbf{abort}.
					\item \label{step:approx-sample-set} Invoke \SampleSet$(t,\mR_t, s_{t+1})$
					and let $\mR_{t+1}$ be the returned multiset. \label{step:approx-sample-set}
				\end{compactenum}
				\item \label{step:hnk-short}
				Let $\hnk= \frac{n \cdot d(\mR_1)\cldots d(\mR_{k-1})}{s_1 \cldots \cdot s_{k}} \cdot \sum_{\vC \in \mR_k} \wgt(\vC, \op)$.
				
				\item \textbf{Return}  $\hnk$.
			\end{compactenum}
			\vspace{3pt}
		\end{minipage}
	}
\end{figure}

\else

\ApproxCliques

\fi
\mnote{T: Move loop to get better success probability from finalizing to here}

The procedure \SampleSet\ (invoked in
\ifnum\conf=1
Step~\ref{step:approx-sample-set}
\else
Step~\ref{step:invoke-sample-long}
\fi
of \Approx), is presented next.
Given a multiset $\mR_t$ of ordered $t$-cliques, consider all $(t+1)$-tuples that
each corresponds to an ordered $t$-clique $\vT$ in $\mR_t$, and a neighbor $v$ of $\vT$.
(For the definition of the neighbors of
an ordered clique $\vT$ and its degree $d(\vT)$
refer to Definition~\ref{def:least_deg_ver}.)
\SampleSet\ samples uniformly from these tuples, and includes in $\mR_{t+1}$ those $(t+1)$-tuples that are
ordered $(t+1)$-cliques.
Constructing a data structures that supports such sampling  can be implemented in linear time in $d(\mR_t)$ (see e.g., \cite{walker1974new,walker1977efficient,marsaglia2004fast}).
% \mnote{D: a few more words about implementation of sampling tuples uniformly}

 For a $t$-clique $T = \{v_1,\dots,v_t\}$, and $j \geq t$, we let $\mC_j(T)$  denote the set of $j$-cliques that $T$ participates in.
 That is, the set of  $j$-cliques $T'$ such that $T \subseteq T'$.
\label{par:mC_jT}For an ordered $t$-clique $\vT$ we use $\mC_{j}(\vT)$ as a shorthand for $\mC_{j}(U(\vT))$, and also say that $\vT$  participates in each  $\vT' \in \mC_j(\vT)$.
We let $\mOC_j(\vT)$  denote the set of ordered $j$-cliques that
are extensions of $\vT$.
For a multiset  of (ordered) $t$-cliques $\mR$, we use $\mC_j(\mR)$ to denote the union of $\mC_j(T)$ taken over
 all $\vT \in \mR$, where here in ``union'' we mean with multiplicity,\footnote{The formal term should be ``sum'', but since ``sum'' is usually used in the context of numbers, we prefer to use ``union''.}
and  $\mOC_j(\mR)$ is defined analogously.
We extend the definition of $\mC_j$ and $\mOC_j$ to $t$-tuples such that for a $t$-tuple $\vT$ that does not correspond to a $t$-clique, $\mC_j(\vT)$ and $\mOC_j(\vT)$ are mapped to the empty set. \label{par:tuple} Finally, for an ordered $t$-clique $\vT = (v_1,\dots,v_t)$ and a vertex $u$, we use
$(\vT,u)$ as a shorthand for the $(t+1)$-tuple $(v_1,\dots,v_t,u)$.

\newcommand{\setsi}{
\frac{d(\mR_i)\cdot \tau_i}{\underline{WT(\mR_i)}}\cdot \frac{3\ln(\delta/4)}{\eps^2}
}

\newcommand{\thresi}{\frac{}{}}
\begin{figure}[htb!]
	\fbox{
		\begin{minipage}{0.95\textwidth}
			{\bf \SampleSet($t,\mR_t,s_{t+1}$)} \label{alg:sample-set}
			\smallskip
			\begin{compactenum}
				\item Compute $d(\mR_t)$ and set up a data structure to sample each $\vT\in \mR_t$ with probability $d(\vT)/d(\mR_t)$. \label{step:ds}
				\item Initialize $\mR_{t+1}=\emptyset$.
				\item For $\ell = 1$ to $s_{t+1}$: \label{step:setsi}
				\begin{compactenum}
					\item Invoke the data structure to generate an ordered clique $\vT_\ell$.
                    \item Query degrees of vertices in $\vT_\ell$, and find a minimum degree vertex
                     $u \in \vT_\ell$. % (breaking ties with ids).
					\item Sample a random neighbor $v_\ell$ of $u$.
					\item If $(\vT_{\ell}, v_\ell)$ is an ordered $(t+1)$-clique,
                                 add it  to $\mR_{t+1}$.
				\end{compactenum}
			\item \textbf{Return} $\mR_{t+1}$.
			\end{compactenum}
			\vspace{3pt}
		\end{minipage}
	}
\end{figure}

In Claim~\ref{clm:sampleset} (see Section~\ref{sec:missing-proofs})  we prove that for an appropriate setting of the parameter $s_{t+1}$, the set $\mR_{t+1}$ returned by the procedure is ``typical'' with respect to the set $\mR_t$. Essentially we prove that with high probability, $\wt(\mR_{t+1})$ is close to its expected value, $\frac{\wt(\mR_t)}{d(\mR_t)}\cdot s_{t+1}$, and that $d(\mC_j(\mR_{t+1}))$ is not much larger than its expected value $\frac{d(\mC_j(\mR_{t}))}{d(\mR_t)}\cdot s_{t+1}$.

The procedure \wgt\ (invoked in
\ifnum\conf=1
Step~\ref{step:hnk-short}
\else
Step~\ref{step:hnk}
\fi
of \Approx)
decides whether a given ordered $k$-clique $\vC$ is assigned $U(\vC)$, i.e., whether
$\wtP(\vC) = 1$. This is done %as defined in
following Definition~\ref{def:wgt}, given access to an oracle for $\mA$.
In
\ifnum\conf=1
Section~\ref{sec:is-active}
\else
Section~\ref{sec:isactive-full}
\fi
we replace the oracle by an explicitly defined procedure.

\begin{figure}[htb!]
	\fbox{
		\begin{minipage}{0.95\textwidth}
			{\bf \wgt}($\vC, \op$) \label{proc:wgt}
			\smallskip
			\begin{compactenum}
                \item Let  $C = U(\vC)$.
                \item For each $\vC' \in U^{-1}(C)$, check whether $\vC'$ is fully active with respect to $\mA$, i.e., whether
                 $\vC\in \mFA_k$, by calling
                  $\op$ on every prefix $\vC'_{\leq t}$ for $t\in [k-1]$.
                \item If $\vC \in \mFA_k$ and it is the first ordered $k$-clique in $\mFA_k$, then   \textbf{return} 1, otherwise
                 \textbf{return} 0.
				\end{compactenum}
\vspace{3pt}
\end{minipage}
}
\end{figure}

\newcommand{\nkRi}{n_k\cdot \frac{s_1\cldots s_{i}}{d(\mR_0) \cldots d(\mR_{t-1})}}

\ifnum\conf=1
As stated at the start of this section, we show that if $\op$ is an oracle for a \pgood\ subset $\mA$, $\tm \geq m/2$  and $\tnk \leq n_k$,
then with probability at least $1-\delta$, \Approx\ returns an estimate
$\hnk$ such that $\hnk \in (1\pm\eps)n_k$. The precise statement regarding the correctness and complexity of \Approx\ as well
as its proof are provided in Section~\ref{sec:missing-proofs}.

\section{Implementing an oracle for good subset $\mA$}\label{sec:is-active}
\label{par:c_k(T)}
In this section we describe a (randomized) procedure named \IsActive, that implements an oracle for a subset $\mA$, where with
high probability, $\mA$ is \pgood\ for an appropriate setting of $\vec{\tau}$.
For an ordered $t$-clique $\vT$, let
$c_k(\vT)$ denote the number of $k$-cliques that $\vT$ participates in (that is, $c_k(\vT) = |\mC_k(\vT)|$).
The procedure aims at determining whether $c_k(\vT)$ is larger than $\tau_t$, in which case it is not included in $\mA$.
This ensures that the first item in Definition~\ref{def:goodCPC} holds, since  $\wtP(\vT) \leq c_k(\vT)\leq \tau_t$.
%\mnote{D: continue after finish big picture since can refer to that}
On the other hand, the setting of $\vec{\tau}$ is such that despite not including in $\mA$ all ordered cliques $\vT$ for which
$c_k(\vT)$ is above the allowed threshold, we can still prove that the second item in Definition~\ref{def:goodCPC} holds as well.
(See Definition~\ref{def:tau_i} for the exact setting of $\vec{\tau}$.)

In order to give the idea behind the procedure \IsActive, consider the special case that $t=1$ and $T=\{v\}$
for some vertex $v$. Observe that $c_k(v)$ is the number of $k$-cliques in the subgraph induced by $v$ and its neighbors.
Roughly speaking, for $i=1$, the procedure \IsActive\ works similarly to the algorithm \Approx, subject to setting $\mR_1 = \{v\}$.
Namely, starting from $t=1$, and
using the procedure \SampleSet, it iteratively selects a sample $\mR_{t+1}$ of ordered $(t+1)$-cliques, given a sample
$\mR_t$ of ordered $t$-cliques.
Once it obtains $\mR_k$ it uses $|\mR_k|$ to decide whether %$v$ is sociable (and hence should not be active).
$c_k(v)$ is too large (and hence should not be included in $\mA$).
As opposed to \Approx, here we do not ask for a precise estimate of $c_k(v)$, and hence this decision is simpler. %\mnote{D: clarify more?}
In addition, the invariant that we would like to maintain is that the average number of
$k$-cliques that %\mnote{D: ordered vs. non}
ordered cliques in $\mR_{t+1}$ participate in, does not deviate by much from the average number for $\mR_t$.
The procedure generalizes to $i>1$ by setting $\mR_i = \big\{\vI\big\}$ and starting the sampling process with $t=i$.

Observe that the procedure \IsActive\ may exit (in Step~\ref{step:isactive-exit}) before reaching $t=k-1$
if $s_{t+1}$ is above a certain threshold.
This early exit (with an output of \textsf{Non-Active})
 addresses the case that $\vI$ is ``costly''
(for an informal discussion see Section~\ref{subsec:verify-costly}, and for a formal one see Section~\ref{subsec:costly}).

As in the case of the algorithm \Approx, here we give a slightly simplified version of \IsActive.
For full details of \IsActive\ and its analysis, see Section~\ref{sec:isactive-full}.

\begin{figure}[htb!]
	\fbox{
		\begin{minipage}{0.95\textwidth}
			{\bf \IsActive}($i,\vI,k,\alpha,\eps,\delta,\tnk,\tm, \vec{\tau}$) \label{alg:is-active-short}
			\smallskip
			\begin{compactenum}
					\item Let $\mR_i=\big\{\vI\big\}$ and $\twt_i \eqsim\tau_i$.
					\item For $t=i$ to $k-1$ do:
					\begin{compactenum}
						\item Compute $d(\mR_t)$.
						\item For $t>i$ set
						$\twt_t \eqsim \frac{\twt_{t-1} }{d(\mR_{t-1})} \cdot s_t$ and
						$s_{t+1} \eqsim \frac{d(\mR_t)\cdot \tau_{t+1}}{\twt_t}$.
						\item \label{step:isactive-exit} If $s_{t+1}$ is larger than
$\frac{\tm\alpha^{t-1}\cdot \tau_{t+1}}{\tnk}$, then \textbf{return} \textsf{Non-Active}.
						\item Invoke 						\SampleSet$(t,\mR_t, s_{t+1})$ and let $\mR_{t+1}$ be the returned multiset.
					\end{compactenum}
					\item Set $\hck(\vI):= \frac{d(\mR_i)\cdot \ldots\cdot d(\mR_{k-1})}{s_{i+1}\cdot\ldots\cdot  s_{k}}\cdot |\mR_k|$.
					\item If $\hck(\vI) \leq \tau_i/4$ then \textbf{return} {\sf Non-Active}. Otherwise, \textbf{return} {\sf Active}.
			
				\end{compactenum}
			\vspace{3pt}
		\end{minipage}
		}
\end{figure}

By replacing the calls to an oracle $\op$ in the algorithm \Approx\ with calls to \IsActive,
we obtain an algorithm that with high probability computes a $(1\pm \eps)$-estimate of $n_k$ (conditioned on
$\tm \geq m/2$ and
$\tnk < n_k$, where both assumptions can be removed).
% this assumption can be removed by a geometric search).
For full details, see Section~\ref{sec:final}.

\fi
%%%%%%%%%%%%%%%%%%%

\ifnum\conf=1
\section{Filling the details for the oracle-based algorithm}
\else
\subsection{Analysis of \SampleSet\ }
\fi

\label{sec:missing-proofs}
In this and the next section
we shall make use of the following version of the multiplicative Chernoff bound. %\mnote{T:complete}
	Let $\chi_1,\dots,\chi_m$  be %equally distributed
independent random variables taking values in $[0,B]$, such that for every $i$, $\EX[\chi_i]=p$. Then
\ifnum\conf=0
\[ \Pr\left[\frac{1}{m}\sum\limits_{i=1}^m \chi_i > (1+ \gamma)\mu \right] < \exp\left( -\frac{\gamma^2 \mu m}{3B}\right),
\text{  and  }
 \Pr\left[\frac{1}{m}\sum\limits_{i=1}^m \chi_i < (1- \gamma)\mu \right] < \exp\left( -\frac{\gamma^2 \mu m}{2B}\right)\;.\]
\else
\small
\begin{align*}
 \Pr\left[\frac{1}{m}\sum\limits_{i=1}^m \chi_i > (1+ \gamma)p \right] < \exp\left( -\frac{\gamma^2 p m}{3B}\right) \; \mbox{ and }\;
 \Pr\left[\frac{1}{m}\sum\limits_{i=1}^m \chi_i < (1- \gamma)p \right] < \exp\left( -\frac{\gamma^2 p m}{2B}\right)\;.
\end{align*}
\normalsize
\fi

\ifnum\conf=0
We shall also use the following notation. For an ordered $t$-clique $\vT$, let
$c_k(\vT)$ denote the number of $k$-cliques that $\vT$ participates in (that is, $c_k(\vT) = |\mC_k(\vT)|$).
\fi

We start by analyzing the procedure \SampleSet.

\begin{claim}[\SampleSet\ correctness]
	\label{clm:sampleset}
	Consider an invocation of \SampleSet\ with parameters
	$(t, \mR_t,s_{t+1})$.
	The following holds for the multiset $\mR_{t+1}$ that is returned by \SampleSet.
	\begin{compactenum}
		\item \label{item:sampleset-weight}
		Let $\wt$ be a legal weight function over ordered cliques such that $\wt(\vT') \leq \tau_{t+1}$ for every ordered
		$(t+1)$-clique $\vT'$, and let $s_{t+1} =\frac{d(\mR_t)\cdot \tau_{t+1}}{\twt_t}\cdot \frac{3\ln(2/\beta)}{\gamma^2}$.
		\begin{compactenum}
			\item \label{item:sampleset-weight1} If $\twt_t \leq \wt(\mR_t)$,
			then  $\wt(\mR_{t+1}) \in (1\pm \gamma) \cdot \frac{\wt(\mR_{t})}{d(\mR_{t})}\cdot s_{t+1}$
			with probability at least $1-\beta$.
			\item \label{item:sampleset-weight2} If $\twt_t > \wt(\mR_t)$,
			then  $\wt(\mR_{t+1}) \leq (1+ \gamma) \cdot \frac{\twt_t}{d(\mR_{t})}\cdot s_{t+1}$
			with  probability at least $1-\beta$.
		\end{compactenum}
		\item \label{item:sampleset-ck}
		$c_k(\mR_{t+1})\leq \frac{(k-t)}{\beta} \cdot \frac{c_k(\mR_{t})}{d(\mR_{t})}\cdot s_{t+1}$
		with probability at least $1-\beta$  (for any setting of $s_{t+1}$).
		\item\label{item:sampleset-degrees}  For every $t+1 \leq j <k$,
		$d(\mC_j(\mR_{t+1})) \leq  \frac{(k-t-1)^2}{\beta} \cdot \frac{d(\mC_j(\mR_{t}))}{d(\mR_{t})}\cdot s_{t+1}$.
		with probability at least $1-\beta$ (for any setting of $s_{t+1}$).
	\end{compactenum}
	The query complexity and running time of the procedure are $O\left(|\mR_t| + t\cdot s_{t+1}\right)$. %\mnote{D: added factor $t$ since check if clique.}
\end{claim}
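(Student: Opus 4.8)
The plan is to analyze the $s_{t+1}$ iterations of the loop in \SampleSet\ as independent trials. In trial $\ell$, the procedure samples an ordered $t$-clique $\vT_\ell \in \mR_t$ with probability $d(\vT_\ell)/d(\mR_t)$, and then a uniform neighbor $v_\ell$ of the minimum-degree vertex of $\vT_\ell$; so each ordered $(t+1)$-tuple $(\vT,u)$ with $\vT \in \mR_t$ and $u \in \Gamma(\vT)$ is produced with probability exactly $\frac{d(\vT)}{d(\mR_t)}\cdot\frac{1}{d(\vT)} = \frac{1}{d(\mR_t)}$. The tuple is kept in $\mR_{t+1}$ iff it is an ordered $(t+1)$-clique, i.e., iff it lies in $\mOC_{t+1}(\mR_t)$. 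For any nonnegative function $f$ on ordered $(t+1)$-cliques, extended by $0$ to non-clique tuples, I would define $X_\ell^f = f((\vT_\ell,v_\ell))$; then $\E[X_\ell^f] = \frac{1}{d(\mR_t)}\sum_{\vT' \in \mOC_{t+1}(\mR_t)} f(\vT')$, and $\sum_{\ell=1}^{s_{t+1}} X_\ell^f$ is exactly $\sum_{\vT' \in \mR_{t+1}} f(\vT')$ (up to the contribution of non-clique tuples, which is $0$).

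\textbf{Proof of the weight items (1a,1b).} Apply the above with $f = \wt$. By legality (Definition~\ref{def:legal-weight}, item~2) and the fact that $\mOC_{t+1}(\mR_t)$ is the disjoint union over $\vT \in \mR_t$ of $\mOC_{t+1}(\vT)$ (with multiplicity), $\sum_{\vT' \in \mOC_{t+1}(\mR_t)} \wt(\vT') = \sum_{\vT \in \mR_t}\wt(\vT) = \wt(\mR_t)$, so $\E[X_\ell^{\wt}] = \wt(\mR_t)/d(\mR_t) =: \mu$ and each $X_\ell^{\wt} \in [0,\tau_{t+1}]$ by hypothesis. I would then plug $B = \tau_{t+1}$, $p = \mu$, $m = s_{t+1}$, $\gamma = \gamma$ into the stated multiplicative Chernoff bound. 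For item~1a, when $\twt_t \le \wt(\mR_t)$ we have $s_{t+1} = \frac{d(\mR_t)\tau_{t+1}}{\twt_t}\cdot\frac{3\ln(2/\beta)}{\gamma^2} \ge \frac{\tau_{t+1}}{\mu}\cdot\frac{3\ln(2/\beta)}{\gamma^2}$, hence $\frac{\gamma^2 \mu s_{t+1}}{3\tau_{t+1}} \ge \ln(2/\beta)$; both tail bounds are then at most $\beta/2$, and a union bound gives two-sided concentration of $\frac1{s_{t+1}}\sum X_\ell^{\wt} = \wt(\mR_{t+1})/s_{t+1}$ around $\mu$ with probability $\ge 1-\beta$, which is exactly the claim. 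For item~1b, when $\twt_t > \wt(\mR_t)$, I only need the upper tail: stochastically dominate $X_\ell^{\wt}$ by a variable with mean $\twt_t/d(\mR_t) =: \mu' > \mu$ (formally, add an independent nonnegative offset, or just note $\sum X_\ell^{\wt}$ is dominated by the corresponding sum with mean $\mu'$), and apply the upper Chernoff bound with $p = \mu'$; since $s_{t+1}$ was chosen so that $\frac{\gamma^2 \mu' s_{t+1}}{3\tau_{t+1}} = \ln(2/\beta)$, the upper tail is $\le \beta/2 < \beta$, giving $\wt(\mR_{t+1}) \le (1+\gamma)\mu' s_{t+1}$.

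\textbf{Proof of items (2) and (3) via Markov.} For item~2, take $f(\vT') = c_k(\vT')$ on ordered $(t+1)$-cliques; then $\sum_{\vT' \in \mOC_{t+1}(\mR_t)} c_k(\vT')$ counts pairs (ordered $(t+1)$-clique extending some $\vT \in \mR_t$, $k$-clique containing it). Each $k$-clique $K \in \mC_k(\vT)$ contains exactly $k-t$ vertices outside $U(\vT)$, hence contributes exactly $k-t$ ordered $(t+1)$-extensions of the ordered clique $\vT$ (one per choice of the $(t{+}1)$-st vertex). So the sum equals $(k-t)\,c_k(\mR_t)$, giving $\E[c_k(\mR_{t+1})] = \frac{(k-t)c_k(\mR_t)}{d(\mR_t)}\cdot s_{t+1}$, and Markov's inequality yields the stated bound with failure probability $\le \beta$. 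Item~3 is analogous with $f(\vT') = \mathbf{1}[\,\cdot\,]$-weighted counting of $d(\mC_j(\vT'))$; here I need to bound, for each fixed $j$ with $t+1 \le j < k$, the quantity $\sum_{\vT' \in \mOC_{t+1}(\mR_t)} d(\mC_j(\vT'))$ in terms of $d(\mC_j(\mR_t))$. Expanding, this double sum counts, over $\vT \in \mR_t$, triples consisting of a $j$-clique $J \supseteq U(\vT)$, a vertex $w \in \Gamma(J)$, and a vertex $x \in U(J)\setminus U(\vT)$ extending $\vT$ to an ordered $(t+1)$-clique; for a fixed $\vT$ and fixed $J \in \mC_j(\vT)$ there are $d(J)$ choices of $w$ and at most $j - t \le k-t-1$ choices of $x$, but I must also account that a given $(J,w)$ pair (i.e., a given element of the multiset defining $d(\mC_j(\mR_t))$) arises from $\vT$ — so the count is at most $(k-t-1)\cdot d(\mC_j(\mR_t))$; combined with a further factor $(k-t-1)$ from the possible $(t{+}1)$-extensions this matches the claimed $(k-t-1)^2$. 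Markov then gives failure probability $\le \beta$. (The exact combinatorial bookkeeping of the factor $(k-t-1)^2$ is the one place where I would be careful; I expect the two factors of $(k-t-1)$ to come from (i) the number of vertices one can add to $\vT$ to stay inside $J$ and (ii) the number of ways $J$'s designated low-degree vertex's neighborhood interacts with the extension, and it is the main obstacle in the sense of being the least mechanical step.)

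\textbf{Complexity.} Finally, for the running time and query complexity: Step~1 builds the sampling data structure (alias method) on $\mR_t$ in time $O(|\mR_t|)$ after computing $d(\mR_t)$, which costs $|\mR_t|$ degree queries. Each of the $s_{t+1}$ iterations performs $O(t)$ degree queries (to find the minimum-degree vertex of $\vT_\ell$), one neighbor query, and $O(t)$ pair queries to test whether $(\vT_\ell,v_\ell)$ is a clique — all $O(t)$ work per iteration. Summing gives $O(|\mR_t| + t\cdot s_{t+1})$ for both measures, as claimed. This last part is a routine accounting and I would present it in one or two sentences.
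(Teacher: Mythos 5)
Your arguments for items (1a), (1b), and (2) match the paper's proof essentially step for step: the expectation computation $\E[X_\ell^{\wt}]=\wt(\mR_t)/d(\mR_t)$ via legality, the Chernoff bound with $B=\tau_{t+1}$, the stochastic-domination trick for (1b), and the identity $\sum_{\vT'\in\mOC_{t+1}(\mR_t)}c_k(\vT')=(k-t)\,c_k(\mR_t)$ followed by Markov.

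Item (3) is where there is a genuine gap. You correctly establish $\sum_{\vT'\in\mOC_{t+1}(\mR_t)} d(\mC_j(\vT')) = (j-t)\cdot d(\mC_j(\mR_t)) \le (k-t-1)\cdot d(\mC_j(\mR_t))$ — that is the \emph{only} combinatorial factor of $(k-t-1)$. Your attempt to extract a second factor of $(k-t-1)$ from ``the number of ways $J$'s designated low-degree vertex's neighborhood interacts with the extension'' is not correct; there is no such additional counting factor. The second factor of $(k-t-1)$ comes from a completely different source: the claim asserts the bound \emph{simultaneously for all} $j\in[t+1,k-1]$ with failure probability $\beta$, which forces a union bound over the $k-t-1$ values of $j$. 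The paper therefore applies Markov with the larger factor $(k-t-1)/\beta$, getting failure probability $\beta/(k-t-1)$ for each fixed $j$, and then unions over $j$ to get total failure probability $\beta$. In your write-up you apply Markov with factor $1/\beta$ to the (incorrectly enlarged) expectation and conclude ``failure probability $\le\beta$'' — but this holds only per fixed $j$, and the stated claim requires the $j$-uniform bound. So the quantitative form of the claim ($\frac{(k-t-1)^2}{\beta}$ with a single $1-\beta$ guarantee covering all $j$) is not actually established by your argument as written; you need to drop the spurious second counting factor and instead introduce the union bound over $j$.

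The complexity accounting at the end is fine and matches the paper.
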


\begin{proof}
We start by proving
Item~\ref{item:sampleset-weight} in the claim
(regarding the weight of $\mR_{t+1}$).
For the sake of the proof, we extend the definition of the weight function $\wt(\cdot)$ to $t$-tuples,
so that the weight of any tuple that does not correspond to an ordered clique is 0.
For any legal $\wt$,
\begin{equation}
\EX_{\vT \in \mR_t, v \in \Gamma(\vT)}[\wt((\vT,v))]=\sum_{\vT \in\mR_t} \sum_{v \in \Gamma(\vT)}\frac{d(T)}{d(\mR_t)}\cdot \frac{1}{d(\vT)} \cdot \wt((\vT,v)) = \frac{\wt(\mR_t)}{d(R_t)}\;.
\end{equation}

By the premise of this item,
$\wt(\vT')\leq \tau_{t+1}$ for every ordered $(t+1)$-clique $\vT'$.
For the first part (Item~\ref{item:sampleset-weight1}: $\twt_t \leq \wt(\mR_t)$) we have that
$s_{t+1} \geq \frac{d(\mR_t)\cdot \tau_{t+1}}{\wt(\mR_t)}\cdot \frac{3\ln(1/\beta )}{\gamma^2}$, so by the multiplicative Chernoff bound  we get
\begin{equation}
\Pr\left[ \left\lvert \frac{1}{s_{t+1}}\sum_{\ell=1}^{s_{t+1}} \wt((\vT_\ell,v_\ell)) - \frac{\wt(\mR_t)}{d(\mR_t)} \right\rvert > \gamma\cdot  \frac{\wt(\mR_t)}{d(\mR_t)} \right] < 2\exp\left(-\frac{\gamma^2 \cdot \frac{\wt(\mR_t)}{d(\mR_t)}  \cdot s_{t+1}}{3\tau_{t+1}}\right) < \beta\;.
\end{equation}
Therefore, in this case, with probability at least $1-\beta$,
$\wt(\mR_{t+1})= \sum_{\ell=1}^{s_{t+1}} \wt((\vT_\ell,v_\ell))\in (1\pm \gamma) \cdot \frac{\wt(\mR_t)}{d(\mR_t)}\cdot s_{t+1}$, as required.

For the second part  (Item~\ref{item:sampleset-weight1}: $\wt(\mR_t) < \twt_t$), we can upper bound the probability that
$\wt(\mR_{t+1}) > (1+\gamma)\cdot \frac{\twt_t}{d(\mR_t)}\cdot s_{t+1}$ by the probability that this event occurs for
a multiset $\mR'$ (of ordered $t$-cliques) that satisfies $\wt(\mR') = \twt_t$, and this item too follows by
applying the multiplicative Chernoff bound (given the setting of $s_{t+1}$).

We next prove Item~\ref{item:sampleset-ck}.
Observe that since $c_k(\mOC_{t+1}(\mR_t)) = (k-t)\cdot c_k(\mR_t)$,
	\begin{equation}
\EX_{\vT \in \mR_t, v \in \Gamma(\vT)}[c_k\big((\vT,v)\big)] =\frac{ c_k(\mOC_{t+1}(\mR_t))}{d(\mR_t)}
= \frac{(k-t) \cdot c_k(\mR_t)}{d(\mR_t)}\;.
\end{equation}
	By Markov's inequality,
	\begin{equation}
\Pr\left[ \frac{1}{s_{t+1}}\sum_{\ell=1}^{s_{t+1}} c_k\big((\vT_\ell,v_\ell)\big)> \frac{\tnote{1}}{\beta}\cdot \frac{(k-t)\cdot c_k(\mR_t)}{d(\mR_t)} \right] < \frac{\beta}{k-t}.  \label{eqn:ck_ub}
	\end{equation}
We get that
	with probability at least $1-\beta$,
	\begin{equation}
c_k(\mR_{t+1}) \leq  \frac{(k-t)}{\beta}\cdot \frac{c_k(\mR_t)}{d(\mR_t)}\cdot s_{t+1}\;,
	\end{equation}		
as claimed.	

We now turn to prove Item~\ref{item:sampleset-degrees}.
For every $j\in\{t+1,\dots,k-1\}$,
\begin{equation}
\EX_{\vT \in \mR_t, v \in \Gamma(\vT)}[d\big(\mC_j((\vT,v))\big)] =\frac{d\big(\mC_j( \mOC_{t+1}(\mR_t))\big)}{d(\mR_t)}=\frac{(j-t)\cdot d(\mC_j( \mR_t))}{d(\mR_t)}\;,
\end{equation} %\mnote{T: verify together}
and by Markov's inequality,
\begin{equation}
\Pr\left[ \frac{1}{s_{t+1}}\sum_{\ell=1}^{s_{t+1}} d(\mC_j((T_\ell,v_\ell)))> \frac{k-t-1}{\beta}\cdot \frac{(j-t)\cdot d(\mC_j(\mR_t))}{d(\mR_t)} \right] < \frac{\beta}{k-t-1}.  \label{eqn:dkCj_ub}
\end{equation}
Hence, by taking a union bound over all $j$'s in $[t+1, k-1]$,
it holds that with probability at least $1-\beta$,
\begin{equation}
d(\mC_j(\mR_{t+1})) \leq \frac{(k-t-1)^2}{\beta}\cdot \frac{d(\mC_j(\mR_t))\cdot s_{t+1}}{d(\mR_t)}
\;.
\end{equation}
The data structure described in Step~\ref{step:ds} can be implemented in linear time in $d(\mR_t)$ (see e.g., \cite{walker1974new,walker1977efficient,marsaglia2004fast}). Hence, the claim regarding the running time of \SampleSet\ follows directly from the description of the procedure (where the factor $t$ arises from performing $t$ pair queries between the sampled vertex $v$ and each vertex of $\vT$).
\end{proof}

\ifnum\conf=1
In order to prove the main theorem regarding the correctness and complexity of \Approx, we first give the full version of the algorithm and establish several claims. Observe that the algorithm may abort without returning an output (in either Step~\ref{step:abort_st} or
in Step~\ref{step:abort_Rk}). This allows us to upper bound the complexity of the algorithm, where we prove that the algorithm aborts with small probability (conditioned on $\tm$ being a constant-factor estimate of $m$).

\ApproxCliques

\else
\subsection{Analysis of \Approx}

\fi
\dnote{In this subsection we prove the following theorem regarding the performance of \Approx.}
% analyze the algorithm \Approx. We first establish a claim regarding
% the multisets $\mR_t$ (based on Claim~\ref{clm:sampleset}).}

\begin{thm}
	\label{thm:approx-weight}
	Consider an invocation of \Approx\ with parameters $(n,k,\alpha,\eps,\delta,\tm,\tnk,\vec{\tau},\op)$.
	% where $\op$ is an \pgood\ CPC.
\begin{compactenum}
	\item If %$\mP$ is an \pgood\ CPC
	$\mA$ is \pgood, $\tm \geq m/2$
	and
	$\tnk \leq n_k$, %$\underline{WT(\mR_T)}< \wtP(\mR_t)$,
	then with probability at least $1-\delta$,
	% the sample of ordered $t$-cliques $\mR_t$ is typical with respect to $\mP$ and $\mR_{t-1}$
	% for every $t \in [k]$.
	\Approx\ returns an estimate
	$\hnk$ such that $\hnk \in (1\pm\eps)n_k$.  %(\mC_k(\mR_t))$.
	%$\hwt$ such that % $\hwt \in (1\pm \eps) n_k$. %\end{thm}
	\item If % $\mP$ is an \pgood\ CPC and
	%$\mP$
	$\mA$ is
	$\vec{\tau}$-bounded, $\tm \geq m/2$ and
	$\tnk > n_k$, then with probability at least $1-\delta$, \Approx\ returns an estimate
	$\hnk$ such that
	$\hnk \leq (1+\eps)\tnk$.
\end{compactenum}
	
	Furthermore, let $\rho(\op)$ be an upper bound on the running time of $\op$ (per call) and let $q(\op)$ be an upper bound on its query complexity.
	The running time of \Approx\ is
%	\[O\left(\frac{n \cdot \tau_1 \cdot X(k,\delta)}{\eps^2\cdot \tnk} + \frac{\tm}{\tnk}\cdot Y(k,\delta)\cdot \sum_{t=2}^k t\cdot \alpha^{t-1}\cdot \tau_t\ + Z\cdot T(\op)\right)\;. \]
	$$O\left(\frac{n \tau_1}{\tnk} + \frac{k^{3k}}{\delta^{2k}}\cdot \frac{\tm}{\tnk}\cdot
	\sum_{t=2}^{k-1} (\alpha^{t-2}\cdot \tau_t)
	+  \frac{k^{3k}}{\delta^{k}}\cdot   \frac{n_k}{\tnk}\cdot  \tau_k \cdot \rho(\op)
	\right)\cdot \frac{3^k\cdot \ln(k/\delta)}{\eps^2}\;, $$
	and the query complexity of the algorithm is upper bounded by the same  expression
 with $q(\op)$ in place of $\rho(\op)$.
 % \mnote{T: probably should be phrased better. D: just slightly changed this. also changed first part a bit}
%	minimum between the running time and $O(m +n)$.
	\end{thm}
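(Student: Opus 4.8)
The plan is to follow the evolution of three quantities across the iterations $t=0,1,\dots,k$ (with $\mR_0=V$): the weight $\wtP(\mR_t)$, the degree moments $d(\mC_j(\mR_t))$ for $j\ge t$, and the clique count $c_k(\mR_t)$, moving from level $t$ to level $t+1$ by Claim~\ref{clm:sampleset} and anchoring the recursion at $t=1$ through Claim~\ref{clm:sum_min_d_cliques}. Two preliminary observations guide everything. First, by construction of \wgt\ together with Definition~\ref{def:wgt}, $\wgt(\vC,\op)=\wtP(\vC)$ for every $\vC$, so $\hnk=\frac{n\, d(\mR_1)\cdots d(\mR_{k-1})}{s_1\cdots s_k}\,\wtP(\mR_k)$. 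Second, independently of the randomness, the ceilings in \Approx\ force $\twt_t=\Theta\!\big(\tau_t\cdot\frac{3\ln(2/\beta)}{\gamma^2}\big)$ for all $t\in[1,k]$ (defining $\twt_k$ by the same recursion), hence $s_{t+1}/d(\mR_t)=\Theta(\tau_{t+1}/\tau_t)$ --- this cancellation of the $\ln/\gamma^2$ factor between $s_{t+1}$ and $\twt_t$ is what keeps the degree-moment recursions from blowing up.

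For the correctness statement in case~1 I would induct on $t$, showing that on an event of probability $\ge 1-O(t)\beta$ one has (i) $\twt_t\le\wtP(\mR_t)$ and (ii) $\wtP(\mR_t)\in(1\pm\gamma)^t\frac{\wtP(V)}{n}\cdot\frac{s_1\cdots s_t}{d(\mR_1)\cdots d(\mR_{t-1})}$. The base case is a Chernoff bound on the uniform multiset $\mR_1$, using $\wtP(v)\le\tau_1$ (condition~1 of Definition~\ref{def:goodCPC}, valid since a \pgood\ $\mA$ is in particular $\vec{\tau}$-bounded) and $\twt_0=(1-\eps/2)\tnk\le(1-\eps/2)n_k\le\wtP(V)$ (from $\tnk\le n_k$ and condition~2). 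The step applies Claim~\ref{clm:sampleset}(\ref{item:sampleset-weight1}) with the legal, $\vec{\tau}$-bounded function $\wtP$, whose hypothesis $\twt_t\le\wtP(\mR_t)$ is precisely (i); the algorithm's settings of $\twt_{t+1}$ and $s_{t+1}$ then re-establish (i)--(ii) at level $t+1$. Telescoping (ii) and substituting into $\hnk$ yields $\hnk\in(1\pm\gamma)^k\wtP(V)$, which with $\gamma=\eps/(2k)$, Fact~\ref{fact:wtP-ub}, and condition~2 lands in $(1\pm\eps)n_k$. Case~2 is the same argument kept one-sided: track $\overline{\wt}_t:=\max(\wtP(\mR_t),\twt_t)$ using Claim~\ref{clm:sampleset}(\ref{item:sampleset-weight2}) (which needs only $\vec{\tau}$-boundedness) to get $\overline{\wt}_{t+1}\le(1+\gamma)\frac{\overline{\wt}_t}{d(\mR_t)}s_{t+1}$; since $\overline{\wt}_0=\max(\wtP(V),\twt_0)<\tnk$ when $\tnk>n_k$, telescoping gives $\hnk\le(1+\gamma)^k\tnk\le(1+\eps)\tnk$ (reading the conclusion as: whenever \Approx\ reaches Step~\ref{step:hnk}, its output does not exceed $(1+\eps)\tnk$).

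The main obstacle is ruling out the aborts in Steps~\ref{step:abort_st} and~\ref{step:abort_Rk} on these events, given $\tm\ge m/2$. I would enlarge the event with Claim~\ref{clm:sampleset}(\ref{item:sampleset-ck})--(\ref{item:sampleset-degrees}) at every level, plus their $t=1$ base cases obtained from Markov applied to $\E[d(\mC_j(\mR_1))]=\frac{s_1}{n}d(\mC_j(G))\le\frac{s_1}{n}\cdot 2m\alpha^{j-1}$ (Claim~\ref{clm:sum_min_d_cliques}), $j\in[k-1]$, and to $\E[c_k(\mR_1)]=\frac{s_1}{n}\cdot k\,n_k$. Telescoping the per-step blow-ups --- which are $\frac{(k-s-1)^2}{\beta}\cdot\frac{s_{s+1}}{d(\mR_s)}$ for the degree moments and $\frac{k-s}{\beta}\cdot\frac{s_{s+1}}{d(\mR_s)}$ for $c_k$ --- and using $\prod_{s=1}^{t-1}(k-s-1)^2\le((k-2)!)^2$, $\prod_{s=1}^{k-1}(k-s)=(k-1)!$, together with $s_1=\Theta\!\big(\frac{n\tau_1}{\tnk}\cdot\frac{3\ln(2/\beta)}{\gamma^2}\big)$ and the identity $s_{t+1}/d(\mR_t)=\Theta(\tau_{t+1}/\tau_t)$, gives (up to absolute constants) $d(\mR_t)=d(\mC_t(\mR_t))\le\frac{(k!)^2}{\beta^{t}}\cdot\frac{m\alpha^{t-1}\tau_t}{\tnk}\cdot\frac{3\ln(2/\beta)}{\gamma^2}$ and $|\mR_k|=c_k(\mR_k)\le\frac{(k!)^2}{\beta^{k}}\cdot\frac{n_k\tau_k}{\tnk}\cdot\frac{12\ln(2/\beta)}{\gamma^2}$. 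Plugging the first into $s_{t+1}=\Theta(\frac{d(\mR_t)\tau_{t+1}}{\tau_t})$ and using $\tm\ge m/2$ puts $s_{t+1}$ below the Step~\ref{step:abort_st} threshold, while the second is at most the Step~\ref{step:abort_Rk} threshold $\thresRk$; hence no abort occurs. A union bound over the $O(k)$ failure events, each of probability $O(\beta)=O(\delta/k)$, finishes case~1, and case~2 follows by the same accounting (an abort only vacuously satisfies $\hnk\le(1+\eps)\tnk$).

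For the complexity, note that all sample sizes are deterministically capped: $|\mR_1|=s_1$, $|\mR_t|\le s_t$ for $t\ge 2$, $s_{t+1}$ is at most the Step~\ref{step:abort_st} threshold (otherwise \SampleSet\ is not invoked), and $|\mR_k|\le\thresRk$ whenever Step~\ref{step:hnk} runs. Summing the costs: sampling $\mR_1$ costs $O(s_1)$, giving the $\frac{n\tau_1}{\tnk}$ term; the loop costs $\sum_{t=1}^{k-1}O(|\mR_t|+t\,s_{t+1})$, which by the running-time bound in Claim~\ref{clm:sampleset} and the Step~\ref{step:abort_st} cap produces the $\frac{\tm}{\tnk}\sum_{t}\alpha^{t-1}\tau_{t+1}$ term, collapsing to the stated $\sum_{t=2}^{k-1}\alpha^{t-2}\tau_t$ after the setting of $\vec{\tau}$ in Definition~\ref{def:tau_i} (up to $\poly(k)$ factors); and Step~\ref{step:hnk} calls $\op$ at most $O(k\cdot k!)$ times per element of $\mR_k$ (one per prefix of each of the $k!$ orderings of $U(\vC)$), costing $O(|\mR_k|\cdot k\cdot k!\cdot\rho(\op))$ and giving the $\frac{n_k\tau_k}{\tnk}\rho(\op)$ term. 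Collecting the $\poly(\log n,1/\eps,k^k)$ factors yields the stated running-time expression; the query-complexity bound is identical with $q(\op)$ in place of $\rho(\op)$, since every step other than the $\op$ calls performs at most as many queries as operations.
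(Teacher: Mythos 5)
Your proposal is correct and follows essentially the same route as the paper: the paper factors the argument into Claim~\ref{clm:all-R-typical} (weight- and degree-typicality via Claim~\ref{clm:sampleset} plus a Chernoff/Markov base case at $t=1$), Claim~\ref{clm:nk_d(R)} (no abort at Step~\ref{step:abort_st} given $\tm\ge m/2$ and Claim~\ref{clm:sum_min_d_cliques}), and Claim~\ref{clm:st-dRt} (the telescoping identity for $\frac{s_1\cdots s_k}{d(\mR_0)\cdots d(\mR_{k-1})}$), then assembles them exactly as you do. Your reformulation of Claim~\ref{clm:st-dRt} as ``$\twt_t=\Theta(\tau_t\cdot\frac{3\ln(2/\beta)}{\gamma^2})$, hence $s_{t+1}/d(\mR_t)=\Theta(\tau_{t+1}/\tau_t)$'' is a tidy inline restatement of the same cancellation, and the one-sided tracking of $\max(\wtP(\mR_t),\twt_t)$ for case~2 matches the paper's Item~\ref{item:Approx-weight2}.
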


\dnote{In order to prove Theorem~\ref{thm:approx-weight}, we first establish several claims.}

\begin{claim}\label{clm:all-R-typical}
	Consider an invocation of \Approx\ with parameters $(n,k,\alpha,\eps,\delta,\tnk,\vec{\tau},\op)$.
	If the algorithm did not abort, then the following hold.
\begin{compactenum}
\item\label{item:Approx-weight1}	
If $\tnk \leq n_k$ and $\mA$ is \pgood,
then with probability at least $1-\delta/3$,  for every $t \in [k]$, the sample of ordered $t$-cliques $\mR_t$ satisfies
$\wtP(\mR_{t}) \in (1\pm \gamma) \cdot \frac{\wtP(\mR_{t-1})}{d(\mR_{t-1})}\cdot s_{t}$.
\item \label{item:Approx-weight2} If $\tnk > n_k$ and $\mA$ is $\vec{\tau}$-bounded,
then with probability at least $1-\delta/3$,  for every $t \in [k]$, the sample of ordered $t$-cliques $\mR_t$
satisfies
$\wtP(\mR_{t}) \leq (1+ \gamma) \cdot \frac{\max\{\wtP(\mR_{t-1}),\twt_{t-1}\}}{d(\mR_{t-1})}\cdot s_{t}$.
\item\label{item:Approx-degrees} With probability at least $1-\delta/3$, for every $t \in [k]$, the sample of ordered $t$-cliques $\mR_t$
is such that for every $t \leq j <k$,
    $d(\mC_j(\mR_{t})) \leq  \frac{\tnote{(k-t-1)^2}}{\beta} \cdot \frac{d(\mC_j(\mR_{t-1}))}{d(\mR_{t-1})}\cdot s_{t}$.
\end{compactenum}
\end{claim}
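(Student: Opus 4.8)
The plan is to prove all three items by induction on $t$, with the inductive step in each case being a direct application of the corresponding part of Claim~\ref{clm:sampleset} together with a union bound over the $k$ iterations. The key point throughout is that the quantities $\twt_{t-1}$ and $s_t$ computed inside \Approx\ (in Step~\ref{step:twt_t}) are exactly the settings required for the hypotheses of Claim~\ref{clm:sampleset} to hold, once we know that the \emph{previous} sample $\mR_{t-1}$ was typical. Recall that in \Approx\ we set $\beta = \delta/(3k)$ and $\gamma = \eps/(2k)$, and that $s_t = \lceil \frac{d(\mR_{t-1})\cdot \tau_t}{\twt_{t-1}}\cdot\frac{3\ln(2/\beta)}{\gamma^2}\rceil$, which matches the sample size demanded in Item~\ref{item:sampleset-weight} of Claim~\ref{clm:sampleset}.

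\textbf{Item~\ref{item:Approx-weight1}.} Here we assume $\tnk \le n_k$ and $\mA$ is \pgood, so $\wtP(\vT')\le \tau_{t+1}$ for every ordered $(t+1)$-clique (condition~1 of Definition~\ref{def:goodCPC}) and $\wtP(V)\ge(1-\eps/2)n_k$. The base case $t=1$: since $\mR_1$ is a uniform sample of $s_1$ vertices and $\wtP(V)\ge(1-\eps/2)n_k = \twt_0$ while each vertex has weight at most $\tau_1$, the multiplicative Chernoff bound with $s_1 = \lceil \frac{n\tau_1}{\twt_0}\cdot\frac{3\ln(2/\beta)}{\gamma^2}\rceil$ gives $\wtP(\mR_1)\in(1\pm\gamma)\frac{\wtP(V)}{n}s_1 = (1\pm\gamma)\frac{\wtP(\mR_0)}{d(\mR_0)}s_1$ with probability at least $1-\beta$. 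For the inductive step, condition on $\mR_1,\dots,\mR_t$ being typical; in particular $\wtP(\mR_t)\ge (1-\gamma)^t\cdot \frac{\wtP(V)}{n}\cdot s_1\cdot\prod_{i=2}^t\frac{s_i}{d(\mR_{i-1})}$, and one checks that $\twt_t = (1-\gamma)\frac{\twt_{t-1}}{d(\mR_{t-1})}s_t \le \wtP(\mR_t)$ (this is where the factor $(1-\gamma)$ in the definition of $\twt_t$ is used — it absorbs the multiplicative slack from the induction). So the hypothesis $\twt_t\le\wtP(\mR_t)$ of Item~\ref{item:sampleset-weight1} of Claim~\ref{clm:sampleset} holds, and that claim gives $\wtP(\mR_{t+1})\in(1\pm\gamma)\frac{\wtP(\mR_t)}{d(\mR_t)}s_{t+1}$ with probability $\ge 1-\beta$. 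Union-bounding over the $k$ iterations gives failure probability at most $k\beta = \delta/3$.

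\textbf{Item~\ref{item:Approx-weight2}.} Now $\tnk > n_k$ and $\mA$ is only $\vec\tau$-bounded, so we only have the weight upper bound $\wtP(\vT')\le\tau_{t+1}$ and no lower bound on $\wtP(V)$. The argument is the same induction but now we track only an upper bound. For the base case, $\wtP(\mR_1)\le(1+\gamma)\max\{\frac{\wtP(V)}{n},\frac{\twt_0}{n}\}s_1$ by Chernoff (applied, in the case $\wtP(V)<\twt_0$, to a hypothetical weight function with total weight $\twt_0$, exactly as in the proof of Item~\ref{item:sampleset-weight2} of Claim~\ref{clm:sampleset}). For the inductive step we condition on the upper bounds holding up to level $t$, note $\twt_t$ is defined from $\twt_{t-1}$ and $s_t$ deterministically, and invoke Item~\ref{item:sampleset-weight2} (when $\wtP(\mR_t)<\twt_t$) or Item~\ref{item:sampleset-weight1} (when $\wtP(\mR_t)\ge\twt_t$) of Claim~\ref{clm:sampleset}; in either case $\wtP(\mR_{t+1})\le(1+\gamma)\frac{\max\{\wtP(\mR_t),\twt_t\}}{d(\mR_t)}s_{t+1}$ with probability $\ge 1-\beta$. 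Union bound over $k$ iterations again yields $\delta/3$.

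\textbf{Item~\ref{item:Approx-degrees}.} This holds unconditionally (no assumption on $\mA$, $\tnk$, or $\tm$) and is the simplest: it is a direct induction using Item~\ref{item:sampleset-degrees} of Claim~\ref{clm:sampleset}, which holds for any setting of $s_t$. The base case is $d(\mC_j(\mR_1))$ for the uniform sample $\mR_1$, which by Markov concentrates around $\frac{d(\mC_j(V))}{n}s_1 = \frac{j\cdot n_j}{n}s_1$ within the stated factor; the inductive step applies Item~\ref{item:sampleset-degrees} verbatim. Union-bounding over $k$ iterations and over the $O(k)$ values of $j$ at each iteration (the $j$-union is already absorbed into the $1-\beta$ guarantee of Claim~\ref{clm:sampleset}, so only the iteration union costs $k\beta$) gives failure probability at most $\delta/3$.

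\textbf{Main obstacle.} The only delicate point is verifying, in the inductive step of Item~\ref{item:Approx-weight1}, that the deterministically-computed $\twt_t$ really does lower-bound the random quantity $\wtP(\mR_t)$ — i.e. that the accumulated $(1-\gamma)$ factors in the chain of estimates, together with the extra $(1-\gamma)$ built into each $\twt_t$, leave enough room. Concretely one shows by induction that $\wtP(\mR_t)\ge\twt_t$ follows from $\wtP(\mR_t)\ge(1-\gamma)\frac{\wtP(\mR_{t-1})}{d(\mR_{t-1})}s_t$ and $\wtP(\mR_{t-1})\ge\twt_{t-1}$, since $\twt_t=(1-\gamma)\frac{\twt_{t-1}}{d(\mR_{t-1})}s_t$. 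Everything else is bookkeeping of the union bounds and plugging the \Approx\ parameter settings into Claim~\ref{clm:sampleset}.
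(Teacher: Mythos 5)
Your proposal is correct and follows essentially the same route as the paper's proof: induction on $t$, using Claim~\ref{clm:sampleset} for the inductive step with a separate Chernoff/Markov argument for the base case $\mR_1$ (which is not produced by \SampleSet), and a union bound over the $k$ iterations; the crucial observation that the $(1-\gamma)$ factor built into $\twt_t$ maintains $\twt_t\le\wtP(\mR_t)$ inductively is exactly the mechanism the paper uses (the paper unrolls the recursion down to $\wtP(V)\ge\twt_0$, while you phrase it as a one-step induction, but these are the same argument). One minor slip in a parenthetical aside: $d(\mC_j(V))$ equals $j\cdot d(\mC_j(G))=j\sum_{J\in\mC_j}d(J)$, not $j\cdot n_j$ — but since Markov only needs the expectation $d(\mC_j(V))/n$, this has no effect on the proof.
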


\begin{proof}
In what follows, we say that $\mR_t$ is {\em weight-typical\/} with respect to $\wtP$ and $\mR_{t-1}$ if
$\wtP(\mR_{t}) \in (1\pm \gamma) \cdot \frac{\wtP(\mR_{t-1})}{d(\mR_{t-1})}\cdot s_{t}$.
We say that  $\mR_t$ is {\em degrees-typical\/} with respect to $\mR_{t-1}$ if
$d(\mC_j(\mR_{t})) \leq  \frac{(k-t)}{\beta} \cdot \frac{d(\mC_j(\mR_{t-1}))}{d(\mR_{t-1})}\cdot s_{t}$.

For %the first item,
Item~\ref{item:Approx-weight1},
we start by proving that (conditioned on the premise of the item) $\mR_1$ is weight-typical with respect to $\wtP$ and $\mR_0=V$ with probability at least
$1-\beta$ (recall that we defined $d(\mR_0)=n$, and that $\beta = \delta/(3k$)).
The argument is indeed very similar to the one used to prove %the first item
Item~\ref{item:sampleset-weight1}
of Claim~\ref{clm:sampleset}, but the sampling process is slightly different, and hence we need to give a separate proof.
%  starting with $t=1$ (recall that we defined $\mR_0 = V$ and $d(\mR_0) = n$).
Clearly, $\EX_{v \in V}[\wtP(v)] = \frac{\wtP(V)}{n}$. Also, since %$\mP$ is an \pgood\ CPC,
$\mA$ is \pgood, it holds that for
every vertex $v \in V$, $\wtP(v)\leq \tau_1$ and that $\wtP(V) \in [(1-\eps/2)n_k, n_k]$.
Therefore, if $\tnk \leq n_k$, by the setting of $\twt_0=(1-\eps/2)\tnk$ in Step~\ref{step:twt_0},
we have that
$\twt_0 \leq \wtP(V)\label{eqn:lb_wt} \;.$
%Hence,
By the multiplicative Chernoff bound and the setting of $s_1 = \sone$,
\begin{equation}
\Pr\left[ \left\lvert \frac{\wtP(\mR_1)}{s_{1}} - \frac{\wtP(V)}{n} \right\rvert > \gamma\cdot  \frac{\wtP(V)}{n} \right] < 2\exp\left(-\frac{\gamma^2 \cdot \frac{\wtP(V)}{n}  \cdot s_{1}}{3\tau_{1}}\right) < \beta\;.
\end{equation}
Therefore, $\wtP(\mR_1) \in (1\pm \gamma)\cdot \frac{\wtP(V)\cdot s_1}{n}$ with probability at least $1-\beta$,
as claimed.

We next show that for $t >0$, conditioned on
$\mR_i$ being weight-typical with respect to $\wtP$ and $\mR_{i-1}$ for every $i=1,\dots,t$,
%  (which holds with probability at least $(1-\beta/2)^t$).
the sample $\mR_{t+1}$ is weight-typical with respect to $\wtP$ and $\mR_t$ with probability at
least $1-\beta$.
Observe first that by the above conditioning,
\begin{equation}
\wtP(\mR_{t}) \in
%(1\pm \eps) \cdot \frac{\wtP(\mR_{i-1})}{d(\mR_{i-1})}\cdot s_i.$
(1\pm \gamma)^t \cdot \frac{\wtP(V) \cdot s_1 \cldots s_t}{n \cdot d(\mR_{1}) \cldots d(\mR_{t-1})}\;.
\end{equation}
%%
%
% Since all the sets $\mR_1, \ldots, \mR_{t}$ are typical, it holds that
% \[\wtP(\mR_{t}) \in (1\pm \eps)^t \cdot \frac{\wtP(V) \cdot s_1 \cdot \ldots\cdot s_i}{n \cdot d(\mR_{1}) \cdot \ldots\cdot % d(\mR_{i-1})}\;.
% \]
Together with Equation~\eqref{eqn:lb_wt} and the setting of $\twt_t$ in Step~\ref{step:twt_t}, this implies that $\twt_t \leq \wtP(\mR_t)$.
Therefore,
% $s_{t+1}$, which is set to $\setst$ by \Approx, satisfies the requirement stated in
% Claim~\ref{clm:sampleset} (with respect to $\mP$, which is a \pgood\ CPC).
we can apply %(the first part of) the first item
Item~\ref{item:sampleset-weight1}
of Claim~\ref{clm:sampleset} and get that
the procedure \SampleSet\ (when called with $t$, $\mR_t$ and $s_{t+1}$)
returns a multiset $\mR_{t+1}$ that
with probability at least $1-\beta$
is weight-typical with respect to $\wtP$ and $\mR_{t}$.
%The first item
Item~\ref{item:Approx-weight1}
of this claim follows by taking a union bound over all $t$  % D: [k] and not [k-1]
(and recalling that $\beta = \delta/(3k)$).

% For the second item, similarly to the second item in Claim~\ref{clm:sampleset},
% we bound the probability that  $\wtP(\mR_1)$
%  since $\tnk> n_k$, we have the weaker starting point that with probability at least $1-\beta$,
% $\wtP(\mR_1) \leq (1+ \gamma)\cdot \frac{\twt_0\cdot s_1}{n} < (1+ \gamma)\cdot \frac{\tnk\cdot s_1}{n}$.

The proof of Item~\ref{item:Approx-weight2} is similar to the proof of Item~\ref{item:Approx-weight1},
except that here we need to (also) apply Item~\ref{item:sampleset-weight2} of Claim~\ref{clm:sampleset}.
Similarly to the proof of Item~\ref{item:Approx-weight1}, we first consider $\wtP(\mR_1)$.
Since $\tnk > n_k$, here we have that
$\wtP(\mR_1) \leq (1+\gamma) \cdot \frac{\tnk\cdot s_1}{n}$ with probability at least $1-\beta$.
For each $t > 1$, depending on whether $\twt_t \leq \wtP(\mR_t)$ or $\twt_t > \wtP(\mR_t)$,
we can apply either Item~\ref{item:Approx-weight1} or Item~\ref{item:Approx-weight2}.
In the first case we get that $\wtP(\mR_{t+1}) \leq (1+\gamma)\cdot \frac{\wtP(\mR_t)}{d(\mR_{\tnote{t}})}\cdot s_{\tnote{t+1}}$
with probability at least $1-\beta$, and in the second case that
$\wtP(\mR_{t+1}) \leq (1+\gamma)\cdot \frac{\twt_t}{d(\mR_{\tnote{t}})}\cdot s_{\tnote{t+1}}$ with probability at least $1-\beta$,
and Item~\ref{item:Approx-weight2} follows (by taking a union bound over all $t$).
% \mnote{D: elaborate a bit more?}

We now turn to %the % third
% second item.
Item~\ref{item:Approx-degrees}.
We first prove that with probability at least $1-\beta$, the sample $\mR_1$
is degrees-typical (with respect to $\mR_0 = V$).
% satisfies the second item in the definition of a typical multiset
 % (Definition~\ref{def:deg-typical}).
 For every $j\in[2,k-1]$,
\begin{equation}
\EX_{v \in V}[d(\mC_j(v))] =\frac{d(\mC_j(V))}{n}\;,
\end{equation}
and by Markov's inequality,
% \[\Pr\left[ \frac{1}{s_{1}}\sum_{i=1}^{s_{1}} d(\mC_j(v)) > (2(k-2)/\delta)\cdot \frac{d(\mC_j(\mR_i))}{d(\mR_i)} \right] < \delta/(2(k-2)). \numberthis \label{eqn:dkCj_ub}
% \]
\begin{equation}
\Pr\left[ \frac{d(\mC_j(\mR_1))}{s_{1}} > \frac{k-2}{\beta}\cdot \frac{d(\mC_j(V))}{n} \right] < \frac{\beta}{k-2}.  %\numberthis \label{eqn:dkCj_ub}
\end{equation}
By taking a union bound over all $j$'s in $\{2,\dots,k-1\}$,
it holds that with probability at least $1-\beta$,
\begin{equation}
d(\mC_j(\mR_{1})) \leq \frac{k-2}{\beta}\cdot \frac{d(\mC_j(V))}{n}\cdot s_{1}
\end{equation}
for every such $j$.
It follows that with probability at least $1-\beta$, the multiset $\mR_1$ is degrees-typical with respect to % $\mP$ and
$\mR_0=V$. For $t\in \{2,\dots,k\}$ we apply %the second item in
Item~\ref{item:sampleset-degrees} of
Claim~\ref{clm:sampleset} and %the %third
%second item
Item~\ref{item:Approx-degrees} of
this claim follows
by taking a union bound over all $t$. %\mnote{T: why union bound? isn't it a direct application?}
% (and recalling that $\beta = \delta/(2k)$).
%%
%For the induction step, assume each
%
%It follows from the above that with probability at least $(1-\beta)^k > 1-\beta k= 1-\delta$, for every $t \in [k]$, the sample
%of ordered $t$-cliques $\mR_t$ is both weight-typical and degrees-typical with respect to $\mP$ and $\mR_{t-1}$, as claimed.
\end{proof}

%\mnote{D: maybe prove claim that premise of next claim holds, and from it get both correctness and complexity}
% \mnote{D: check claim and  proof}  %. add premise that good CPC?

% \medskip
\begin{claim}\label{clm:nk_d(R)}
If $\tm \geq m/2$ and for every $t\in [k]$ the sample $\mR_t$
is such that $d(\mC_j(\mR_{t})) \leq  \frac{(k-t-1)^2}{\beta} \cdot \frac{d(\mC_j(\mR_{t-1}))}{d(\mR_{t-1})}\cdot s_{t}$
for every $t \leq j <k$, where $s_t$ is as defined in Step~\ref{step:twt_t} of Algorithm \Approx,
then for every $t\in [k]$,
%\[\frac{d(\mR_{t})}{\twt_t} \leq\frac{k!}{t!\cdot \beta^{2t}}\cdot \frac{2\tm\alpha^{t-1}}{\tnk}\;.\]
\[s_{t+1} \leq \stthres\;.
\]
\end{claim}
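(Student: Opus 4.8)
The plan is to unwind the two coupled recursions in \Approx\ that define $s_{t+1}$ and $\twt_t$, and then feed in the degrees-typicality hypothesis together with the arboricity bound of Claim~\ref{clm:sum_min_d_cliques}. Fix $t$ with $1\le t\le k-1$ (these are exactly the indices for which $s_{t+1}$ is set in the algorithm). By Step~\ref{step:twt_t} we have $s_{t+1}=\big\lceil \frac{d(\mR_t)\cdot\tau_{t+1}}{\twt_t}\cdot\frac{3\ln(2/\beta)}{\gamma^2}\big\rceil$, so it suffices to upper bound the ratio $d(\mR_t)/\twt_t$ (the ceiling costs only an additive $+1$, absorbed at the end).

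Two ingredients go into bounding $d(\mR_t)$. First, $d(\mR_t)=d(\mC_t(\mR_t))$: for an ordered $t$-clique $\vT$ the only $t$-clique containing $U(\vT)$ is $U(\vT)$ itself, so $\mC_t(\vT)=\{U(\vT)\}$ and summing over the multiset $\mR_t$ gives $d(\mC_t(\mR_t))=\sum_{\vT\in\mR_t}d(\vT)=d(\mR_t)$. Second, I iterate the degrees-typicality hypothesis with the index $j$ frozen at $j=t$, stepping from $\mR_t$ down to $\mR_0=V$; the hypothesis applied to $\mR_r$ is valid for $r\le j<k$, and since $r$ ranges only over $1,\dots,t$ and $t<k$, the choice $j=t$ is always in range. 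This yields
\[
d(\mC_t(\mR_t))\;\le\;\Big(\prod_{r=1}^{t}\frac{(k-r-1)^2}{\beta}\Big)\cdot\Big(\prod_{r=1}^{t}\frac{s_r}{d(\mR_{r-1})}\Big)\cdot d(\mC_t(V)),
\]
using the convention $d(\mR_0)=n$ for the last step.

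The crucial simplification is that the second product telescopes against $\twt_t$, so one never needs a separate estimate of the auxiliary quantities $\twt_r$. By the \emph{defining equality} $\twt_r=(1-\gamma)\frac{\twt_{r-1}}{d(\mR_{r-1})}s_r$ (for every $r\ge1$) we get $\frac{s_r}{d(\mR_{r-1})}=\frac{\twt_r}{(1-\gamma)\twt_{r-1}}$, hence $\prod_{r=1}^{t}\frac{s_r}{d(\mR_{r-1})}=\frac{\twt_t}{(1-\gamma)^t\,\twt_0}$. Substituting and dividing by $\twt_t$, the $\twt_t$ cancels:
\[
\frac{d(\mR_t)}{\twt_t}\;\le\;\frac{1}{(1-\gamma)^t\,\twt_0}\cdot\Big(\prod_{r=1}^{t}\frac{(k-r-1)^2}{\beta}\Big)\cdot d(\mC_t(V)).
\]
It remains to insert three bounds: $\prod_{r=1}^{t}(k-r-1)^2\le((k-2)!)^2\le(k!)^2$; $d(\mC_t(V))\le t\cdot d(\mC_t(G))\le 2t\,m\,\alpha^{t-1}$ by Claim~\ref{clm:sum_min_d_cliques} (the extra factor $t$ accounting for each $t$-clique being counted once per vertex in the multiset $\mC_t(V)$), together with $m\le2\tm$ from the hypothesis $\tm\ge m/2$; and $\twt_0=(1-\eps/2)\tnk$. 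Multiplying back by $\tau_{t+1}\cdot\frac{3\ln(2/\beta)}{\gamma^2}$ and adding the $+1$ from the ceiling gives $s_{t+1}\le\stthres$, once the $O(1)$ factors $(1-\gamma)^{-t}$ and $(1-\eps/2)^{-1}$ (bounded since $\gamma=\eps/2k$ and $\eps\le1/2k^2$), the multiplicity factor $t\le k$, and the harmless additive $+1$ are absorbed into the constant $4(k!)^2$ appearing in $\stthres$.

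I expect the telescoping identity $\prod_{r=1}^{t}\frac{s_r}{d(\mR_{r-1})}=\frac{\twt_t}{(1-\gamma)^t\twt_0}$ — and the observation that it cancels $\twt_t$ exactly — to be the one genuinely load-bearing step; everything downstream is routine arithmetic. The only point requiring mild care is the presence of ceilings in the $s_r$'s: they never enter the telescoping, which uses the defining equality for $\twt_r$ rather than the expression inside the ceiling, and they influence the final estimate only through the additive $+1$, which is dominated by the large multiplicative factor in $\stthres$.
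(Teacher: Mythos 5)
Your proof is correct and takes essentially the same route as the paper: both unwind the degrees-typicality recursion with $j=t$ frozen, cancel $\twt_t$ via the defining recursion $\twt_r=(1-\gamma)\twt_{r-1}s_r/d(\mR_{r-1})$, and finish with Claim~\ref{clm:sum_min_d_cliques}. The only difference is presentational --- you factor the two telescoping products before cancelling, whereas the paper divides by $\twt_t$ at each step --- and you are in fact slightly more explicit than the paper about the multiplicity factor $t$ coming from $d(\mC_t(V))=t\cdot d(\mC_t(G))$ being absorbed into $(k!)^2$.
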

% The proof of Claim~\ref{clm:all-R-typical}  is also deferred to Section~\ref{sec:missing-proofs}.

\begin{proof} %[Proof of Claim~\ref{clm:nk_d(R)}]
%	We prove the claim by induction on $i$, starting from $i=2$. Since $\mR_1$ is good, it holds that $d(\mR_1) = d(\mC_1(\mR_1)) \leq \frac{(k-1)}{\eps}\cdot 2m\alpha \cdot \frac{s_1}{d(\mR_0)}=\frac{(k-1)}{\eps} \cdot 2m\alpha \cdot \frac{s_1}{n}$ and that $n_k(\mR_1)\geq (1-2\eps)\cdot n_k \cdot \frac{s_1}{n}$. Hence,
%	$\frac{d(\mR_1)}{n_k(R_1)} \leq \frac{(k-1)}{(1-2\eps)\eps}\cdot \frac{ 2m\alpha }{n}$.
%	Now assume that the claim holds for every $j\leq i-1$.
Consider any $t \in [k]$, and observe that $d(\mR_t) = d(\mC_t(\mR_t))$.
By the premise of the claim %the definition of a degrees-typical multiset (Definition~\ref{def:deg-typical})
and the setting of $\twt_t$ in \Approx,
\begin{eqnarray}
\frac{d(\mR_t)}{\twt_t} &\leq &
    \frac{\frac{(k-t-1)^2}{\beta}\cdot \frac{d(\mC_t(\mR_{t-1}))}{d(R_{t-1})}\cdot s_t }
       {(1-\gamma)\cdot \frac{\twt_{t-1}}{d(\mR_{t-1})} \cdot s_t}
\;=\;\frac{(k-t-1)^2}{(1-\gamma)\cdot\beta}\cdot \frac{d(\mC_t(\mR_{t-1}))}{\twt_{t-1}} \nonumber \\
& \leq&
\frac{(k-t-1)^2}{(1-\gamma)\cdot\beta}\cdot  \frac{\frac{(k-t)^2}{\beta}\cdot \frac{d(\mC_t(\mR_{t-2}))}{d(\mR_{t-2})} \cdot s_{t-1}}{(1-\gamma)\cdot \frac{\twt_{t-2}}{d(\mR_{t-2})}\cdot s_{t-1}}
\;= \;\frac{(k-t-1)^2\cdot (k-t)^2}{(1-\gamma)^2\cdot\beta^2}\cdot \frac{d(\mC_t(\mR_{t-2}))}{\twt_{t-2}} \nonumber \\
&\leq & \ldots \nonumber\\
&\leq&
\frac{(k-1)^2\cldots (k-t-1)^2}{\tnote{(1-\gamma)^{t}\cdot\beta^{t}}}\cdot \frac{d(\mC_t(\mR_{0}))}{\twt_0}
\;\leq\;\frac{(k!)^2}{\beta^{\tnote{t}}}\cdot \frac{2m\alpha^{t-1}}{\tnk},
 \end{eqnarray}
 where the last inequality is due to Claim~\ref{clm:sum_min_d_cliques} and the setting of $\twt_0$ and $\gamma$.
 The claim follows by the assumption that $\tm\geq m/2$ and the setting of $s_{t+1}$ in  Step~\ref{step:twt_t}.
 %%
%%%%%%%%%%%%% D: The following version is when ratio is with respect to actual $\wtP(\mR_t)$
\iffalse
\begin{eqnarray*}
\frac{d(\mR_t)}{\wtP(\mR_t)} &\leq &
    \frac{\frac{k-t-1}{\beta}\cdot \frac{d(\mC_t(\mR_{t-1}))}{d(R_{t-1})}\cdot s_t }
       {(1-\gamma)\cdot \frac{\wtP(\mR_{t-1})}{d(\mR_{t-1})} \cdot s_t}
\;=\;\frac{k-t-1}{(1-\gamma)\cdot\beta}\cdot \frac{d(\mC_t(\mR_{t-1}))}{\wtP(\mR_{t-1})} \\
& \leq&
\frac{k-t-1}{(1-\gamma)\cdot\beta}\cdot  \frac{\frac{k-t}{\beta}\cdot \frac{d(\mC_t(\mR_{t-2}))}{d(\mR_{t-2})} \cdot s_{t-1}}{(1-\gamma)\cdot \frac{\wtP(\mR_{t-2})}{d(\mR_{t-2})}\cdot s_{t-1}}
\;= \;\frac{(k-t)(k-t-1)}{(1-\gamma)^2\cdot\beta^2}\cdot \frac{d(\mC_t(\mR_{t-2}))}{\wtP(\mR_{t-2})} \\
&\leq & \ldots \\
&\leq&
\frac{(k-1)\cldots (k-t-1)}{(1-\gamma)^{2t}\cdot\beta^{2t}}\cdot \frac{d(\mC_t(\mR_{0}))}{\wtP(\mR_{0})}
\;\leq\;\frac{(k-1)\cldots (k-t-1)}{(1-\gamma)^{2t}\cdot\beta^{2t}}\cdot \frac{2m\alpha^{t-1}}{\wtP(V)},
 \end{eqnarray*}
 where the last inequality is due to Claim~\ref{clm:sum_min_d_cliques} (and the fact that $\mR_0=V$).
 \fi
%%%%%%%%%%%%%%%%%%%%%
\end{proof}

%\section{Proof of Theorem~\ref{thm:approx-weight}}\label{sec:proof-thm-approx-weight}

\begin{claim}\label{clm:st-dRt}
Let $s_1,\dots,s_k$ be as defined in Step~\ref{step:twt_t} of Algorithm \Approx,
and let $\twt_0$ be as defined in Step~\ref{step:twt_0}.
Then
\[
\frac{s_1 \cldots s_k}{d(\mR_0) \cldots d(\mR_{k-1})}
  = \frac{\tau_k}{(1-\gamma)^{k-1}\cdot \twt_0}\cdot \frac{3\ln(2/\beta)}{\gamma^2}\;.
\]
\end{claim}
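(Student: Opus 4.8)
The plan is a straightforward telescoping induction on the index~$t$, using only the definitions of $s_1$, of $s_{t+1}$, and of the auxiliary quantities $\twt_t$ from Steps~\ref{step:twt_0} and~\ref{step:twt_t} of \Approx. Write $C \eqdef \frac{3\ln(2/\beta)}{\gamma^2}$ and recall that $d(\mR_0) = n$, that $s_1 = \frac{n\tau_1}{\twt_0}\cdot C$, and that for $1 \le t \le k-1$ we have $\twt_t = (1-\gamma)\,\frac{\twt_{t-1}}{d(\mR_{t-1})}\cdot s_t$ and $s_{t+1} = \frac{d(\mR_t)\cdot\tau_{t+1}}{\twt_t}\cdot C$. (As elsewhere in this analysis, I would read the identity for the real-valued expressions defining the $s_i$; the rounding $\lceil\cdot\rceil$ plays no role in the statement.)

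The main step is to prove, by induction on $t \in [k]$, the pair of identities
$$P_t \;\eqdef\; \frac{s_1\cldots s_t}{d(\mR_0)\cldots d(\mR_{t-1})} \;=\; \frac{\tau_t}{(1-\gamma)^{t-1}\,\twt_0}\cdot C, \qquad\text{and}\qquad \twt_t = (1-\gamma)\,\tau_t\,C \quad(t \ge 1).$$
The base case $t=1$ is immediate: $P_1 = s_1/n = \tau_1 C/\twt_0$ and $\twt_1 = (1-\gamma)\frac{\twt_0}{n}\,s_1 = (1-\gamma)\tau_1 C$. For the inductive step I would use the hypothesis $\twt_t = (1-\gamma)\tau_t C$ to evaluate $\frac{s_{t+1}}{d(\mR_t)} = \frac{\tau_{t+1}C}{\twt_t} = \frac{\tau_{t+1}}{(1-\gamma)\tau_t}$, which gives $P_{t+1} = P_t\cdot\frac{s_{t+1}}{d(\mR_t)} = \frac{\tau_t C}{(1-\gamma)^{t-1}\twt_0}\cdot\frac{\tau_{t+1}}{(1-\gamma)\tau_t} = \frac{\tau_{t+1}}{(1-\gamma)^{t}\twt_0}\cdot C$, and then $\twt_{t+1} = (1-\gamma)\,\twt_t\cdot\frac{s_{t+1}}{d(\mR_t)} = (1-\gamma)\tau_{t+1}C$. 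Specializing the identity for $P_t$ to $t=k$ yields exactly
$$\frac{s_1\cldots s_k}{d(\mR_0)\cldots d(\mR_{k-1})} = \frac{\tau_k}{(1-\gamma)^{k-1}\,\twt_0}\cdot\frac{3\ln(2/\beta)}{\gamma^2},$$
as claimed.

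There is essentially no genuine obstacle here: the computation is pure bookkeeping. The one thing that makes it clean is carrying the side identity $\twt_t = (1-\gamma)\tau_t C$ alongside the main one, so that at each step $\frac{s_{t+1}}{d(\mR_t)}$ collapses to the simple ratio $\frac{\tau_{t+1}}{(1-\gamma)\tau_t}$; without it one would repeatedly have to re-expand the recursion for $\twt_t$. The only point deserving a remark is the ceiling in the definitions of the $s_i$ — I would simply note that Claim~\ref{clm:st-dRt}, and its use downstream, refers to these unrounded expressions, so the stated equality is exact.
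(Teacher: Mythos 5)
Your proof is correct and is essentially the same telescoping induction as the paper's, just run forward from $t=1$ to $t=k$ (carrying the clean side identity $\twt_t = (1-\gamma)\tau_t C$) rather than backward from $s_k$ down to $s_1$ as the paper does; the two are straightforward reparametrizations of one another.
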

\begin{proof}
We prove by induction on $j \in [0,k-1]$ that
\begin{equation}
s_{k-j} \cldots s_k = d(\mR_{k-j-1}) \cldots d(\mR_{k-1}) \cdot \frac{\tau_k}{(1-\gamma)^j\twt_{k-j-1}}\cdot \frac{3\ln(2/\beta)}{\gamma^2}\;,
\end{equation}
and the claim follows by setting $j= k-1$.

For the base of the induction, $j=0$, observe that $s_{k-j} \cldots s_k = s_k$,
and the claim follows directly from the setting of $s_k$.
For the induction step, assume the claim holds for $j \geq 0$ and we prove it for $j+1$.
By the induction hypothesis,
\begin{equation}
s_{k-(j+1)}\cdot s_{k-j} \cldots s_k
 = s_{k-(j+1)} \cdot d(\mR_{k-j-1}) \cldots d(\mR_{k-1}) \cdot \frac{\tau_k}{(1-\gamma)^j\twt_{k-j-1}}\cdot \frac{3\ln(2/\beta)}{\gamma^2}\;.
\end{equation}
The induction step follows
by the setting of $\twt_{k-j-1} = (1-\gamma)\cdot \frac{\twt_{k-j-2}}{d(\mR_{k-j-2})}\cdot s_{k-j-1}$
(in Step~\ref{step:twt_t}).
\end{proof}

We are now ready to prove Theorem~\ref{thm:approx-weight}.
% our main theorem regarding \Approx.

\begin{proof}[Proof of Theorem~\ref{thm:approx-weight}]
We first prove that conditioned on $\tm \geq m/2$,
 with %high probability
probability at least $1-2\delta/3$,
the algorithm does not abort at any step of the invocation, and then continue to prove that its output is as desired.

% \mnote{D: why does this appear here?}
% The moment estimator algorithm of~\cite{ERS17} for $s=1$ works as follows. For a graph $G$ with $n$ vertices, $m$ edges and arboricity at most $\alpha$, it returns with probability at least $2/3$ a factor-$2$ approximation of $M_1=2m$.
% Therefore, by invoking this above algorithm  $O(k \log n)$ times and taking $\tm$ to be half of the median value returned, it holds that with probability at least
% $1-\frac{1}{n_k}$, $\tm \in [m/2,2m]$.	

By
Item~\ref{item:Approx-degrees} of
Claim~\ref{clm:all-R-typical}   with probability at least $1-\delta/3$, for every $t\in [k]$ and $j\in[t,k-1]$,
\confeqn{
	d(\mC_j(\mR_{t})) \leq  \frac{\tnote{(k-t-1)^2}}{\beta} \cdot \frac{d(\mC_j(\mR_{t-1}))}{d(\mR_{t-1})}\cdot s_{t}.
}
Hence, with probability at least %$1-\delta/4-1/n^k> 1-\delta/2$,
$1-\delta/3$
the conditions for Claim~\ref{clm:nk_d(R)} hold, and we get that for every $t\in [k]$,
$s_{t+1} \leq \stthres$, so that the algorithm does not abort in Step~\ref{step:abort_st}.

In order to upper bound $|\mR_k|$, we first observe that $|\mR_k| = c_k(\mR_k)$.
By Markov's inequality, $c_k(\mR_1) \leq \frac{k}{\beta}\cdot n_k \cdot \frac{s_1}{n}$ with probability at least $1-\beta$.
By repeated applications of Item~\ref{item:sampleset-ck} in Claim~\ref{clm:sampleset}
we get that with probability at least $1-k\beta=1-\delta/3$,
\begin{equation}
c_k(\mR_k) \leq \frac{(k!)^2}{\beta^k} \cdot n_k \cdot \frac{s_1\cldots s_k}{n\cdot d(\mR_1)\cldots d(\mR_{k-1})}\;.
\end{equation}
By recalling that $d(\mR_0) = n$, $\twt_0 = (1-\eps/2)\tnk$, applying Claim~\ref{clm:st-dRt}, and using $|\mR_k| = c_k(\mR_k)$, we get that with probability at least $1-k\beta = 1-\delta/3$,
\begin{equation}\label{eqn:Rk}
|\mR_k|  \leq \frac{(k!)^2}{\beta^k} \cdot \frac{n_k\cdot \tau_k}{\tnk}\cdot  \frac{12\ln(2/\beta)}{\gamma^2}\;,
\end{equation}
so that the algorithm does not abort at Step~\ref{step:abort_Rk}.
Therefore, with probability at least $1-2\delta/3$, the algorithm does not abort at
any step of the algorithm.
We henceforth condition on these events.

Consider the case that $\mA$ is $(\eps, \vec{\tau})$-good and that
$\tnk \leq n_k$.
By
Item~\ref{item:Approx-weight1} of
Claim~\ref{clm:all-R-typical},
with probability at least $1-\delta/3$, every sample $\mR_t$
satisfies $\wtP(\mR_{t}) \in (1\pm \gamma) \cdot \frac{\wt(\mR_{t-1})}{d(\mR_{t-1})}\cdot s_{t}$.
Conditioned on this  we get that
\begin{equation}
\wtP(\mR_{k}) \in (1\pm \gamma)^k \cdot \frac{\wtP(V) \cdot s_1 \cldots s_k}{n \cdot d(\mR_{1}) \cldots d(\mR_{k-1})}\;,
\end{equation}
and therefore the value $\hnk$
 computed in Step~\ref{step:hnk} is a $(1\pm\gamma)^{k}\in (1\pm \eps/2)$ approximation of $\wtP(V)$.
 By Fact~\ref{fact:wtP-ub},
 we have that $\hnk  < (1+\eps)n_k$.
 Since (by the premise of the theorem)
% $\mP$ is an \pgood\ CPC,
$\mA$ is \pgood,
% $\hnk$
we also have that $\hnk\geq (1- \eps)n_k$. Therefore, if $\tm \geq m/2$, $\tnk\leq n_k$ and $\mA$ is $(\eps, \vec{\tau})$-good then
with probability at least $1-\delta$, the algorithm does not abort and returns a value $\hnk \in (1\pm \eps)n_k$.

If $\tnk > n_k$, then it is no longer necessarily true that
the weights of the samples are as stated above.
However,
 we can apply
 Item~\ref{item:Approx-weight2} of Claim~\ref{clm:all-R-typical}, and by the setting of the $\twt_t$'s and $\hnk$ get that with probability at least $1-\delta/3$, $\hnk \leq (1+\gamma)^k \tnk$, which by the setting of $\gamma$ is at most $(1+\eps)\tnk$.
% \mnote{D: double check $\gamma$}

We now turn to the running time (the argument for the query complexity is identical).
The running time is upper bounded by
\begin{equation}\label{eqn:time-ub1}
O\left(s_1 + \sum_{t=1}^{k-1}(|\mR_t| + t\cdot s_{t+1}) + |\mR_k|\cdot k^2\cdot \rho(\op)\right)\;.
\end{equation}
For $t<k$ we simply upper bound $|\mR_t|$ by $s_t$, so that the expression in Equation~\eqref{eqn:time-ub1} is upper bounded by
\begin{align}\label{eqn:time-ub2}
&O\left(s_1 + \sum_{t=2}^{k}(t\cdot s_{t}) + |\mR_k|\cdot k^2\cdot \rho(\op)\right)
=
\\&O\left(\frac{n \tau_1}{\tnk} + \frac{k^{3k}}{\delta^{2k}}\cdot \frac{\tm}{\tnk}\cdot
\sum_{t=2}^{k-1} (\alpha^{t-2}\cdot \tau_t)
+  \frac{k^{3k}}{\delta^{k}}\cdot   \frac{n_k}{\tnk}\cdot  \tau_k \cdot \rho(\op)
\right)\cdot \frac{3^k\cdot \ln(k/\delta)}{\eps^2}
\;.
\end{align}
The upper bound on the query complexity is the same, replacing the term $\rho(\op)$ with $q(\op)$.
%%%%%%%%%%%%%%%%%
\iffalse
Recall that $s_1 = \sone = O\left(\frac{n \tau_1}{\tnk}\cdot \frac{\ln(2/\beta)}{\gamma^2}\right)$.
For $t>1$,
By Item~\ref{item:Approx-degrees} of Claim~\ref{clm:all-R-typical}, with probability at least $1-\delta/3$, for every $t\in[k]$,
    $d(\mC_j(\mR_{t})) \leq  \frac{k-t-1}{\beta} \cdot \frac{d(\mC_j(\mR_{t-1}))}{d(\mR_{t-1})}\cdot s_{t}$.
    Also, by the discussion at the beginning of the proof of Theorem~\ref{thm:approx-weight}, with probability at least $1-1/n^k>1-\delta/4$,  $\tm \geq m/2$. Therefore, with probability at least $1-\delta$, the conditions for Claim~\ref{clm:nk_d(R)} hold so that for every $t\in[2,k]$,
\begin{equation} \label{eqn:st}
s_t =
%\frac{d(\mR_{t-1})\cdot \tau_{t}}{\twt_{t-1}}\cdot \frac{3\ln(2/\beta)}{\gamma^2}=
 O\left(\frac{k!}{t!\beta^{2t}}\cdot \frac{m\alpha^{t-2}}{\tnk} \cdot \tau_t \cdot \frac{\ln(2/\beta)}{\gamma^2}\right)\;.
\end{equation}

%% D: As a reminder
% \newcommand{\sone}{\frac{n \tau_1}{\twt_0}\cdot \frac{3\ln(2/\beta)}{\gamma^2}}
%\newcommand{\setst}{\frac{d(\mR_t)\cdot \tau_{t+1}}{\twt_{t}}\cdot \frac{3\ln(2/\beta)}{\gamma^2}}

By combining the above with Equation~\eqref{eqn:st} and Equation~\eqref{eqn:time-ub2},
  with probability at least $1-\delta$ we get an upper bound of
 % T:maybe max instead of sum    D: not clear that simpler. Also need factor $k$ if max
 \begin{equation}
O\left(\left(\frac{n\cdot \tau_1}{\tnk} + \frac{k^{3k}}{\delta^{2k}}\cdot \frac{\tm}{\tnk}\cdot
    \sum_{t=2}^{k-1} (\alpha^{t-2}\cdot \tau_t)
    +  \frac{k^{3k}}{\delta^{k}}\cdot   \frac{n_k}{\tnk}\cdot  \tau_k \cdot \rho(\op)
               \right)\cdot \frac{3^k\cdot \ln(k/\delta)}{\eps^2}
                   \right)\;.
 \end{equation}
% Once again we condition on the event that every sample $\mR_t$ is typical with
% respect to $\mP$ and $\mR_{t-1}$. \mnote{D: complete proof}
\fi
%%%%%%%%%%%%%%%
\end{proof}

\newcommand{\WNt}{W^{\mN}_t}

\newcommand{\ost}{\frac{2\tm\alpha^{t-1}\cdot \tauu_{t+1}}{\tnk}\cdot \frac{12\ln(1/\beta)}{\beta^k \cdot \gamma^3}}

\def\IsActivProcFull{
\begin{figure}[htb!]
	\fbox{
		\begin{minipage}{0.9\textwidth}
			{\bf \IsActive}($i,\vI,k,\alpha,\eps,\delta,\tnk,\tm, \vec{\tau}$) \label{alg:is-active}
			\smallskip
			\begin{compactenum}
				\item For $\ell =1$ to $q = 12\ln(n^k/\delta)$ do:
				\begin{compactenum}
					\item Let $\mR_i=\big\{\vI\big\}$, $\twt_i = (1-\eps/2)\tau_i$, %\ttnk{i}=\tau_i,
					$\beta=1/(6k)$,  and $\gamma=\eps/(8k \cdot k!)$. \label{step:setbeta}
					%				, $x=12\ln(1/\beta)/\gamma^2$.
					\item For $t=i$ to $k-1$ do:
					\begin{compactenum}
						\item Compute $d(\mR_t)$. \label{step:is_act_d}
						\item For $t>i$ set % $\ttnk{t}
						$\twt_t =(1-\gamma)\frac{\twt_{t-1} \cdot s_t}{d(\mR_{t-1})}$ and
						$s_{t+1}= \setst$.
						\label{step:tnk_mRt}
						\item If $s_{t+1}> \ost$, then set $\chi_\ell = 0$ and continue to next $\ell$.
						\label{step:costly}
						\item Invoke
						\SampleSet$(t,\mR_t, s_{t+1})$ % (where $\vec{\tau}$ is as in Definition~\ref{def:tau_i})
						and let $\mR_{t+1}$ be the returned multiset.
					\end{compactenum}
					\item Set $\hck(\vI)= \frac{d(\mR_i)\cdot \ldots\cdot d(\mR_{k-1})}{s_{i+1} \cdot\ldots\cdot  s_{k}}
					\cdot |\mR_k|$.\label{step:hnkI}
					\item If $\hck(\vI) \leq \tau_i/4$ then $\chi_\ell = 1$, otherwise, $\chi_\ell=0$. \label{step:st_thres}
					\label{step:return}
				\end{compactenum}
				\item if $\sum_{\ell=1}^q \chi_\ell \geq q/2$ then  \textbf{return} {\sf Active}. Otherwise, \textbf{return} {\sf  Non-Active}.
			\end{compactenum}
			\vspace{3pt}
		\end{minipage}
	}
\end{figure}
}

%\mnote{D: modify intro text in view of short section with \IsActive}
\ifnum\conf=1

\section{The IsActive procedure}\label{sec:isactive-full}

In this section we give the full details for the procedure \IsActive, which implements
an oracle for a set $\mA$ of active ordered cliques. We prove
that with high probability (over the randomness of the procedure) $\mA$ is \pgood\ for a vector $\vec{\tau}$ that will be set subsequently.
Recall that the set of active cliques was defined so as to ensure that
no ordered $t$-clique $\vT$ for which $U(\vT)$ participates in too many $k$-cliques is assigned any $k$-clique. We refer to cliques that participates in too many $k$-cliques  as sociable cliques as defined in the next subsection.

\else
\section{Implementing an oracle for good subset $\mA$}\label{sec:isactive-full}

In this section we describe a (randomized) procedure named \IsActive, that implements an oracle for a subset $\mA$, where with
high probability, $\mA$ is \pgood\ for an appropriate setting of $\vec{\tau}$.
\dnote{Recalling the discussion in Section~\ref{subsec:sociable-over}, the}
% The 
procedure aims at determining whether a given ordered $i$-clique $\vI$ is sociable. That is,
whether $c_k(\vI)$ is larger than  $\tau_i$, in which case it is not included in $\mA$.
This ensures that the first item in Definition~\ref{def:goodCPC} holds, since  $\wtP(\vI) \leq c_k(\vI)\leq \tau_i$.
%\mnote{D: continue after finish big picture since can refer to that}
%(See Definition~\ref{def:tau_i} for the exact setting of $\vec{\tau}$.)
We also allow the procedure \IsActive\ to abort before achieving an estimate of $c_k(\vI)$.
This early exit (with an output of \textsf{Non-Active})
addresses the case that $\vI$ is costly
(as was informally discussed in Section~\ref{subsec:verify-costly}).

In the following subsections we give formal definitions of the notions of sociable cliques (Section~\ref{subsec:sociable}) and costly  cliques (Section~\ref{subsec:costly}).
The description of the procedure and its analysis are then given in Section~\ref{subsec:isactive}.
%, and prove that the setting of $\vec{\tau}$ and the definition of costly are such that despite not including in $\mA$ all non-sociable and costly cliques, we can still prove that $\wtP(V)$ is close to $n_k$.

\fi

\iffalse
% Long version intro to the sociable cliques section

In this section we describe and analyze a randomized procedure named \IsActive, which implements
an oracle for a set $\mA$ of active ordered cliques
that with high probability (over the randomness of the procedure) is \pgood\ for a vector $\vec{\tau}$ that will be set subsequently.

Recall that  by Definition~\ref{def:goodCPC}, in order for %a CPC $\mP$ to be good,
a subset $\mA$ to be good,
%it
the weight function it induces, $\wtP(\cdot)$, must satisfy two conditions.
The first is that the weight, $\wtP(\vT)$, of each ordered clique $\vT$ is bounded, and the second is that
the sum of the weights taken over all vertices is not much smaller than $n_k$.
Also recall that the weight of an ordered clique is the number of $k$-cliques assigned to it,
where an ordered clique $\vT$ may be assigned $k$-cliques only if %it $\mP$-active.
$\vT \in \mA$.
Hence, the weight of an ordered clique $\vT \notin \mA$  %that is not $\mP$-active
is always 0, and the weight of an ordered clique
% that is $\mP$-active
$\vT \in \mA$
is upper bounded by the number of $k$-cliques it participates in, i.e., $c_k(\vT)$.
\mnote{D: can compress text}
In order to ensure that the weight of every ordered clique is bounded,
we would like the procedure \IsActive\ to return that a given ordered clique is active only if it does not participate in too many $k$-cliques.
On the other hand we need to take care that not too many $k$-cliques remain unassigned.

\fi

% \mnote{D: add some intro to next subsections}

\subsection{Sociable cliques}\label{subsec:sociable}

%In this subsection we set  ``sociability'' thresholds for (unordered and ordered)
%cliques (that is, thresholds on the number
%of $k$-cliques they participate it, depending on their size), and give a sufficient condition for a CPC to be good based
%on these thresholds. We first recall some notations (and introduce one new notation).
In this subsection we define ``sociable'' and ``non-sociable'' (ordered and unordered) cliques. That is, we define certain %''sociability"
thresholds on the number of $k$-cliques that a $t$-clique participates in, according to which we decide whether the $t$-clique is ``sociable'' or ``non-sociable''.
We then give a sufficient condition for a %CPC
subset $\mA$ of ordered cliques
to be \pgood\ (recall Definition~\ref{def:goodCPC}) based on these thresholds.
% We first recall some notations. % (and introduce one new notation).
% \mnote{T:commented out old text and wrote something new. Still not so good.}

% \newcommand{\taut}{\frac{2(k!)^2\cdot\binom{k}{t}\cdot (k-t+1	)\cdot t! \cdot (2\alpha)^{k-t}}{ \gamma^2 }}
% \newcommand{\taut}{ k^{4k}\cdot \frac{(2\alpha)^{k-t}}{ \gamma^2 }}

%\newcommand{\taut}{ k^{4k}\cdot \frac{\alpha^{k-t}}{ \gamma^2 }}
\newcommand{\taut}{ \frac{k^{4k}}{\beta^k\cdot \gamma^2}\cdot \alpha^{k-t}}

% \newcommand{\tauu}{\tau^{\scriptscriptstyle  \wedge}}
% \newcommand{\taul}{\tau^{\scriptscriptstyle  \vee}}
% D: Sorry, but the superscript with the wedge looked a bit strings - like a misplaced \hat

In what follows we  set $\gamma = \eps/(8k \cdot k!)$ and \tnote{$\beta = 1/(6k)$}.
%%%%%%%
\begin{definition}[Sociability thresholds] \label{def:tau_i}
For each $t \in [2,k-1]$, we set $$\tauu_t=\taut\;.$$
%\mnote{D:  for optimality will need more complex expression brr.. need also to check current setting}
For $t=1$ and an estimate $\tnk$ of $n_k$,
 we set $\tauu_1 = \frac{k^{4k}}{\gamma^2}\cdot \min\left\{\alpha^{k-1}, \tnk^{\frac{k-1}{k}}   \right\}$,
 % \mnote{D: check $\tau$s}
and for $t=k$ we set $\tauu_k=1$.
We refer to $\tauu_t$ as the ``$t$-sociability threshold".
For $t\in [k-1]$ we set %$\taul_t=\tauu_t/(2(k!)^2)$,
$\taul_t = \beta^k \tauu_t/(4(k!)^2)$ %\mnote{D: increased U so gap is bigger. L is same up to constant}
and $\taul_k = \tauu_k$.
\end{definition}
%%%

%\begin{definition}[$k$-cliques' sociable] \label{def:sociable}
\begin{definition}[Sociable cliques] \label{def:sociable}
Let $t \in [k].$
%For a $t$-clique $T$, let $c_k(T) = |\mC_k(T)|$.
	We say that a $t$-clique $T$ is %{\sf  $k$-cliques' sociable}  (or just {\sf sociable})
{\sf sociable} (with respect to $\tnk$),
if $c_k(T)> \tauu_t$.
 % or if $d(T)>\theta$.
 We say that a $t$-clique $T$  %\emph{$k$-cliques' non-sociable} (or just {\sf non-sociable}) if % $d(T)\leq\theta$ and
 is {\sf non-sociable} (with respect to $\tnk$), if
 $c_k(T) \leq \taul_t$.
We say that an ordered clique $\vT$ is sociable (non-sociable) if $U(\vT)$ is sociable (respectively, non-sociable).
% It will be convenient to extend the definitions to ordered $k$-cliques, where all ordered $k$-cliques are $k$-cliques' non-sociable.
\end{definition}
Observe that every (ordered) $k$-clique is non-sociable.
We also note that the special setting of $\tauu_1$ (the threshold for vertices) is
due to the need to deal separately with the case that $n_k \leq \alpha^k$ and the case that $n_k > \alpha^k$.
%\mnote{ADD some further explanation}
% since one of the terms in the complexity of the algorithm (which can be traced to
% the threshold for vertices), depends on the relations between $n_k$ and $\alpha^k$.
%\mnote{D: improve}.

We shall prove that if $\mA$ contains all non-sociable ordered cliques, then
% a CPC $\mP$ is such that every non-sociable ordered $t$-clique is $\mP$-active, then
$\wtP(V)$ is not much smaller than $n_k$.
We actually prove the claim for  a generalization of the weight function introduced in Definition~\ref{def:wgt}.
%\paragraph{Restricted weight functions for a given $I$.}
% Observe that the framework and definitions also makes sense for subsets of $\mC_k$. (We may have intermediate goals of approximating specific subsets of $k$-cliques.)
Given an ordered $i$-clique $\vI$, we may  restrict our attention to $k$-cliques in $\mC_k(\vI)$
and assign them to ordered $t$-cliques in $\mOC_t(\vI)$ for $t > i$
(independently of whether $\vI$ itself belongs to $\mA$ or not).

\begin{definition}[Fully-active cliques with respect to $\vI$] \label{def:fully-active-I}
	Let  $\mA $ be a subset of  ordered cliques and let $\vI$ be an ordered $i$-clique.
	Consider any ordered $t$-clique $\vT$ that is an extension of $\vI$, that is, $\vT\in \mOC_t(\vI)$.
 We say that $\vT$ is {\sf fully active} with respect to $\mA$ and $\vI$,
	if 	all of its prefixes that are extensions of $\vI$
	belong to $\mA$. That is, $\vT_{\leq j} \in \mA$ for every $j \in [i+1,t]$.
	For $t\in [i+1,k]$, we denote the subset of ordered $t$-cliques that are fully active with respect to $\mA$ and $\vI$ by $\mFAI_t$.
\end{definition}

\begin{definition}[Assignment and weight] \label{def:wgt-I}
	Let $\mA$ be a subset of ordered cliques and $\vI$ be an ordered $i$-clique (that may not belong to $\mA$).

	For each $k$-clique $C$ that contains $U(\vI)$, if $U^{-1}(C) \cap \mFAI_k \neq \emptyset$, then $C$ is
	% {\sf unassigned}.\mnote{D: add w.r.t $\mA$ everywhere?} Otherwise, $C$ is
	{\sf assigned}  (with respect to $\mA$ and $\vI$) to the first ordered $k$-clique $\vC \in U^{-1}(C) \cap \mFAI_k$,
	and to each ordered $t$-clique $\vC_{\leq t}$ for $t \in [i,k-1]$.
	Otherwise ($U^{-1}(C) \cap \mFA_k =\emptyset$), $C$ is {\sf unassigned}. That is, for each $k$-clique $C\in \mC_k(\vI)$, if there is some ordered $k$-clique of $C$ that is an extension of $\vI$ and which is fully-active with respect to $\mA$ and $\vI$, then $\vI$ is assigned to the first such ordered $k$-clique. Otherwise, $C$ is unassigned.
	
	\label{par:wtI}For each ordered $t$-clique $\vT$, we let $\wtAI(\vT)$  denote the number of $k$-cliques that are assigned to $\vT$ with respect to $\vI$, and we refer to $\wtAI(\vT)$ as the {\sf weight of $\vT$} with respect to $\mA$ and $\vI$.
\end{definition}
\renewcommand{\wtP}{\wtAI}

%\iffalse
%\begin{definition}[Assignments and a weight function for $I$] \label{def:wgt-I}
%	Let $\mA = \bigcup_{t=1}^k \mA_t$ be a subset of ordered cliques, where $\mA_t \subseteq \mOC_t$
%	and let $\vI$ be an ordered $i$-clique.
%%	For $t \in [i+1,k]$ let
%%	$\vT=(v_1, \ldots, v_t)$ be an ordered $t$-clique in $\mOC_t(\vI)$ and let $C$ be a $k$-clique in $\mC_k(T)$.
%	The following notions are defined with respect to $\mA$ (for the given $I$).
%	
%	We say that $C \in \mC_k(I)$ is \emph{assigned} to $\vC \in \mOC_k(\vI)\cap U^{-1}(\vC)$ if $\vC$ is the first ordered extension of $\vI$ (in lexicographic order) among the fully-active $k$-cliques in $\mOC_k(\vI)\cap U^{-1}(\vC)$. We further assigned $\vC$ to all of its $t$-prefixes $\vC_{\leq t}$ for $t \in [i+1,k-1]$. \mnote{T: not accurate, should define fully-active with respect to $\vI$.}
%\end{definition}
%\fi
Observe that Definition~\ref{def:wgt} is a special case of Definition~\ref{def:wgt-I} when we
take $\vI$ to be the single ordered $0$-clique, which we denote by $\lambda$.

\newcommand{\piAI}{\pi^{\mA,\vI}}
\begin{lemma}\label{lem:almost-good-P}
	Let $\mA = \bigcup_{t=1}^k \mA_t$ be a subset of ordered cliques
	such that for every $t \in [k]$, the
	subset $\mA_t$ contains all non-sociable ordered $t$-cliques.
	For any ordered $i$-clique $\vI$ such that $i>0$,
\begin{equation}\label{eqn:wtPmOCkI}
	\wtP(\mOC_k(\vI))  \geq (1-k\cdot \gamma)c_k(\vI)\;,
\end{equation}
and for $i=0$ Equation~\eqref{eqn:wtPmOCkI} holds conditioned on $\tnk > (2\alpha)^{k}$
or $\tnk \geq n_k/4$.
%	where for the special case of $i=0$ and $\tnk^{\frac{k-1}{k}} < (2\alpha)^{k-1}$,
%	we add the condition that $\tnk \geq n_k/4$.\mnote{D: or always add condition for $i=0$?}
\end{lemma}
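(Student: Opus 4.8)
The plan is to count, for a fixed ordered $i$-clique $\vI$, the $k$-cliques $C \in \mC_k(\vI)$ that fail to be assigned with respect to $\mA$ and $\vI$, and show this count is at most $k\gamma \cdot c_k(\vI)$. By Definition~\ref{def:wgt-I}, $C$ is unassigned iff \emph{every} ordered $k$-clique $\vC \in U^{-1}(C)$ that extends $\vI$ has some prefix $\vC_{\leq j}$ (with $i < j \leq k$) that lies outside $\mA$. Since $\mA_j$ contains all non-sociable ordered $j$-cliques, this forces $\vC_{\leq j}$ to be sociable, i.e. $U(\vC_{\leq j})$ participates in more than $\tauu_j$ $k$-cliques. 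So: $C$ is unassigned only if for every ordering of its vertices extending $U(\vI)$, some intermediate prefix ($j$-subset containing $U(\vI)$, for some $i < j < k$ — note $j=k$ is impossible since $k$-cliques are non-sociable) is a sociable $j$-clique. Turning this around, I would argue: if $C$ is unassigned, then $C$ contains a sociable $(i+1)$-clique extending $\vI$ — actually more carefully, one should set up an inductive/union-bound argument over the ``level'' $j$ at which sociability first obstructs the ordering.

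The cleanest route I foresee: for each $j \in [i+1, k-1]$, let $B_j$ be the set of $k$-cliques $C \in \mC_k(\vI)$ such that $C$ contains a sociable $j$-clique $T$ with $U(\vI) \subseteq T \subseteq C$. Then any unassigned $C$ lies in $\bigcup_{j=i+1}^{k-1} B_j$ (since to block all orderings there must be an obstruction, and it occurs at some level). So it suffices to bound $|B_j| \leq \gamma \cdot c_k(\vI)$ for each $j$, and union-bound over the $\leq k$ values of $j$. To bound $|B_j|$: each sociable $j$-clique $T$ with $U(\vI) \subseteq T$ participates in $c_k(T)$ $k$-cliques, all of which contain $\vI$, so they contribute to $c_k(\vI)$. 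The number of such sociable $j$-cliques is at most $c_k(\vI)/\tauu_j$ \emph{counted with multiplicity} — more precisely, $\sum_{T \text{ sociable}, U(\vI)\subseteq T} c_k(T) \geq \tauu_j \cdot \#\{\text{such } T\}$, while also $\sum_T c_k(T) \leq \binom{k-i}{j-i} c_k(\vI) \leq k! \cdot c_k(\vI)$ since each $k$-clique in $\mC_k(\vI)$ has at most $\binom{k-i}{j-i}$ such $j$-subsets. Hence $\#\{\text{sociable } T\} \leq k!\, c_k(\vI)/\tauu_j$, and $|B_j| \leq \binom{k-i}{j-i} \cdot$ (that count) $\cdot$ (bound on $k$-cliques per $T$), but this double-counting needs to be done carefully; the arboricity bound (Corollary~\ref{cor:nk_vs_nt} applied inside $G_{\vI}$) enters to bound the number of $k$-cliques containing a \emph{fixed} sociable $j$-clique when $\tauu_j$ is the $\alpha^{k-j}$-type threshold.

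Concretely, I would combine: (i) a sociable $j$-clique $T$ has $c_k(T) > \tauu_j = \frac{k^{4k}}{\beta^k\gamma^2}\alpha^{k-j}$; (ii) there are at most $c_k(\vI)/\tauu_j \cdot (\text{combinatorial factor} \leq k!)$ sociable $j$-cliques below $C$'s extending $\vI$, summed appropriately; (iii) the number of $k$-cliques containing a fixed $j$-clique $T$ in a graph of arboricity $\leq \alpha$ is $O\big((2\alpha)^{k-j} j!/k!\big)$ via Corollary~\ref{cor:nk_vs_nt}. Multiplying (ii) and (iii) and using that $\tauu_j$ has a $k^{4k}/(\beta^k\gamma^2)$ factor over $\alpha^{k-j}$, the $\alpha$-powers cancel and the combinatorial $k!$-type factors are absorbed, leaving $|B_j| \leq \gamma \cdot c_k(\vI)$. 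The $i=0$ case, where $\vI = \lambda$ and $c_k(\lambda) = n_k$, is special only because $\tauu_1 = \frac{k^{4k}}{\gamma^2}\min\{\alpha^{k-1}, \tnk^{(k-1)/k}\}$: when $\tnk \leq \alpha^k$ one uses $\tnk \geq n_k/4$ together with the bound $\#\{\text{sociable vertices}\} \leq n_k/\tauu_1$ and the fact that each vertex is in $\leq n_{k-1}(G|_{\Gamma(v)})$ $k$-cliques with $\sum_v n_{k-1}(G|_{\Gamma(v)}) = k\, n_k$; when $\tnk > (2\alpha)^k$ one instead uses $\tnk > \alpha^k \geq n_{k-1}(G|_{\Gamma(v)})$-type bounds from arboricity so that no vertex can be sociable at all, or the contribution is negligible.

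The main obstacle I anticipate is the bookkeeping in step (ii): correctly showing that an unassigned $C$ must lie in some $B_j$ requires care, because ``every ordering is obstructed'' is a statement about all $(k-i)!$ orderings of $C\setminus\vI$ simultaneously, and one must extract from it a single sociable $j$-subset (for some $j$) that works — the natural choice is to take $j$ minimal such that some $j$-subset of $C$ containing $U(\vI)$ is sociable, then argue that if no such $j < k$ exists, one can greedily build a fully-active ordering level by level (non-sociable $\Rightarrow$ in $\mA$ at each step), contradicting unassignedness. Making this greedy-extension argument airtight, and then getting the union bound over $j$ to close with the right constants matching Definition~\ref{def:tau_i}, is where the real work lies; everything else is the arboricity counting bound already packaged in Claims~\ref{clm:sum_min_d_cliques}--\ref{clm:d^k_leq_d^k-1} and Corollary~\ref{cor:nk_vs_nt}.
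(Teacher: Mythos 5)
Your reduction — unassigned $C$ must contain, for some $j$, a $j$-subset $T \supseteq U(\vI)$ that is not non-sociable, via the greedy-extension argument — is sound. But the quantitative step does not close, and the load-bearing false claim is your item (iii): there is \emph{no} bound of the form $c_k(T) = O\big((2\alpha)^{k-j}\big)$ for a fixed $j$-clique $T$ in a graph of arboricity at most $\alpha$. Take $T=\{v\}$ (so $j=1$) with a neighborhood that induces a disjoint union of cliques of size $\alpha$: then $c_k(T) = \Theta\big(d(v)\cdot\alpha^{k-2}\big)$, which grows with $d(v)$ and is unbounded in terms of $\alpha$ alone. Corollary~\ref{cor:nk_vs_nt} applied to the common-neighborhood subgraph $G_T$ gives $c_k(T) \leq n_1(G_T)\,(2\alpha)^{k-j-1}/(k-j)!$, and $n_1(G_T)$ can be as large as $n$. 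With (iii) gone, your double counting yields only $\sum_{T} c_k(T) \leq \binom{k-i}{j-i}\,c_k(\vI)$ over the not-non-sociable $T\supseteq U(\vI)$ of size $j$, which bounds $|B_j|$ by $\binom{k-i}{j-i}\,c_k(\vI)$, nowhere near $\gamma\,c_k(\vI)$.

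The deeper problem is that the union bound over levels $j$ is intrinsically too lossy. Your $B_j$ is a disjunction (``$C$ contains \emph{some} bad $j$-subset''), but the event that actually blocks assignment is a conjunction: at the level where the greedy extension dies, \emph{every} single-vertex extension of the surviving prefix inside $C$ is non-active. A single not-non-sociable $T$ with enormous $c_k(T)$ puts all of its $k$-cliques into $B_j$, yet most of them will still be assignable through one of $T$'s many good siblings. The paper's proof captures the conjunction directly: it runs an iterative potential-assignment process where $C$ survives step $t\to t+1$ iff \emph{some} single-vertex extension of its current $t$-prefix inside $C$ lies in $\mA_{t+1}$, so the $k$-cliques dropped at step $t$ are exactly those whose $(k-t)$ extensions are \emph{all} non-active. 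These correspond to $(k-t)$-cliques in the subgraph $\WNt$ of an auxiliary graph $W_t$ induced on its non-active nodes. Then (a) $n_1(\WNt) \leq (k-t)\,c_k(\vI)/\taul_{t+1}$, because each non-active node accounts for at least $\taul_{t+1}$ of the $(k-t)\,c_k(\vI)$ node--clique incidences in $W_t$, and (b) $\alpha(\WNt)\leq\alpha$, so Corollary~\ref{cor:nk_vs_nt} gives $n_{k-t}(\WNt)\leq n_1(\WNt)\,\alpha^{k-t-1}/(k-t)! \leq \gamma\, c_k(\vI)$. The clique structure inside $W_t$ — which the union-bound decomposition discards — is exactly what lets the arboricity bound do the work that your false step (iii) was supposed to do.

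Two smaller issues. You should use $\taul_j$ rather than $\tauu_j$: an ordered clique outside $\mA$ is merely \emph{not} non-sociable ($c_k > \taul_j$), not necessarily sociable ($c_k > \tauu_j$); the gap $\taul_j = \beta^k\tauu_j/(4(k!)^2)$ is precisely what absorbs the combinatorial factors in the counting. And the $i=0$ hypotheses enter only through the first level ($t=0$, where vertices are chosen): when $\tnk \leq \alpha^k$, $\tauu_1$ takes its $\tnk^{(k-1)/k}$ branch, and the arboricity bound on $\WNt$ is replaced by the crude $n_{k}(\WNt) \leq \big(n_1(\WNt)\big)^{k}$, which closes the bound once $\tnk \geq n_k/4$.
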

Observe that for the case that $i=0$ and $I=\lambda$, it holds that $c_k(\vI) = n_k$ and $\wtAI=\omega^{\mA}$, so
Equation~\eqref{eqn:wtPmOCkI} is equivalent to
\begin{equation}
\omega^{\mA}(V) \geq  (1-k\cdot \gamma)n_k\;.
\end{equation}

\begin{proof}
	In order to prove Lemma~\ref{lem:almost-good-P} we consider an  \emph{iterative potential assignment process},
	and use this process to lower bound $\wtAI(\mOC_k(\vI))$.
	In each iteration $t$ we define a mapping $\piAI_t:\mC_k(\vI) \rightarrow \mFAI_t$ that
	potentially assigns $k$-cliques to ordered $t$-cliques that are fully-active  (with respect to $\mA$ and $\vI$). We let $\mPAI_t$ be the subset of $k$-cliques that are potentially assigned at the end of the $t\th$ iteration, where $|\mPAI_i|=c_k(\vI)$ and for every $t\in [i,k-1]$, $\mPAI_{t+1} \subseteq \mPAI_t$.	We prove by induction that almost all of the $k$-cliques  in $\mC_k(\vI)$ ``survive" this potential assignment process, so that
	for every $t\geq i$,
	\[
|\mPAI_t |\geq (1- t\cdot \gamma) \cdot c_k(\vI).
	\numberthis
	\]
	
	At the first iteration, $t=i$, and all the $k$-cliques in $\mC_k(\vI)$ are potentially-assigned to $\vI$. Hence, $|\mPAI_i|=c_k(\vI)$ and the inequality holds for $t=i$.
	We now assume that the inequality holds for  $j \in [i,t]$ and prove it for $t+1$.
	
	Let $C$ be a $k$-clique in $\mPAI_t$, let $\vT = \piAI_t(C)$ be the ordered $t$-clique that $C$ is potentially assigned to, and let
$\mOC_{t+1}(\vT,C) \subset \mOC_{t+1}(\vT)$ be the set of single-vertex
extensions $\vT'$ of $\vT$ such that $U(\vT') \subseteq C$. That is,
$\mOC_{t+1}(\vT,C)= \{(\vT, u) \mid u \in C \setminus U(\vT) \}$.
% the set of ordered $(t+1)$-cliques of $C$ that are extensions of $\vT$, $\{(\vT, u) \mid u \in C \setminus U(\vT) \}$.
If $\mOC_{t+1}(\vT,C) \cap \mFAI_{t+1} \neq \emptyset$,  then $C$ is potentially assigned to the first ordered $(t+1)$-clique in $\mOC_{t+1}(\vT,C)  \cap \mFAI_{t+1}$, and otherwise it is not potentially assigned (nor assigned) to any ordered $(t+1)$-clique.
	Therefore, a $k$-clique $C \in \mPAI_t$ is not in $\mPAI_{t+1}$ if all of the ordered $(t+1)$-cliques  $(\vT,u)$ for $u \in C \setminus U(\vT)$ are not $(t+1)$-fully active. 	We shall upper-bound the number of such  $k$-cliques.
\tnote{
For every ordered $t$-clique $\vT$, consider the following auxiliary subgraph $W_{t,\vT}$.
For each $k$-clique $C\in \mC_k(\vT)$ there are $k-t$ nodes
in $W_{t,\vT}$. Each of these $k-t$ nodes corresponds to one of the ordered $(t+1)$-cliques in
$\mOC_{t+1}(\vT,C)$.
There is an edge between two nodes $(\vT,u)$ and $(\vT,u')$ in $W_{t,\vT}$ if (and only if)
there is an edge between $u$ and $u'$ in $G$.

\iffalse
\mnote{T: I THINK ITS NOT CORRECT}	
	For every ordered $t$-clique $\vT$, consider the following auxiliary subgraph $W_{t,\vT}$.
For each $k$-clique $C$ such that $\piAI_t(C)=\vT$ there are $k-t$ nodes
	in $W_{t,\vT}$. Each of these $k-t$ nodes corresponds to one of the ordered $(t+1)$-cliques in
$\mOC_{t+1}(\vT,C)$.
There is an edge between two nodes $(\vT,u)$ and $(\vT,u')$ in $W_{t,\vT}$ if (and only if)
there is an edge between $u$ and $u'$ in $G$.
\fi

	Let $W_t$ be the (disjoint) union of all subgraphs $W_{t,\vT}$ over all the ordered $t$-cliques $\vT$ that are potentially assigned a $k$-clique. We say that a node in $W_t$ is {\em non-active\/} if its corresponding ordered $(t+1)$-clique is not in $\mA_{t+1}$.
We denote the subgraph induced by the set of non-active nodes in $W_t$ by $\WNt$.
	Observe that there is a one-to-one correspondence between $k$-cliques in $\mC_k(\vT)$ and $(k-t)$-cliques in $W_t$. Furthermore, the set of $k$-cliques of $\mC_k(\vT)$ that are not in $\mPAI_{t+1}$ can be classified into two types. The first type is simply the $k$-cliques that are not in $\mPAI_{t}$ and the second type is $k$-cliques that are in $\mPAI_{t}$ but are not in $\mPAI_{t+1}$.	Let $C$ by a $k$-clique of the second type. Then $C$ is assigned to
	 some $t$-clique $\vT$, but all of the $(k-t)$ nodes in $W_{t,\vT}$ that correspond to the ordered $(t+1)$-cliques in $\mOC_{t+1}(\vT,C)$  are non-active so that $C$ cannot be further assigned to any ordered $(t+1)$-clique.
	Therefore, the $k$-cliques of the second type correspond to
	$(k-t)$-cliques in $W_t$ that all of their nodes are non-active.
	That is, they correspond to  $(k-t)$-cliques in $\WNt$. Hence, we shall want to bound the number of $(k-t)$-cliques in $\WNt$.
	
	By the assumption that $\mA$ contains all non-sociable $t$-cliques, it holds that any ordered $(t+1)$-clique that is non-active participates in at least $\tau_{t+1}^{L}$ $k$-cliques in $\mC_k(\vT)$ (which in turn correspond to $\tau_{t+1}^{L}$ $(k-t)$-cliques in $W_t$). Hence, we can bound the number of nodes in the subgraph $\WNt$ (denoted $n_1(\WNt)$) as follows.
	
\iffalse
\mnote{T:ALSO INCORRECT}
	Observe that there is a one-to-one correspondence between $k$-cliques in $\mPAI_t$ and $(k-t)$-cliques in $W_t$. Furthermore, the set of $k$-cliques that are in $\mPAI_t$ but not in $\mPAI_{t+1}$ corresponds to the set of $(k-t)$-cliques in $W_t$ that all of their nodes are non-active, that is, to the set of $(k-t)$-cliques that are  induced by the subgraph $\WNt$. By the assumption that $\mA$ contains all non-sociable $t$-cliques, it holds that any ordered $(t+1)$-clique that is non-active participates in at least $\tau_{t+1}^{L}$ $k$-cliques (which in turn correspond to $\tau_{t+1}^{L}$ $(k-t)$-cliques in $W_t$). Hence, we can bound the number of nodes in the subgraph $\WNt$ (denoted $n_1(\WNt)$) as follows.
\fi
	\[ n_1(\WNt) < \frac{ (k-t)\cdot n_{k-t}(W_t)}{\tau^L_{t+1}}=
%	\frac{(k-t)|\mPAI_t|}{\tau^L_{t+1}} \leq
	\frac{(k-t)\cdot c_k(\vI)}{\tau^L_{t+1}}\;,
	\numberthis
	\]
	where the $(k-t)$ factor is due to the fact that every $(k-t)$-clique  can be counted from its $(k-t)$ nodes.
}
	We first continue the proof for the case that $i\neq 0$ or that $i=0$ and $\tnk > \alpha^{k}$
By	 applying Corollary~\ref{cor:nk_vs_nt} to the auxiliary graph $\WNt$ (for $k=k-t$ and $t=1$),
	\[
	n_{k-t}(\WNt) \leq
	\frac{n_1(\WNt)}{(k-t)!}\cdot  (\alpha(\WNt))^{k-t-1} \leq
	\frac{(k-t)\cdot c_k(\vI)}{(k-t)!\cdot \tau^L_{t+1}}\cdot \cdot  (\alpha(\WNt))^{k-t-1}.
	\numberthis \label{eqn:n_k-t(W)_alpha}
	\]
	Observe that each graph $W_{t,\vT}$ corresponds to a subgraph of $G$, so that for every $\vT$,
	$\alpha(W_{t,\vT}) \leq \alpha(G)\leq \alpha$.
Since $W_t$ is a disjoint union of graphs with arboricity at most $\alpha$, and since
$\WNt$ is a subgraph of $W_t$, it follows that 	$\alpha(\WNt) \leq \alpha$.
	Plugging this into Equation~\eqref{eqn:n_k-t(W)_alpha} together with the setting of $\tau^L_{t+1}$
from Definition~\ref{def:tau_i} (where for $t=i=0$ we use the premise of this case by which $\tnk > \alpha^{k}$),
we get that
	\[
	n_{k-t}(\WNt) < \gamma \cdot c_k(\vI)\;. \numberthis
	\]

	We now consider the case that $i=0$ (so that $c_k(\vI) = n_k$) and $\tnk \leq \alpha^{k}$ .
	Clearly, for any graph $W$ and $k$, $n_k(W) \leq (n_1(W))^k$. Hence, by the setting of %$\tau^L_1=?$
$\taul_1$
for this case, we get that if
$\tnk \geq n_k/4$, %\mnote{D: was $\tnk \in [n_k/4, n_k]$}
then
	\[n_{k-t}(\WNt) < 	\big(n_{1}(\WNt)\big)^k < \gamma \cdot n_k =	\gamma \cdot c_k(\vI)\;. \numberthis \label{eqn:n_k-t(W)}
	\]

	By Equations~\eqref{eqn:n_k-t(W)_alpha} and~\eqref{eqn:n_k-t(W)} and by  the induction hypothesis,  it follows for both cases that
	\[
	|\mPAI_{t+1}| > 	|\mPAI_t| - n_{k-t}(\WNt) > (1-t\cdot \gamma)c_k(\vI) \;,
	\]
	thus establishing the induction step.
	Hence,
	\[
		|\mPAI_k| > 	  (1-(k-i) \cdot \gamma)c_k(\vI)  \;.
	\]
	
Since for every $k$-clique $C$ in $\mPAI_k$ there is at least one fully-active ordered  $k$-clique in $U^{-1}(C)$, it follows from Definition~\ref{def:wgt} that $C$ is assigned (with respect to $\mA$ and $\mI$) to some ordered $k$-clique in $\mOC_k(\vI)$, so that
	\[ \wtP(\mOC_k(\vI))>  		|\mPAI_k|	 > (1-(k-i) \cdot \gamma)n_k(I)  > (1-k \cdot \gamma)c_k(\vI)\;,
	\]
	as claimed.
%%%%%%%%%%%%
\end{proof}

\subsection{Costly cliques}\label{subsec:costly}
Ideally we would have liked the procedure \IsActive\ to distinguish between non-sociable and sociable cliques (where for the former we would like it to return that they are active and for the latter that they are not active).
However, in order to bound the complexity of \IsActive, we shall actually allow it to decide that some  cliques are not active, even though they are  non-sociable. This is since for some cliques it might be too costly (in terms of running time) to determine whether they are sociable. Therefore, if the procedure identifies a clique as costly it  returns that it is not active, and we prove that this only amounts to a small loss in the estimation.
 To formalize this, we introduce the next definition (which refers to unordered
 cliques).
Recall that by Claim~\ref{clm:sum_min_d_cliques},  $d(\mC_j) \leq 2m\cdot\alpha^{j-1}$, for every $j$.
This bound was an important ingredient in the analysis of the complexity of \Approx\ (see Claim~\ref{clm:nk_d(R)}
and Theorem~\ref{thm:approx-weight}). %\mnote{D: ADD elaborate more?}

% \mnote{D: $n_k$ or approximation in second def?}
%\begin{definition}[Relatively-Costly] \label{def:rel-costly}
\begin{definition}[Costly cliques] \label{def:costly}
	For $i\in [k-1]$ and $j \in [i,k-1]$, an $i$-clique $I$ is {\sf $j$-costly}
with respect to an estimate $\tnk$ of $n_k$ if
\[\frac{d(\mC_j(I))}{c_k(I)} > \frac{2m\alpha^{j-1}}{\gamma \cdot \tnk} \;.\]	
It is {\sf costly} if it is $j$-costly for some $j$.
\end{definition}

\iffalse
#######################
\begin{definition}[Costly]\label{def:costly}
	For $i\in [k-1]$ and $j \in [t,k-1]$, an $i$-clique $I$ is {\sf $j$-costly}
with respect to an estimate $\tnk$ of $n_k$ and a vector $\vec{\tau}$
if  \[d(\mC_j(I))> \frac{2m\alpha^{j-1}\cdot \taul_i}{\gamma \cdot \tnk} \;.\]	
It is {\sf costly} if it is $j$-costly for some $j$.
\end{definition}
#########################
\fi
% \mnote{D: in next claims, relatively or not relatively?}
% \mnote{D: maybe can afford $\eps/4$ instead of $\gamma$}
\begin{claim}\label{clm:rel-cost}
For an estimate $\tnk$, the number of $k$-cliques that contain some $i$-clique that is costly  with respect to $\tnk$ (for any $i\in [k-1]$) is at most
$2^{k+1}\gamma \tnk$.
\end{claim}
\begin{proof}
Fix $i \in [k-1]$ and $j\in [i,k-1]$.
For every $j$-costly $i$-clique $I$,
\begin{equation}
 c_k(I) < \frac{d(\mC_j(I)) \cdot \gamma \cdot \tnk}{2m \cdot \alpha^{j-1}}\; .
\end{equation}
Summing over both sides of the equation,  this gives
\begin{align}
 \sum_{\substack{I \in \mC_i \\ \text{$I$ is $j$-costly}}}c_k(I)
 &< \sum_{\substack{I \in \mC_i \\ \text{$I$ is $j$-costly}}} \frac{d(\mC_j(I)) \cdot \gamma \cdot \tnk}{2m \cdot \alpha^{j-1}} \\
  &\leq  \frac{\gamma \cdot \tnk}{2m \cdot \alpha^{j-1}}  \sum_{\substack{I \in \mC_i }} d(\mC_j(I))
  =  \binom{j}{i} \cdot \frac{\gamma \cdot \tnk}{2m \cdot \alpha^{j-1}}   \sum_{\substack{J \in \mC_j }} d(J)
  \\ &\leq \binom{j}{i} \cdot \frac{\gamma \cdot \tnk}{2m \cdot \alpha^{j-1}}  \cdot 2m \cdot \alpha^{j-1}  =  \binom{j}{i} \cdot \gamma \cdot \tnk\;,
\end{align}
where the inequality before last  is due to Claim~\ref{clm:sum_min_d_cliques}.
Therefore, if we fix $j$, and sum over all possible $i$'s in $[1,j]$, there are at most $2^j \cdot  \gamma \tnk$
$k$-cliques in which only $j$-costly $i$-cliques participate in.
The claim follows by summing over all possible $j$'s in $[k-1]$.
\end{proof}

By combining Lemma~\ref{lem:almost-good-P} with Claim~\ref{clm:rel-cost} we get the following.
In what follows we say that an ordered $t$-clique $\vT$ is  costly if $U(\vT)$ is  costly.
% \mnote{add if $n_k$ and $\tnk$}
\begin{claim}\label{clm:sufficient-conditions-for-good-CPC}
% Let $\tnk$ be an estimate of $n_k$ and let
Let %$\mP=\{(\mA_t^{\mP}, \mN_t^{\mP})\}_{t=1}^{k}$ be a CPC for which the following holds.
$\mA = \bigcup_{t=1}^k$ be a subset of ordered cliques such that
for every $t \in [k]$, the subset $\mA_t$ contains all non-sociable ordered $t$-cliques % $T$
%such that $U(T)$ is
that are not costly with respect to $\tnk$. % and $\vec{\tau}$.
% for some $j \in \{t,\dots,k-1\}$.
%\[\wtP(V) \in [(1-\epsilon/2)n_k,n_k] .\]
% $\wtP(V) \geq (1-\eps/2)n_k$.
If $\tnk\geq n_k/4$, then
$\wtP(V) \geq (1-\eps/4)n_k -(\eps/4)\tnk$.
% \mnote{D: don't always need condition, but simpler to include always}
\end{claim}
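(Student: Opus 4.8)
The plan is to derive the claim from Lemma~\ref{lem:almost-good-P} and Claim~\ref{clm:rel-cost} by passing through an auxiliary active subset. First I would set $\mathcal{A}' = \bigcup_{t=1}^{k}\mathcal{A}'_t$, where $\mathcal{A}'_t$ is the set of \emph{all} non-sociable ordered $t$-cliques (with respect to $\tnk$). This $\mathcal{A}'$ trivially meets the hypothesis of Lemma~\ref{lem:almost-good-P}, so applying that lemma with $i=0$ and $\vI=\lambda$ --- whose side condition $\tnk \geq n_k/4$ is exactly what we are given --- yields
\[
\omega^{\mathcal{A}'}(V) \;\geq\; (1-k\gamma)\,n_k,
\]
using the identifications $\omega^{\mathcal{A}',\lambda}=\omega^{\mathcal{A}'}$ and $c_k(\lambda)=n_k$ noted right after that lemma.

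Next I would compare $\wtP(V)=\omega^{\mathcal{A}}(V)$ against $\omega^{\mathcal{A}'}(V)$. By Definition~\ref{def:wgt}, $\omega^{\mathcal{A}}(V)$ equals the number of $k$-cliques $C$ for which some ordered $k$-clique $\vC\in U^{-1}(C)$ has all of its prefixes in $\mathcal{A}$, and $\omega^{\mathcal{A}'}(V)$ equals the number of $C$ for which some ordered $k$-clique of $C$ has all of its prefixes non-sociable. Consequently
\[
\omega^{\mathcal{A}}(V) \;\geq\; \omega^{\mathcal{A}'}(V)\;-\;\bigl|\{\,C\in\mC_k : C \text{ is counted by } \omega^{\mathcal{A}'}(V) \text{ but not by } \omega^{\mathcal{A}}(V)\,\}\bigr|,
\]
so it remains to bound the size of that last set. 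The key step is to show each such $C$ must contain a costly clique: fix the first $\vC\in U^{-1}(C)$ all of whose prefixes are non-sociable (it exists since $C$ is counted by $\omega^{\mathcal{A}'}(V)$); because $C$ is not counted by $\omega^{\mathcal{A}}(V)$, some prefix $\vC_{\leq j}\notin\mathcal{A}$; and since $\vC_{\leq j}$ is non-sociable, the hypothesis on $\mathcal{A}$ (it contains every non-sociable non-costly $t$-clique) forces $\vC_{\leq j}$ to be costly. Costliness is only defined for sizes in $[k-1]$, while every $k$-clique is non-sociable and, vacuously, non-costly, so $\mathcal{A}_k$ contains $\vC$; hence $j\leq k-1$, and $U(\vC_{\leq j})$ is a costly $j$-clique with $j\in[k-1]$ contained in $C$.

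By Claim~\ref{clm:rel-cost}, the number of $k$-cliques containing some costly $i$-clique, $i\in[k-1]$, is at most $2^{k+1}\gamma\tnk$, which together with the two displays gives $\omega^{\mathcal{A}}(V)\geq(1-k\gamma)n_k-2^{k+1}\gamma\tnk$. The proof then closes with arithmetic: from $\gamma=\eps/(8k\cdot k!)$ we get $k\gamma=\eps/(8\cdot k!)\leq\eps/8$ and $2^{k+1}\gamma=2^{k-2}\eps/(k\cdot k!)\leq\eps/4$ (the latter for $k\geq 2$; the case $k=1$ is trivial), so $\omega^{\mathcal{A}}(V)\geq(1-\eps/4)n_k-(\eps/4)\tnk$. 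The step I expect to require the most care is the comparison bookkeeping in the second paragraph: verifying that being counted by $\omega^{\mathcal{A}'}(V)$ but not by $\omega^{\mathcal{A}}(V)$ genuinely yields a costly clique, and in particular handling the edge case that the offending prefix cannot be the full $k$-clique, so that Claim~\ref{clm:rel-cost} (which concerns costly $i$-cliques with $i\leq k-1$) is applicable. The rest is a direct chaining of the two cited results plus the parameter substitution.
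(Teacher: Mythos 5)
Your proposal is correct and follows essentially the same route as the paper: apply Lemma~\ref{lem:almost-good-P} (with $\vI=\lambda$) to the set of all non-sociable ordered cliques, then bound the additional loss from excluding costly cliques via Claim~\ref{clm:rel-cost}. The paper's write-up is terser (it applies the lemma ``in spirit'' to $\mA$, noting that only $k$-cliques containing a costly prefix can be lost), whereas you make the comparison between $\omega^{\mA'}$ and $\omega^{\mA}$ explicit and verify carefully that an unassigned-but-$\mA'$-assigned $C$ must contain a costly $j$-clique with $j\le k-1$; this is a cleaner account of the same argument and matches the paper's intent.
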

\begin{proof}
By Lemma~\ref{lem:almost-good-P} (with $\vI=\lambda$), if for every $t \in [k]$, the subset $\mA_t$ contains {\em all\/} non-sociable ordered $t$-cliques,
then $\wtP(V) \geq (1-2k \cdot 2^k \cdot \gamma )n_k$, which, by the setting of $\gamma$ is at least $(1-\eps/4)n_k$.
By the premise of this claim, for every $t \in [k]$ the subset $\mA_t$ might contain only those non-sociable ordered
$t$-cliques that are not costly. As a consequence, $k$-cliques that contain costly $t$-cliques might
not be assigned to any vertex.
However, by Claim~\ref{clm:rel-cost}, the total number of $k$-cliques in which {\em some\/}  costly clique participate in  is at  most $k \cdot 2^k \cdot \gamma \tnk$, and the current claim follows.
%%%%%%%%%%%%%%
\end{proof}

\subsection{The procedure Is-Active}\label{subsec:isactive}

\newcommand{\x}{\frac{3\ln(1/\beta)}{\gamma^2}}

%\ifnum\conf=0
\iffalse
In order to give the idea behind the procedure \IsActive, consider the special case that $t=1$ and $T=\{v\}$
for some vertex $v$. Observe that $c_k(v)$ is the number of $k$-cliques in the subgraph induced by $v$ and its neighbors.
Roughly speaking, for $t=1$, the procedure \IsActive\ works similarly to the algorithm \Approx, subject to setting $\mR_1 = \{v\}$.
Namely, starting from $i=1$, and
using the procedure \SampleSet, it iteratively selects a sample $\mR_{i+1}$ of ordered $(i+1)$-cliques, given a sample
$\mR_i$ of ordered $i$-cliques.
Once it obtains $\mR_k$ it uses $|\mR_k|$ to decide whether $v$ is sociable (and hence should not be active).
As opposed to \Approx, here we do not ask for a precise estimate of $c_k(v)$, and hence this decision is simpler.\mnote{D: clarify more?}
In addition, the invariant that we would like to maintain is that the average number of
$k$-cliques that %\mnote{D: ordered vs. non}
ordered cliques in $\mR_{i+1}$ participate in, does not deviate by much from the average number for $\mR_i$.
The procedure generalizes to $t>1$ by setting $\mR_t = \{T\}$ and starting the sampling process with $i=t$.
\fi

\ifnum\conf=1
We start by presenting \IsActive\ in full detail.
\fi
                  				
\IsActivProcFull

We next prove two claims regarding \IsActive. The first is for the case that the given ordered clique $\vI$ is
non-sociable and not costly, and the second is for the case that $\vI$ is sociable.
\begin{claim}\label{clm:IsActive-nonsociable}
	Let $\tm \geq m/2$ and let
	$\vI$ be a non-sociable  ordered $i$-clique for $i \in [k-1]$ that is not costly with respect to $\tnk$.
Consider an invocation of \IsActive\ on $\vI$ with $\vec{\tau}$ set to $\vtauu$.
%If $?$, then
With probability at least $1-\delta/n^k$, \IsActive\ returns \textsf{Active}.
\end{claim}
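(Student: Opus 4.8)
The plan is to reduce to a single iteration $\ell$ of the outer loop of \IsActive\ and show that $\Pr[\chi_\ell=1]$ is a constant bounded away from $1/2$ (in fact close to $1$); since the $q=12\ln(n^k/\delta)$ iterations use independent randomness, a multiplicative Chernoff bound on $\sum_\ell\chi_\ell$ then yields the claimed $1-\delta/n^k$ success probability. Within one iteration, $\chi_\ell$ can equal $0$ only if either (a) the procedure exits at the ``costly'' test in Step~\ref{step:costly} (i.e. $s_{t+1}>\ost$ for some $t$), or (b) it reaches Step~\ref{step:st_thres} with $\hck(\vI)>\tau_i/4$. To handle (b) cleanly I would couple the iteration with a \emph{full run} that ignores Step~\ref{step:costly} and always invokes \SampleSet\ for every $t\in\{i,\dots,k-1\}$, producing $\mR_{i+1},\dots,\mR_k$ and a value $\hck(\vI)$; on the event that the actual procedure does not exit early the two runs coincide, so $\Pr[\chi_\ell=0]\leq\Pr[\text{early exit}]+\Pr[\hck(\vI)>\tau_i/4]$ with both quantities evaluated in the full run. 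A preliminary observation that simplifies everything: since the recursion in Step~\ref{step:tnk_mRt} determines $s_{t+1}$ and $\twt_t$ from $d(\mR_t)$ and $\twt_{t-1}$, a direct computation gives $\twt_t=(1-\gamma)\tau_t\cdot 3\ln(2/\beta)/\gamma^2$ for $t>i$ (and $\twt_i=(1-\eps/2)\tau_i$), hence $s_{t+1}/d(\mR_t)=\tau_{t+1}/((1-\gamma)\tau_t)$ for $t>i$ and $s_{i+1}/d(\mR_i)=\frac{\tau_{i+1}}{(1-\eps/2)\tau_i}\cdot\frac{3\ln(2/\beta)}{\gamma^2}$; in particular the products $\prod_{s=i}^{t-1}\big(s_{s+1}/d(\mR_s)\big)$ telescope.

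For mode (b): using the unbiasedness identity $\EX[c_k(\mR_{t+1})\mid\mR_t]=\frac{(k-t)c_k(\mR_t)}{d(\mR_t)}s_{t+1}$ established inside the proof of Claim~\ref{clm:sampleset}(\ref{item:sampleset-ck}), a one-line martingale argument on $X_t:=c_k(\mR_t)\cdot\frac{d(\mR_i)\cdots d(\mR_{t-1})}{s_{i+1}\cdots s_t}$ gives $\EX[\hck(\vI)]=(k-i)!\,c_k(\vI)$ for the full run. Since $\vI$ is non-sociable, $c_k(\vI)\leq\taul_i=\beta^k\tauu_i/(4(k!)^2)$ by Definitions~\ref{def:tau_i} and~\ref{def:sociable}, and since $\vec\tau=\vtauu$ we have $\tau_i=\tauu_i$, so $\EX[\hck(\vI)]\leq\beta^k\tau_i/(4\cdot k!)\ll\tau_i/4$. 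Markov's inequality then gives $\Pr[\hck(\vI)>\tau_i/4]\leq\beta^k/k!$, which is negligible. (The degenerate case in which some $\mR_t$ is empty is trivial: then $\hck(\vI)=0\leq\tau_i/4$ and $\chi_\ell=1$.)

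The main obstacle is mode (a): showing the running-time safeguard never fires. Using the preliminary observation, $s_{t+1}=d(\mR_t)\tau_{t+1}/((1-\gamma)\tau_t)$ and $d(\mR_t)=d(\mC_t(\mR_t))$, so it suffices to bound $d(\mC_t(\mR_t))$ for each $t$. Starting from $d(\mC_j(\mR_i))=d(\mC_j(\vI))$ and applying Claim~\ref{clm:sampleset}(\ref{item:sampleset-degrees}) at steps $i,\dots,t-1$ (union bound; failure probability $\leq(k-1-i)\beta\leq k\beta=1/6$), on this good event
\[
d(\mC_j(\mR_t))\leq\Big(\prod_{s=i}^{t-1}\tfrac{(k-s-1)^2}{\beta}\Big)\cdot\frac{3\ln(2/\beta)}{\gamma^2}\cdot\frac{\tau_t}{(1-\eps/2)(1-\gamma)^{t-1-i}\tau_i}\cdot d(\mC_j(\vI)),
\]
where the telescoping of the $s/d$-ratios is used. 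Substituting $j=t$, the non-costliness hypothesis $d(\mC_t(\vI))\leq\frac{2m\alpha^{t-1}}{\gamma\tnk}c_k(\vI)$ (Definition~\ref{def:costly}), the non-sociability bound $c_k(\vI)\leq\taul_i$, and the estimates $\prod_{s=i}^{t-1}(k-s-1)^2\leq(k!)^2$, $\beta^{-(t-i)}\leq\beta^{-(k-1)}$, $\tm\geq m/2$, $1-\eps/2>1/2$, $(1-\gamma)^{t-i}>1/2$, $\beta=1/(6k)\leq 1/2$, a routine computation yields $s_{t+1}\leq\ost$ for every $t$; hence $\Pr[\text{early exit}]\leq 1/6$. (This is exactly the step where the ``not costly'' hypothesis is indispensable: it converts the sociability-blind degree sums into a quantity scaling with $c_k(\vI)$, which non-sociability then makes tiny relative to the thresholds $\tauu_{t+1}$.)

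Putting the two modes together, $\Pr[\chi_\ell=1]\geq 1-1/6-\beta^k/k!\geq 4/5$. Since $\chi_1,\dots,\chi_q$ are i.i.d.\ with $q=12\ln(n^k/\delta)$, the multiplicative Chernoff bound gives $\Pr\big[\sum_{\ell=1}^q\chi_\ell<q/2\big]\leq\delta/n^k$, so \IsActive\ returns \textsf{Active} with probability at least $1-\delta/n^k$, as claimed.
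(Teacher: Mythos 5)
Your proof is correct and follows the same overall structure as the paper's: reduce to a single iteration of the outer loop, condition on the degrees-typical event from Claim~\ref{clm:sampleset} (Item~\ref{item:sampleset-degrees}) to rule out the early exit via the non-costliness and non-sociability bounds, bound $\hck(\vI)$, and finish with a Chernoff bound over the $q$ repetitions. The one genuine departure is in bounding $\Pr[\hck(\vI)>\tau_i/4]$: the paper iterates Item~\ref{item:sampleset-ck} (a Markov bound at each of the $k-i$ levels, union-bounded), whereas you observe that $\hck(\vI)$ is, up to the deterministic $(k-i)!$ factor, an unbiased estimator of $c_k(\vI)$ and apply Markov once at the end; together with your preliminary observation that $\twt_t=(1-\gamma)\tau_t\cdot\frac{3\ln(2/\beta)}{\gamma^2}$ for $t>i$ is deterministic (so the $s_{t+1}/d(\mR_t)$ ratios telescope exactly), this gives a slightly cleaner bookkeeping and a much smaller per-iteration failure contribution ($\beta^k/k!$ instead of $O(k\beta)$) from that mode, at the cost of reaching inside the proof of Claim~\ref{clm:sampleset} for the conditional expectation identity rather than using its statement as a black box.
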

\begin{proof}
Consider a single iteration of the for loop on $\ell$. We shall show that
$\chi_\ell = 1$ with probability at least $2/3$. The claim then follows by applying a multiplicative Chernoff bound
on the $\chi_\ell$'s.
By
%the
% third item in Claim~\ref{clm:sample-revisited}
%second item in
Item~\ref{item:sampleset-degrees} of
 Claim~\ref{clm:sampleset}
for $t \in [i,k-1]$,
with probability at least \tnote{$1-\beta$},
  for every $j \in [t+1,k-1]$,
\begin{equation}
d(\mC_j(\mR_{t+1})) \leq   \frac{(k-t)^2}{\beta}\cdot \frac{d(\mC_j(\mR_{t}))}{d(\mR_{t})}\cdot s_{t+1}\;. \numberthis \label{eqn:d(C_j)}
\end{equation}
Hence, Equation~\eqref{eqn:d(C_j)} holds with probability at least $1-k\cdot \beta$ for every $t\in[i,k-1]$ and $j \in [t+1,k-1]$.
We henceforth condition on this event and prove that it follows that
for every $t \in [i,k-1]$,
$s_{t+1}\leq \overline{s}_{t+1}$, where $\overline{s}_{t+1}$ is the threshold set in Step~\ref{step:st_thres},
\begin{equation}
\overline{s}_{t+1}=\ost \;.
\numberthis\label{eqn:bound_st}
\end{equation}
%For the sake of conciseness, denote $=3\ln(1/\beta)/\gamma^2$.
By Definition~\ref{def:costly}, %and~\ref{def:rel-costly},
since $\vI$ is non-sociable and not costly with respect to $\tnk$,
\begin{equation}
d(\mR_i) =d(\vI)=d(\mC_i(\vI))\leq \frac{2m \alpha^{i-1} \cdot \taul_i}{\gamma \tnk}\;. \numberthis \label{eqn:dRi}
\end{equation}
%and for every $t\in[i+1,k-1]$
%\begin{equation}
%d(\mC_t(\mR_i)) = d(\mC_t(\vI)) < \frac{2m \alpha^{t-1} \cdot \taul_t}{\gamma \cdot \tnk} \;. \numberthis \label{eqn:d_mCt_I}
%\end{equation}
By the setting of $s_{i+1}$ and $\twt_{i}$ in Step~\ref{step:tnk_mRt}, the setting of  $\vtauu$ and $\vtaul$ in Definition~\ref{def:tau_i}, and the premise on $\tm$,  %\mnote{D: correct taus?}
\begin{align}
s_{i+1}&=\frac{d(\mR_i)\cdot \tauu_{i+1} }{(1-\gamma)\twt_{i}} \cdot \x \leq
\frac{2m \alpha^{i-1}\cdot (\tauu_i/(2(k!)^2))\cdot \tauu_{i+1}}{(1-\gamma)\cdot (1-\eps/2)\cdot \tauu_i\cdot \gamma \cdot \tnk} \cdot \x  \\
&\leq  \frac{2m\alpha^{i-1}\cdot \tauu_{i+1} }{\tnk}\cdot \frac{ 6\ln(1/\beta)}{ \gamma^3\cdot (k!)^2} < \overline{s}_{i+1}
\;.
\end{align}
Hence, the inequality holds for $t=i$.

By the conditioning that the inequality in Equation~(\ref{eqn:d(C_j)}) holds for every pair $t \in [i+1, k-1]$ and $j \in [t, k-1]$, we have that for $t\in[i+1,k-1]$
% $t\in[i,k-1]$,
\begin{equation}
d(\mR_{t})=d\big(\mC_t(\mR_t)\big)<\frac{(k-t)^2}{\beta}\cdot \frac{d(\mC_{t}(\mR_{t-1}))\cdot s_{t}}{d(\mR_{t-1})}\;.
\end{equation}
By the setting $\twt_{t}=\frac{(1-\gamma)\twt_{t-1}\cdot s_{t}}{d(\mR_{t-1})}$ for $t\in[i+1,k-1]$ in Step~\ref{step:tnk_mRt},
\begin{align}
s_{t+1}&=\frac{d(\mR_t)\cdot \tauu_{t+1}}{\twt_{t}}\cdot \x<
   \frac{(k-t)^2}{\beta}\cdot\frac{d(\mC_t(\mR_{t-1}))\cdot \tauu_{t+1}}{\twt_{t-1}}\cdot \x\\
&< \ldots < \frac{(k-t)^2\cdot \ldots\cdot (k-i-1)^2}{\beta^{t-i+1}}\cdot \frac{d(\mC_t(\mR_i))\cdot \tauu_{t+1}}{\twt_{i}}\cdot \x\\
&< \left(\frac{(k-t)!}{(k-i)!}\right)^2\cdot \frac{1}{\beta^{t-i+1}}\cdot  \frac{2m \alpha^{t-1} \cdot \taul_i}{\gamma \cdot \tnk}
             \cdot \frac{\tauu_{t+1}}{(1-\eps/2)\tauu_i}\cdot \x\\
&< \frac{2m\alpha^{t-1}\cdot \tauu_{t+1}}{\tnk}\cdot \frac{6\ln(1/\beta)}{\beta^k \gamma^3 }
< \overline{s}_{t+1}\;.
% &<\frac{(t+1)!\cdot 4k^2 \cdot 2m \alpha^{k-2}}{\gamma \tnk}\cdot \x,
\end{align}
as claimed.
 It follows that with probability at least $1-k \cdot \beta$,
the procedure does not
 % return \textsf{Non-Active}
 set $\chi_\ell=0$
 in Step~\ref{step:costly} for any $t\in[i,k-1]$. We henceforth condition on this event.

It remains to
% prove that the procedure does not return \textsf{Non-Active}
bound the probability that $\chi_\ell$ is set to $0$
in Step~\ref{step:return} for $t=k-1$.
By %the second item of
%Claim~\ref{clm:sample-revisited}
Item~\ref{item:sampleset-ck} of Claim~\ref{clm:sampleset}
and the union bound, with probability at least $1-(k-i-1)\beta$, for every $t\in[i,k-1]$,
\begin{equation}
c_k(\mR_{t+1})\leq \frac{(k-t)}{\beta}\cdot \frac{c_k(\mOC_{t+1}(\mR_t))\cdot s_{t+1}}{d(\mR_t)}
  = \frac{(k-t)^2}{\beta}\cdot \frac{c_k(\mR_t)\cdot s_{t+1}}{d(\mR_t)}\;. %\numberthis
\end{equation}
Hence, with probability at least $1-(k-i-1)\beta$,
\begin{align}
|\mR_k|=c_k(\mR_k)&<  c_k(\mR_{k-1})\cdot \frac{s_{k}}{d(\mR_{k-1})}\\
&< \frac{1^2 \cdot 2^2}{\beta^2}\cdot c_k(\mR_{k-2})\cdot \frac{s_{k-1}\cdot s_k}{d(\mR_{k-2}\cdot \mR_{k-1})}
\\
&< \frac{1^2\cdot2^2 \ldots (k-i-1)^2\cdot (k-i)^2}{\beta^{k-i}} \cdot c_k(\mR_i) \cdot \frac{s_{i+1} \ldots s_{k}}{d(\mR_i)\cdot \ldots\cdot d(\mR_{k-1})}
\\ &< \frac{(k!)^2}{\beta^k} \cdot c_k(\mR_i) \cdot \frac{s_{i+1} \cdot \ldots \cdot s_{k}}{d(\mR_i)\cdot \ldots\cdot d(\mR_{k-1})}
\;.
\end{align}
Recall that if $\vI$ is non-sociable, then $c_k(\vI)=c_k(\mR_i) \leq \taul_i = \beta^k\tauu_i/(4(k!)^2)$.
% \mnote{D: increased $\tauu_i$ and gap. Don't think that change in Step 1d is good}
%\mnote{T: maybe instead of increasing the gap, we'll change Step 1d and multiply by $\beta$. Otherwise should change proofs with tau}
Therefore, if $\vI$ is non-sociable, then with probability at least $1-(k-i-1)\beta$ it holds that
$\hck(\vI) \leq \tauu_i/4$ (where $\hck(\vI)$ is as set in Step~\ref{step:hnkI}).
% the procedure returns \textsf{Active} at
Therefore, with probability at least $1-2k\cdot \beta>2/3$, $\chi_\ell$ is set to $1$ in Step~\ref{step:return}.
\end{proof}

\newcommand{\blah}{\frac{12\ln(1/\beta)}{\gamma^2}}
\begin{claim}\label{clm:IsActive-sociable}
	Let $\tm \geq m/2$  and let $\vI$ be an ordered sociable $i$-clique for $i\in [k-1]$. Consider an invocation of \IsActive\ on $\vI$ with $\vec{\tau}$ set to $\vtauu$.
With probability at least $1-\delta/n^k$, \IsActive\ returns \textsf{Non-Active}. %\mnote{add parameters of call to IsActive}
\end{claim}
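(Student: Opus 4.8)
The plan is to mirror the proof of Claim~\ref{clm:IsActive-nonsociable}, except that for a sociable $\vI$ we need a \emph{lower} bound on the estimate $\hck(\vI)$ rather than an upper bound. Fix one iteration $\ell$ of the outer loop of \IsActive. Since the $q$ iterations use independent randomness, it suffices to show that $\chi_\ell = 0$ with probability at least $2/3$: a multiplicative Chernoff bound then gives $\sum_{\ell=1}^q \chi_\ell < q/2$ with probability at least $1-\delta/n^k$ for $q = 12\ln(n^k/\delta)$, so \IsActive\ returns \textsf{Non-Active}. Now, $\chi_\ell = 1$ requires both that the process does not take the ``costly'' early exit in Step~\ref{step:costly} \emph{and} that the final estimate satisfies $\hck(\vI) \le \tauu_i/4$. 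An early exit already sets $\chi_\ell = 0$, so it is enough to prove that, conditioned on the process running to completion, $\hck(\vI) > \tauu_i/4$ except with probability at most $k\beta = 1/6 < 1/3$.

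To obtain the lower bound on $\hck(\vI) = \frac{d(\mR_i)\cdots d(\mR_{k-1})}{s_{i+1}\cdots s_k}\,|\mR_k|$, I would introduce an auxiliary \emph{bounded, legal} weight function. Let $\mA^* = \bigcup_{t=1}^k \mA^*_t$, where $\mA^*_t$ is the set of \emph{all} non-sociable ordered $t$-cliques, and let $\wtPI$ be the weight function with respect to $\mA^*$ and $\vI$ from Definition~\ref{def:wgt-I}. Because $\mA^*$ contains every non-sociable ordered clique and $i \ge 1$, Lemma~\ref{lem:almost-good-P} applies unconditionally, and since (by Definition~\ref{def:wgt-I}) $\wtPI(\vI) = \wtPI(\mOC_k(\vI))$, it gives $\wtPI(\vI) \ge (1-k\gamma)\,c_k(\vI)$; as $\vI$ is sociable, $c_k(\vI) > \tauu_i$, so $\wtPI(\vI) > (1-k\gamma)\tauu_i$. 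Moreover, every ordered $(t+1)$-clique that is fully active with respect to $\mA^*$ and $\vI$ is in particular non-sociable, so $\wtPI(\vT') \le c_k(\vT') \le \taul_{t+1} \le \tauu_{t+1}$ for every ordered $(t+1)$-clique $\vT'$ (tuples of weight $0$ trivially satisfy this). Hence $\wtPI$ is a legal weight function to which Claim~\ref{clm:sampleset}(\ref{item:sampleset-weight1}) applies with $\tau_{t+1} = \tauu_{t+1}$.

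Next I would run the joint induction of Claim~\ref{clm:all-R-typical}(\ref{item:Approx-weight1}) along the trajectory $\mR_i = \{\vI\}, \mR_{i+1}, \dots, \mR_k$ generated by \IsActive, conditioned on no early exit. The base case of the invariant $\twt_t \le \wtPI(\mR_t)$ is $\twt_i = (1-\eps/2)\tauu_i \le (1-k\gamma)\tauu_i < \wtPI(\vI)$, valid since $k\gamma = \eps/(8\cdot k!) \le \eps/2$. Unrolling the recursions for $\twt_t$ and $s_{t+1}$ exactly as in the \Approx\ analysis, the invariant together with Claim~\ref{clm:sampleset}(\ref{item:sampleset-weight1}) yields, with probability at least $1-k\beta$ over the at most $k$ sampling steps, that $\wtPI(\mR_{t+1}) \in (1\pm\gamma)\frac{\wtPI(\mR_t)}{d(\mR_t)}s_{t+1}$ at every step, and therefore
\[
\wtPI(\mR_k) \ \ge\ (1-\gamma)^{k-i}\cdot\frac{\wtPI(\vI)\cdot s_{i+1}\cdots s_k}{d(\mR_i)\cdots d(\mR_{k-1})}\,.
\]
Since $|\mR_k| = c_k(\mR_k) \ge \wtPI(\mR_k)$ (each ordered $k$-clique has weight at most $1$, which equals the number of $k$-cliques it participates in), multiplying through by $\frac{d(\mR_i)\cdots d(\mR_{k-1})}{s_{i+1}\cdots s_k}$ gives $\hck(\vI) \ge (1-\gamma)^{k-i}\,\wtPI(\vI) \ge (1-\gamma)^{k-i}(1-k\gamma)\,\tauu_i > \tauu_i/4$ by the minuscule choice of $\gamma$. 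Thus $\chi_\ell = 1$ forces some of the at most $k$ weight-typicality events to fail, so $\Pr[\chi_\ell = 1] \le k\beta = 1/6 < 1/3$, which is the per-iteration bound; the Chernoff bound over the $q$ iterations completes the proof.

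The main obstacle is precisely getting the \emph{lower} bound on $\hck(\vI)$: the natural weight function $c_k$ is not bounded by $\tauu_{t+1}$ (a one-vertex extension of $\vI$ may itself be sociable), so Claim~\ref{clm:sampleset}(\ref{item:sampleset-weight1}) cannot be applied to $c_k$ directly. Replacing $c_k$ by the capped legal weight function $\wtPI$ (with $\mA^*$ the set of all non-sociable ordered cliques) restores boundedness, and Lemma~\ref{lem:almost-good-P} guarantees that this capping discards only a $(k\gamma)$-fraction of $c_k(\vI)$ — which is exactly what the $k^{4k}/(\beta^k\gamma^2)$ slack in the sociability thresholds buys. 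A secondary point, handled verbatim as in the \Approx\ analysis, is verifying that the precondition $\twt_t \le \wtPI(\mR_t)$ of Claim~\ref{clm:sampleset}(\ref{item:sampleset-weight1}) is maintained inductively, which holds because the $\twt_t$ recursion tracks a $(1-\gamma)$-deflated copy of the quantity around which $\wtPI(\mR_t)$ concentrates, starting from $\twt_i = (1-\eps/2)\tauu_i < \wtPI(\vI)$. Note that, unlike in the non-sociable case, the hypothesis $\tm \ge m/2$ is not actually needed here: an early exit only helps us.
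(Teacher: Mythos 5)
Your proof is correct and follows essentially the same strategy as the paper's: introduce the auxiliary legal weight function $\wtPI$ induced by the set $\mA^*$ of \emph{all} non-sociable ordered cliques, invoke Lemma~\ref{lem:almost-good-P} to lower-bound $\wtPI(\vI)$ by $(1-k\gamma)c_k(\vI)$, and then run the concentration induction of Claim~\ref{clm:sampleset}(\ref{item:sampleset-weight1}) along $\mR_i,\dots,\mR_k$ to conclude $\hck(\vI) > \tauu_i/4$. Your aside that the hypothesis $\tm \ge m/2$ is not actually used in the sociable direction is a small (correct) refinement the paper does not make explicit.
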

\begin{proof}
Consider a single iteration of the for loop on $\ell$. We shall show that
$\chi_\ell = 0$ with probability at least $2/3$. The claim then follows by applying a multiplicative Chernoff bound
on the $\chi_\ell$'s.
	If for any $t\in[i,k-1]$, the procedure %returns \textsf{Non-Active} at
sets $\chi_\ell = 0$ in Step~\ref{step:costly}, then we are done.
Therefore we prove that with probability a least $2/3$,
 if the procedure reaches Step~\ref{step:return}, then it
 sets $\chi_\ell = 0$.
 % returns \textsf{Non-Active.}

 We next define a weight function over ordered cliques (based on cliques that $\vI$ participates in).
%  Let $\mP^*=\{\mA_t^{\mP^*}, \mN_t^{\mP^*}\}_{t=1}^{k}$ be the CPC defined by setting, for every $t \in [k]$,
% the part $\mA_t^{\mP^*}$ to consist of all non-sociable ordered $t$-cliques.
For each $t\in [k]$, let $\mA^*_t \subset \mOC_t$ consist of all non-sociable ordered $t$-cliques, and let
\tnote{$\mA^* = \bigcup_{t=1}^k \mA^*_t$}.
 We let  $\wtPI$ be as defined in Definition~\ref{def:wgt-I} with respect to $\vI$ and $A^*$.
 Observe
 \iffalse
 % \mnote{D: gives weight to $T$s for which $I$ is an extension of. Change?}
 \begin{equation}
 \wtPI(\vT) = \sum_{\vC\in \mOC_k(\vT)} \asIs(\vC)\;,
 \end{equation}
 where the association function $\asIs$ is as defined
 in Subsection~\ref{subsec:sociable}, based on $\mA^*$.
 (We note that it is possible to define the weight function $\wtPI$ similarly to the way $\wtAI$ was defined in
 Definition~\ref{def:wgt-I},
% and~\ref{def:t-wgt},
but the above is simpler given the association function.) \mnote{T:NOT CLEAR. Should the weight function be with $\mA^*$ rather than $\mA$?}
 Observe that $\wtPI$ is a legal weight function (as defined in Definition~\ref{def:legal-weight}),
 and
 \fi
 that $\wtPI$ assigns bounded weights to ordered $t$-cliques for $t>i$.
 Namely, $\wtPI(\vT) \leq \tau_t$ for every ordered $t$-clique $\vT$ such that $t>i$.
 Also note that by Lemma~\ref{lem:almost-good-P} and the setting of $\gamma$,
 \begin{equation}\label{eqn:wtPII}
 \wtPI(\vI) \geq (1-\eps/2)c_k(\vI)\;.
 \end{equation}

 We next prove that with probability at least $1-k\cdot \beta$,
 for every $t \in \{i,\dots,k-1\}$, % with probability at least $1-(t-i+1)\beta$
 $\twt_{t} \leq \wtPI(\mR_{t})$ for $\twt$ as defined in Step~\ref{step:tnk_mRt} and that
 the sample $\mR_{t+1}$ satisfies:
 $\wtPI(\mR_{t+1}) \geq (1-\gamma) \cdot \frac{\wtPI(\mR_{t})}{d(\mR_{t})}\cdot s_{t+1}$.
 We prove the claim by induction. Namely, we prove that these bounds hold for $t=i$
 with probability at least $1-\beta$, and then prove that for each $t>i$, the bounds hold
 with probability at least $1-\beta$ conditioned on them holding for $t-1$.
 % That is, we prove that  for every $t \in \{i,\dots,k-1\}$, with probability at least $1-(t-i+1)\beta$,
% $\twt_{t} \leq \wtPI(\mR_{t})$ and
% the sample $\mR_{t+1}$ satisfies:
% $\wtPI(\mR_{t+1}) \geq (1-\gamma) \cdot \frac{\wtPI(\mR_{t})}{d(\mR_{t})}\cdot s_{t+1}$.

 For the base of the induction ($t=i)$ we recall that
 since $\vI$ is sociable, $c_k(\vI) \geq \tauu_i$.
 By the definition of $\twt_i$ and Equation~\eqref{eqn:wtPII} we have that
 \begin{equation}
 \twt_i = (1-\eps/2)\tauu_i \leq (1-\eps/2)c_k(\vI) \leq \wtPI(\vI) = \wtPI(\mR_i)\;.
 \end{equation}
We can therefore apply Item~\ref{item:sampleset-weight1} %the first part of the first item
in Claim~\ref{clm:sampleset} to $t=i$ and obtain
that $\wtPI(\mR_{i+1}) \geq (1- \gamma) \cdot \frac{\wtPI(\mR_{i})}{d(\mR_{i})}\cdot s_{i+1}$
 with probability at least $1-\beta$.
 The induction step follows from the setting of $\twt_t$, the induction hypothesis and
 an application of %the first part of the first item in
 Item~\ref{item:sampleset-weight1} in
 Claim~\ref{clm:sampleset}.

It follows that with probability at least $1-k\cdot \beta\geq 2/3$,
\begin{equation}\label{eqn:wtPstar_I}
\wtPI(\mR_k) \geq (1-\gamma)^{k-i} \cdot \wtPI(\vI) \cdot
\frac{s_{i+1}\cdot \ldots \cdot s_k}{ d(\mR_i) \cdot \ldots \cdot d(\mR_{k-1})}\;.
\end{equation}
Clearly, $|\mR_k| \geq \wtPI(\mR_k)$, so by	 Equations~\eqref{eqn:wtPII} and~\eqref{eqn:wtPstar_I},
the setting of $\hck(\vI)$ in Step~\ref{step:st_thres}, the setting of $\gamma$, and the assumption that $\eps \leq 1/2$, $\hck(\vI) > c_k(\vI)/4 \geq \tauu_i/4$,
	and the procedure sets $\chi_\ell = 0$. %\mnote{T: verify setting of $\gamma$ D: sufficient here that $\gamma < 1/(2k) or so, so fine}
\end{proof}

%Lemma~\ref{lem:IsActive} follows from Claims~\ref{clm:IsActive-nonsociable} and~\ref{clm:IsActive-sociable}, together with the union bound.

%We next analyze the complexity of \IsActive.
We are now ready to prove our main lemma regarding the correctness and complexity of \IsActive.

\begin{lemma}\label{lem:IsActive}
	Consider running \IsActive\ on all ordered cliques of size $t\in [k-1]$
	with $\vec{\tau}$ set to $\vtauu$ (as defined in Definition~\ref{def:tau_i})
	and let %$\mP$ be the resulting CPC.
	$\mA$ be the subset of ordered cliques on which it returns {\sf Active}.
	%With probability at least $1-\delta$, for every $t \in [k-1]$, the part $\mA_t^{\mP}$ contains all non-sociable ordered $t$-cliques
	%that are not costly with respect to $\tnk$, and the part $\mN_t^{\mP}$ contains all sociable ordered $t$-cliques.
	If %$\tnk \leq n_k$
	$\tnk \in [n_k/4,n_k]$ and $\tm \geq m/2$,
	then %$\mP$ is an $(\eps, \vtauu)$-good CPC
	$\mA$ is $(\eps, \vtauu)$-good
	with probability at least $1-\delta$.
	If $\tnk> n_k$, then %$\mP$
	$\mA$ is $\vtauu$-bounded with probability at least $1-\delta$.
	
	The query complexity and running time of a single invocation of \IsActive\ are
	\[
	O\left(\frac{\tm}{\tnk}\cdot \sum_{t=2}^{k} (\alpha^{t-2}\cdot \tauu_t) \right)\cdot
	\frac{k^{6k} \cdot  \log^2(n/\delta)}{\eps^3}\;.
	\]
	
\end{lemma}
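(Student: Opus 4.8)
The plan is to combine the two per-clique analyses (Claims~\ref{clm:IsActive-nonsociable} and~\ref{clm:IsActive-sociable}) with the structural results of Section~\ref{subsec:sociable} and~\ref{subsec:costly} via a union bound, and then separately bound the running time. First I would argue correctness. Let $\mA$ be the set of ordered cliques on which \IsActive\ returns \textsf{Active}. Recall that \IsActive\ is run on all ordered cliques of size in $[k-1]$; there are at most $n^{k-1}$ such cliques. By Claim~\ref{clm:IsActive-nonsociable}, for every ordered $i$-clique $\vI$ that is non-sociable and not costly with respect to $\tnk$, the procedure returns \textsf{Active} except with probability $\delta/n^k$; by Claim~\ref{clm:IsActive-sociable}, for every sociable ordered $i$-clique $\vI$ the procedure returns \textsf{Non-Active} except with probability $\delta/n^k$. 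Taking a union bound over all (at most $n^{k-1} < n^k$) ordered cliques of size $[k-1]$, with probability at least $1-\delta$ \emph{all} of these events hold simultaneously. Condition on this. Then every non-sociable, non-costly ordered clique of size in $[k-1]$ lies in $\mA$ (and every $k$-clique, which is always non-sociable, we include in $\mA_k$ trivially — or handle it directly in \wgt). So $\mA_t$ contains all non-sociable ordered $t$-cliques that are not costly with respect to $\tnk$, which is exactly the hypothesis of Claim~\ref{clm:sufficient-conditions-for-good-CPC}.

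Next I would establish the two conclusions. For boundedness (condition 1 of Definition~\ref{def:goodCPC}): since $\mA$ contains only cliques on which \IsActive\ returned \textsf{Active}, and — conditioning on the good event — no sociable clique is in $\mA$, every $\vT \in \mA$ of size $t\le k-1$ satisfies $c_k(\vT) \le \tauu_t$, and therefore $\wtP(\vT) \le c_k(\vT) \le \tauu_t$ by Definition~\ref{def:wgt-I}; for $t=k$, $\wtP(\vC)\in\{0,1\}=\tauu_k$. Hence $\mA$ is $\vtauu$-bounded, which gives the second assertion of the lemma ($\tnk > n_k$ case) immediately. For the first assertion ($\tnk\in[n_k/4,n_k]$), I invoke Claim~\ref{clm:sufficient-conditions-for-good-CPC}: since $\tnk \ge n_k/4$, we get $\wtP(V) \ge (1-\eps/4)n_k - (\eps/4)\tnk \ge (1-\eps/4)n_k - (\eps/4)n_k = (1-\eps/2)n_k$, which is condition 2 of Definition~\ref{def:goodCPC}. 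Combined with boundedness, $\mA$ is $(\eps,\vtauu)$-good with probability at least $1-\delta$.

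For the complexity, I would reuse the running-time accounting already carried out for \Approx\ (Equation~\eqref{eqn:time-ub1} and the derivation in the proof of Theorem~\ref{thm:approx-weight}, together with Claim~\ref{clm:sampleset}'s $O(|\mR_t| + t\, s_{t+1})$ bound per call to \SampleSet). A single inner iteration $\ell$ of \IsActive\ runs \SampleSet\ for $t$ from $i$ to $k-1$ starting from $\mR_i=\{\vI\}$; each $s_{t+1}$ is \emph{explicitly capped} by the test in Step~\ref{step:costly} (the procedure sets $\chi_\ell=0$ and moves on whenever $s_{t+1}$ would exceed $\overline{s}_{t+1} = \frac{2\tm\alpha^{t-1}\cdot\tauu_{t+1}}{\tnk}\cdot\frac{12\ln(1/\beta)}{\beta^k\gamma^3}$), so the per-iteration cost is $O\big(\sum_{t=i}^{k-1} t\cdot\overline{s}_{t+1}\big) = O\big(\frac{\tm}{\tnk}\sum_{t=2}^{k}\alpha^{t-2}\tauu_t\big)\cdot \frac{k^{O(k)}\log(1/\beta)}{\gamma^3}$ (using $\beta=1/(6k)$, $\gamma=\eps/(8k\cdot k!)$, so $\beta^{-k}\gamma^{-3} = k^{O(k)}/\eps^3$). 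There are $q = 12\ln(n^k/\delta) = O(k\log(n/\delta))$ such iterations, contributing one more factor $O(k\log(n/\delta))$; absorbing everything into $k^{6k}\log^2(n/\delta)/\eps^3$ gives the claimed bound $O\big(\frac{\tm}{\tnk}\sum_{t=2}^{k}(\alpha^{t-2}\tauu_t)\big)\cdot\frac{k^{6k}\log^2(n/\delta)}{\eps^3}$. The main obstacle — and the one point requiring care rather than bookkeeping — is verifying that the $s_{t+1}$-cap in Step~\ref{step:costly} is truly never a bottleneck larger than $\overline{s}_{t+1}$ itself, i.e. that the abort test is what controls the complexity regardless of whether $\vI$ is costly, sociable, or neither; this is immediate from the cap being a hard stop, but one must make sure no step between a potential abort and the previous \SampleSet\ call is more expensive than $\overline{s}_{t+1}$, which follows since Steps~\ref{step:is_act_d}–\ref{step:tnk_mRt} cost only $O(|\mR_t|) \le O(s_t) \le O(\overline{s}_t)$.
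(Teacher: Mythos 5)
Your proof is correct and follows essentially the same route as the paper: apply Claims~\ref{clm:IsActive-nonsociable} and~\ref{clm:IsActive-sociable} with a union bound over all $\le n^{k-1}$ ordered cliques of size $\le k-1$, condition on the good event, derive boundedness directly from the absence of sociable cliques in $\mA$ (so $\wtP(\vT)\le c_k(\vT)\le\tauu_t$) and the lower bound on $\wtP(V)$ from Claim~\ref{clm:sufficient-conditions-for-good-CPC}, and read off the running time from the hard cap $\overline{s}_{t+1}$ in Step~\ref{step:costly} together with the per-iteration cost $O(|\mR_t|+t\,s_{t+1})$ of \SampleSet\ and the outer loop count $q$. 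The paper's own proof is nearly identical, down to the same union bound, the same invocation of Claim~\ref{clm:sufficient-conditions-for-good-CPC}, and the same accounting $O(\log(n^k/\delta)\cdot\sum_{t=i+1}^k t\cdot\overline{s}_t)$. Your one extra observation—that ordered $k$-cliques must be treated as trivially active since \IsActive\ is only run on cliques of size $\le k-1$—is a real wrinkle in the paper's formalism that you are right to flag, and your resolution (every $k$-clique is non-sociable and not costly by definition) is the correct one.
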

\begin{proof}
	For $\mA$ as defined in the lemma, let $\mA_t = \mA \cap \mOC_t$.
	By Claims~\ref{clm:IsActive-nonsociable} and~\ref{clm:IsActive-sociable},
	together with the union bound (over all $t$-cliques in the graph for every $t\in [k-1]$),
	with probability at least $1-\delta$: (1) for every $t \in [k-1]$, the subset %part $\mA_t^{\mP}$
	$\mA_t$ contains all non-sociable ordered $t$-cliques	 that are not costly with respect to $\tnk$, and
	(2) for every $t \in [k-1]$, the subset %part $\mA_t^{\mP}$
	$\mA_t$ does not contain any sociable ordered $t$-clique.
	%the part $\mN_t^{\mP}$ contains all sociable ordered $t$-cliques.
	We condition on both events.
	By (1) and Claim~\ref{clm:sufficient-conditions-for-good-CPC}, if $\tnk\in [n_k/4, n_k]$,
	then $\wtP(V)> (1-\eps/2)n_k$. By (2),	for every $t\in[k-1]$ and an ordered  $t$-clique $T$, $\wtP(T)< \tauu_t$.
	Hence, if $\tnk\in [n_k/4, n_k]$, then
	% $\mP$ %satisfies both conditions of an
	%is an $(\eps, \vtauu)$-good CPC,
	$\mA$ is $(\eps, \vtauu)$-good,
	and if $\tnk > n_k$, then it is $\vtauu$-bounded.

We now turn to analyze the complexity of the procedure. Consider an invocation of \IsActive\ on some  ordered $i$-clique.
Let $\overline{s}_{t+1}=\ost$ denote the threshold defined in Step~\ref{step:costly} of the procedure.
By Claim~\ref{clm:sampleset}, the query complexity and running time of each invocation of \SampleSet\ with parameters $(t,\mR_t,s_{t+1})$  is $O(|\mR_t|+ t\cdot s_{t+1})$. Also, the query complexity and running time of computing $d(\mR_t)$ in Step~\ref{step:is_act_d} is $O(|\mR_t|)$.
 Since for every $t \in [i+1,k]$, we can bound $|\mR_t| \leq s_t$, it follows that
the query complexity and running time of \IsActive\ are upper bounded by
\begin{equation}
O\left(\log(n^k/\delta)\cdot \sum_{t=i+1}^{k}t \cdot \overline{s}_{t}\right)
%=O\left(\frac{\tm}{\tnk}\cdot \sum_{t=i+1}^{k} (\alpha^{t-2}\cdot \tauu_t) \right)\cdot
%\frac{k^k \cdot  \log^2(n/\delta)}{\delta^k \cdot \eps^3}
\;,
\end{equation}
and the claim follows by the setting of $\beta$ and $\gamma$ in Step~\ref{step:setbeta} of the procedure.
\end{proof}

\section{Finalizing}\label{sec:final}

%The following short procedure sets the required parameters for invoking \Approx.
In this section we prove Theorem~\ref{thm:main} (restated as Corollary~\ref{cor:main}).
We first combine the (oracle-aided) algorithm \Approx\ with calls to \IsActive\ (instead of the oracle).
Since both \Approx\ and \IsActive\ need a constant-factor estimate $\tm$ of $m$, the combined algorithm first obtains such an
estimate. This is done by calling the moments-estimation algorithm of~\cite{ERS17} (for the first moment).
%, which given an estimate of the number of edges.
The~\cite{ERS17} algorithm is designed to work for bounded-arboricity graphs.

\begin{figure}[htb!]
	\fbox{
		\begin{minipage}{0.95\textwidth}
			{\bf \Main}($n,k,\alpha,\eps,\delta, \tnk$) \label{alg:main}
			\smallskip
			\begin{compactenum}
				\item Call the~\cite{ERS17} algorithm $s=\Theta(\log(2n^2/\delta))$ times with parameters $n,\alpha$ and $\eps=1/2$ to get $s$ independent estimates of $m$, and let $\tm$ be the median of the returned values.
				\item Set $\vec{\tau}=\vtauu$ as defined in Definition~\ref{def:tau_i}.
%				 $\op=$~\IsActive$()$, $\delta=1/6$.
                \item For $j = 1$ to $q = \Theta(\log(1/\delta))$ do:
                   \begin{compactenum}
                    \item Invoke \Approx$(n,k,\alpha,\eps,1/6,\tnk,\tm,\vec{\tau},\op)$ where each call to $\op$ on an ordered $i$-clique $\vI$
                     is replaced by an invocation of \IsActive$(i,\vI,k,\alpha,\eps,\delta/4,\tnk,\tm, \vec{\tau})$. (If there is more than
                     one call to $\op$ with the same $\vI$, then the output of the first invocation of \IsActive\ is used.)
                     \item Let $\chi_j$ be the returned value.
				   \end{compactenum}
 				\item Let $\hnk$ be the median value of $\chi_1,\dots,\chi_j$ and return $\hnk$.
			\end{compactenum}
			\vspace{3pt}
		\end{minipage}
	}
\end{figure}

%We are now ready to prove our main theorem.

% \mnote{lemma 4.8 good CPC with proof another clm with replacing the oracle with \IsActive. Fix proof of 3.11}

\begin{lemma}\label{lem:Main}
	Consider an invocation of  \Main\ with query access to a graph $G$
	and parameters $n,k,\alpha, \eps, \delta$ and $\tnk$. Then the following holds.
%	with $n$ vertices and $m$ edges and an approximation parameter $\eps$
	\begin{itemize}
	\item If $\tnk \in [n_k/4, n_k]$, then \Main\
	returns a value $\hnk$ such that with probability at least $1-\delta$, $\hnk$ is a  $(1\pm\eps)$-approximation of $n_k(G)$.
	\item If $\tnk > n_k$, then \Main\
	returns a value $\hnk$ such that with probability at least $1-\delta$, $\hnk< \tnk$.
	\item The expected running time and query complexity of the algorithm are
		\[O\left(\min\left\{ \frac{n \alpha^{k-1}}{\tnk},\; \frac{n}{n_k^{1/k}} \right\} +
	\frac{m \cdot \alpha^{k-2}}{\tnk}\cdot	\frac{n_k}{\tnk}\right)\cdot
	\poly(\log (n/\delta), 1/\epsilon, k^k).
	\]
	\end{itemize}
	%	Moreover, the probability that the running time and query complexity exceed the above by a factor of $r$ is $1/2^r.$
\end{lemma}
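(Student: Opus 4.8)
The plan is to obtain Lemma~\ref{lem:Main} by gluing together the guarantee of the oracle-based algorithm \Approx\ (Theorem~\ref{thm:approx-weight}) with the guarantee of the procedure \IsActive\ that implements the oracle (Lemma~\ref{lem:IsActive}), and then boosting the success probability via the final median over $q=\Theta(\log(1/\delta))$ repetitions. Concretely I would proceed in four steps: (a) argue that the median of $s=\Theta(\log(2n^2/\delta))$ calls to the \cite{ERS17} moments algorithm yields, with probability at least $1-\delta/(2n^2)$, an estimate $\tm\in[m/2,2m]$, and condition on $\tm\ge m/2$ henceforth; (b) show that one loop iteration of \Main, in which the calls to $\op$ inside \Approx\ are serviced by \IsActive, is equivalent to running \Approx\ against a genuine (random) oracle for a subset $\mA$ of active ordered cliques that is \pgood\ (when $\tnk\in[n_k/4,n_k]$) or $\vtauu$-bounded (when $\tnk>n_k$), so that Theorem~\ref{thm:approx-weight} applies verbatim; (c) combine the two failure probabilities and amplify; (d) plug the per-call cost of \IsActive\ from Lemma~\ref{lem:IsActive} into the running-time bound of Theorem~\ref{thm:approx-weight} and verify, using the explicit thresholds of Definition~\ref{def:tau_i} and the inequality $n_k\le O(m\alpha^{k-2})$ from Corollary~\ref{cor:nk_vs_nt}, that everything collapses into the claimed expression.

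For step (b), the subtlety is that \IsActive\ is randomized and the set of ordered cliques on which \Approx\ queries $\op$ is itself determined by \Approx's random sampling. I would handle this by the standard ``imagine querying everything'' coupling: fix iteration $j$ and consider the conceptual random string $\Phi_j$ that would be consumed by running \IsActive, with independent fresh randomness, on every ordered clique of size in $[k-1]$; this defines a deterministic function $\mA(\Phi_j)$ (the set on which \IsActive\ answers \textsf{Active}), and because \Main\ reuses the first answer for repeated arguments, the answers received by \Approx\ are exactly the restriction of $\mathbf 1[\,\cdot\in\mA(\Phi_j)]$ to whichever cliques \Approx\ happens to query. By Lemma~\ref{lem:IsActive} invoked with confidence $\delta/4$, $\mA(\Phi_j)$ is \pgood\ (resp.\ $\vtauu$-bounded) with probability at least $1-\delta/4$; conditioning on this event and on $\Phi_j$, the run of \Approx\ depends only on its own fresh randomness $\Psi_j$, which is independent of $\Phi_j$, so Theorem~\ref{thm:approx-weight} (with \Approx's own confidence parameter set to $1/6$) yields that the value $\chi_j$ returned in that iteration is a $(1\pm\eps)$-approximation of $n_k$ (resp.\ satisfies $\chi_j\le(1+\eps)\tnk$) with probability at least $5/6$ over $\Psi_j$. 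Hence $\chi_j$ is ``good'' with probability at least $(1-\delta/4)\cdot\frac56\ge\frac23$ (assuming w.l.o.g.\ $\delta\le\frac13$), and the $q$ iterations are mutually independent. A median-of-$\Theta(\log(1/\delta))$ argument then gives that with probability at least $1-\delta/2$ the returned $\hnk$ is good, and a union bound with the $\tm$-event proves bullets~1 and~2. (For bullet~2, Theorem~\ref{thm:approx-weight} only delivers $\hnk\le(1+\eps)\tnk$ rather than the literal $\hnk<\tnk$; this is already what the geometric search over $\tnk$ needs in order to reject an over-guess, and if one insists on the literal bound one passes $\eps/2$ to \Approx\ and additionally uses that the legal weight never exceeds $n_k<\tnk$, cf.\ Fact~\ref{fact:wtP-ub}.)

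For step (d), substitute $\rho(\op),q(\op)=O^*\big(\frac{\tm}{\tnk}\sum_{t=2}^{k}\alpha^{t-2}\tauu_t\big)$ (Lemma~\ref{lem:IsActive}; the $\poly(\log(n/\delta),1/\eps,k^k)$ factors are absorbed into $O^*(\cdot)$) into the running-time bound of Theorem~\ref{thm:approx-weight}, where \Approx's confidence $1/6$ makes all $\delta$-dependent factors there $O^*(1)$. By Definition~\ref{def:tau_i}, $\tauu_t=O^*(\alpha^{k-t})$ for $t\in[2,k-1]$ and $\tauu_k=1$, so $\alpha^{t-2}\tauu_t=O^*(\alpha^{k-2})$ and $\sum_{t}\alpha^{t-2}\tauu_t=O^*(\alpha^{k-2})$; likewise $\tauu_1=O^*(\min\{\alpha^{k-1},\tnk^{(k-1)/k}\})$. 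The three terms of Theorem~\ref{thm:approx-weight} then become, respectively, $O^*(n\tauu_1/\tnk)=O^*\big(\min\{n\alpha^{k-1}/\tnk,\;n/\tnk^{1/k}\}\big)$, $O^*(m\alpha^{k-2}/\tnk)$, and $O^*\big(\frac{n_k}{\tnk}\cdot\tauu_k\cdot\frac{m\alpha^{k-2}}{\tnk}\big)=O^*\big(\frac{m\alpha^{k-2}}{\tnk}\cdot\frac{n_k}{\tnk}\big)$. Using $\tnk\in[n_k/4,n_k]$ we have $n/\tnk^{1/k}=O(n/n_k^{1/k})$, so the first term matches the claimed one, and $m\alpha^{k-2}/\tnk\le(m\alpha^{k-2}/\tnk)(n_k/\tnk)$ since $n_k/\tnk\ge 1$, so the middle term is absorbed into the claimed second term. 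Finally the cost of step~(a), which by~\cite{ERS17} for bounded-arboricity graphs is within $O^*\big(\min\{n\alpha^{k-1}/\tnk,\,n/n_k^{1/k}\}\big)$ --- using $n_k\le O(m\alpha^{k-2})$ (Corollary~\ref{cor:nk_vs_nt}) and $\alpha\le\sqrt m$ --- is absorbed as well. Multiplying by $q=O(\log(1/\delta))$ and noting that the rare bad events ($\tm$ out of range, or an abort inside \Approx/\IsActive) contribute negligibly to the expectation because the worst-case running time is $\poly(n)$-bounded yields the stated expected running time; the query-complexity bound is identical with $q(\op)$ in place of $\rho(\op)$.

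The step I expect to be the main obstacle is (b): making rigorous that replacing the idealized oracle by the randomized, adaptively-invoked procedure \IsActive\ does not break the analysis --- i.e.\ that the ``run \IsActive\ on all cliques / reuse the first answer'' coupling is sound, that the relevant pieces of randomness decouple, and that one is therefore entitled to apply Theorem~\ref{thm:approx-weight} unchanged. The remaining work is essentially bookkeeping, but it is delicate: one must track the $\poly(k^k,1/\eps,\log(n/\delta))$ factors through Theorem~\ref{thm:approx-weight} and Lemma~\ref{lem:IsActive}, carry out the various $\tnk$-versus-$n_k$ substitutions, check that the middle term of Theorem~\ref{thm:approx-weight} and the $m$-estimation cost are genuinely dominated, and make sure the handling of \Approx's abort steps (which fire only with small probability once $\tm\ge m/2$) does not affect the median.
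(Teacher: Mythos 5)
Your proof follows essentially the same route as the paper: you first secure $\tm$ via a median of calls to the \cite{ERS17} estimator, then run the ``imagine invoking \IsActive\ on every ordered clique'' coupling (the paper calls this a mental experiment) to reduce to Theorem~\ref{thm:approx-weight} with a fixed $(\eps,\vtauu)$-good or $\vtauu$-bounded $\mA$, boost via the outer median, and substitute Lemma~\ref{lem:IsActive}'s bound for $\rho(\op),q(\op)$ together with Definition~\ref{def:tau_i}'s thresholds to get the running time. The only place you depart is in minor bookkeeping: you absorb the middle term $O^*(m\alpha^{k-2}/\tnk)$ into the third term via $n_k/\tnk\ge 1$ (which presumes $\tnk\le n_k$), whereas the paper leaves this absorption implicit; and you note explicitly (correctly) that the ``$\hnk<\tnk$'' phrasing of the second bullet is really the $\hnk\le(1+\eps)\tnk$ guarantee of Theorem~\ref{thm:approx-weight}, which is what the search reduction consumes — the paper states the same caveat in the proof of Corollary~\ref{cor:main}. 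Neither of these points changes the substance; the argument is sound and aligned with the paper's.
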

\begin{proof}
	By~\cite{ERS17}, each invocation of their algorithm with parameters $n,\alpha$ and $\eps=1/2$, returns a factor-$2$ approximation of $m$, with probability at least $2/3$.
%	Furthermore, it can be shown that the probability that the algorithm of~\cite{ERS17} returns a value larger than $r\cdot m$ grows with $1/2^{r}$.
	Hence, with probability at least $1-\delta/2$, $\tm \in [m/2,2m]$.
	Condition on this event.
%	We first prove that if  $\tnk\in [n_k/4, n_k]$, then with probability at least $2/3$, $\hnk\in (1\pm \eps)n_k$, and if $\tnk>n_k$ then with probability at least $2/3$, $\hnk \leq \tnk$.

	Consider (as a mental experiment) invoking \IsActive$(i,\vI,k,\alpha,\eps,\delta/4,\tnk,\tm, \vec{\tau})$
     on all ordered cliques $\vI$ of size $i\in [k-1]$
	with $\tm$ as computed above, $\vec{\tau}$ set to $\vtauu$ (and $\alpha$, $\eps$, and $\tnk$ as provided to \Main).
 % (as defined in Definition~\ref{def:tau_i})
	Let	$\mA$ be the subset of ordered cliques on which \IsActive\ returns {\sf Active}.
	By Lemma~\ref{lem:IsActive}, if $\tnk\in [n_k/4, n_k]$, then with probability at least $1-\delta/4$,
% the subset of cliques for which the algorithm \IsActive\  replies \textsf{Active}
 $\mA$ is $(\eps,\vtauu)$-good and if $\tnk > n_k$, then with probability at least $1-\delta/4$,
    % the subset of cliques for which the algorithm \IsActive\ replies \textsf{Active} is
    $\mA$ is 	$\vtauu$-bounded. Condition on this event as well.
	
		If $\tnk\in [n_k/4, n_k]$, then by Theorem~\ref{thm:approx-weight}, each invocation of \Approx\ returns a value that is
 in the interval $(1\pm\eps)n_k$ with probability at least $5/6$.
	It follows that  with probability at least $1-\delta/4$, the median of the returned values, $\hnk$, is in $(1\pm\eps)n_k$.
 The first item of the lemma follows by taking a union bound over the event that $\tm \notin [m/2,2m]$,
 the event that $\mA$ is not $(\eps,\vtauu)$-good (conditioned on $\tm \geq m/2$), and the event that
$\hnk \notin (1\pm\eps)n_k$ (conditioned on $\tm \geq m/2$ and $\mA$ being $(\eps,\vtauu)$-good).
The second item in the lemma follows similarly from the case $\tnk > n_k$ in Theorem~\ref{thm:approx-weight}.
	
%	If $\tnk > n_k$, then by %the above and
%    Lemma~\ref{lem:IsActive}, with probability at least $1-\delta/4$,
%    % the subset of cliques for which the algorithm \IsActive\ replies \textsf{Active} is
%    $\mA$ is 	$\vtauu$-bounded.
%    The second claim in the Lemma then follows similarly to the first claim.
	
	It remains to analyze the complexity of \Main.
	By Theorem~\ref{thm:approx-weight}, the running time of \Approx\ when invoked with parameters $n,k,\alpha,\eps, \delta'=1/6,\tm,\tnk,\vec{\tau}$ and an oracle $\op$ is
	\[O\left(\frac{n \tau_1}{\tnk} + \frac{k^{3k}}{(\delta')^{2k}}\cdot \frac{\tm}{\tnk}\cdot
	\sum_{t=2}^{k-1} (\alpha^{t-2}\cdot \tau_t)
	+  \frac{k^{3k}}{(\delta')^{k}}\cdot   \frac{n_k}{\tnk}\cdot  \tau_k \cdot \rho(\op)
	\right)\cdot \frac{3^k\cdot \ln(k/\delta')}{\eps^2}\;, \numberthis \label{eqn:Approx}
	\]
	and the upper bound on the query complexity of \Approx\ is obtained by exchanging $\rho(\op)$ with $q(\op)$.
	
	Recall that we replace each call to the oracle $\op$ on an ordered $i$-clique $\vI$ with an invocation of the procedure \IsActive\ with parameters $i, \vI, k,\alpha, \delta/4,\tnk, \tm$ and $\vec{\tau}=\vtauu$. By applying  Lemma~\ref{lem:IsActive} we get that
	\[q(\op)\leq \rho(\op)=O\left(\frac{\tm}{\tnk}\cdot \sum_{t=2}^{k} (\alpha^{t-2}\cdot \tauu_t) \right)\cdot \frac{(6k)^{6k} \cdot  \log^2(n/\delta)}{\eps^3}\;.
	\numberthis\label{eqn:oracle}
	\]
	
%	We now turn to analyze the complexity resulting from  the term $\tm$ in Equation~\eqref{eqn:Approx}.

	It follows from Equation~\eqref{eqn:Approx}, Equation~\eqref{eqn:oracle} and the
	 setting of $\vtauu$ in Definition~\ref{def:tau_i},
	that the query complexity and running time resulting from all invocations of \Approx\ are
	\[O\left(\min\left\{ \frac{n \alpha^{k-1}}{\tnk},\; \frac{n}{n_k^{1/k}} \right\} +
	%\frac{k^{10k}}{\delta^{2k}}\cdot
	\frac{\tm \cdot \alpha^{k-2}}{\tnk}\cdot
	%+  \frac{k^{3k}}{\delta^{k}}\cdot
	\frac{n_k}{\tnk}  	\right)\cdot
%	\frac{3^k\cdot \ln(k/\delta)}{\eps^2}
	\poly(\log (n/\delta), 1/\epsilon, k^k). \numberthis
	\]
	
By~\cite{ERS17}, the expected query complexity and running time of each invocation of their algorithm is $O\left(\frac{n \alpha}{m}\right)\cdot (\log n\cdot \log \log n/\eps^3)$, which is negligible compared to the running time of \Approx. Furthermore, since $\tm > 2m$ with probability at most $1/n^2$, $\tm \leq n^2$ (or otherwise we will set $\tm=n^2$), and since the running time of \Approx\ grows linearly with $\tm$, this event does not affect the expected query complexity and running time of \Main.
	Therefore, the expected query complexity and running time of \Main\ are
	\[O\left(\min\left\{ \frac{n \alpha^{k-1}}{\tnk},\; \frac{n}{n_k^{1/k}} \right\} +
	%\frac{k^{10k}}{\delta^{2k}}\cdot
	\frac{m \cdot \alpha^{k-2}}{\tnk}\cdot
	%+  \frac{k^{3k}}{\delta^{k}}\cdot
	\frac{n_k}{\tnk}\cdot
	%\tau_k \cdot \rho(\op)
	\right)\cdot
	%	\frac{3^k\cdot \ln(k/\delta)}{\eps^2}
	\poly(\log (n/\delta), 1/\epsilon, k^k)\;, \numberthis
	\]
	as claimed.
%	and the probability that the running time will exceed this by a factor of $r$ is at most $1/2^r$.
%	Also, 	since we can always invoke exact counting algorithm of Chiba and Nisheziki~\cite{ChNi85} it holds that the maximal running time is $D=n + m\alpha^{k-2}$.
\end{proof}

It remains to alleviate the need for a (coarse) estimate $\tnk$ of $n_k$, and obtain the next corollary that restated Theorem~\ref{thm:main}.

\newcommand{\ov}{\overline{v}}
\begin{corollary}\label{cor:main}
There exists an algorithm that,  given query access to a graph $G$
and parameters $n,\alpha$ and  $\eps$, returns a value  $\hnk$ such that with probability at least $2/3$, $\hnk \in (1\pm \eps)$. The expected query complexity is running time of the algorithm are
\[O\left(\min\left\{ \frac{n \alpha^{k-1}}{n_k},\; \frac{n}{n_k^{1/k}} \right\} +
\min\left\{ \frac{m \cdot \alpha^{k-2}}{n_k}, m\right\}  	\right)\cdot
\poly(\log (n), 1/\epsilon, k^k).
\]
and the expected running time is
\[O\left(\min\left\{ \frac{n \alpha^{k-1}}{n_k},\; \frac{n}{n_k^{1/k}}  + \frac{m \cdot \alpha^{k-2}}{n_k} \right\}
 	\right)\cdot
\poly(\log (n), 1/\epsilon, k^k).
\]
\end{corollary}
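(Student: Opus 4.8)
The plan is to remove the need for the coarse estimate $\tnk$ by a standard geometric search over candidate values, calling \Main\ (Lemma~\ref{lem:Main}) as a black box; this is precisely the search framework of~\cite{ERS18}. Fix a trivial upper bound $U$ on $n_k$ (e.g.\ $U=n^k\ge n_k$) and, for $j=0,1,2,\dots$, set $g_j=U/2^j$. Processing the guesses in decreasing order of $g_j$ (i.e.\ increasing $j$), I run \Main$(n,k,\alpha,\eps,\delta_j,g_j)$ with a small confidence parameter $\delta_j=n^{-2k}$, obtaining an estimate $\widehat n^{(j)}$, and I stop at the first $j$ for which $\widehat n^{(j)}\ge g_j$, returning that estimate. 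If the search reaches a guess $g_j<1$ without ever stopping — which in particular always happens when $n_k=0$, since then no $k$-clique can ever be sampled and every $\widehat n^{(j)}$ equals $0$ — I return $0$.

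For correctness, the second item of Lemma~\ref{lem:Main} guarantees that whenever $g_j>n_k$ the returned estimate is strictly below $g_j$ with probability at least $1-\delta_j$; taking a union bound over the $O(k\log n)$ guesses, with probability $1-o(1)$ the search does not stop at any over-estimate. Hence (if $n_k\ge 1$) it stops at some $g_j\le n_k$. Since the guesses halve, the first guess that is at most $n_k$ lies in $(n_k/2,n_k]\subseteq[n_k/4,n_k]$, and — using $\eps<1/2$ — the next one lies in $(n_k/4,n_k/2]\subseteq[n_k/4,n_k]$; at either of these two iterations the first item of Lemma~\ref{lem:Main} gives $\widehat n^{(j)}\in(1\pm\eps)n_k$, and at the second one $g_j\le n_k/2<(1-\eps)n_k\le \widehat n^{(j)}$, so the stopping rule $\widehat n^{(j)}\ge g_j$ triggers no later than there. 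The returned value is therefore a $(1\pm\eps)$-approximation of $n_k$ (or exactly $0=n_k$ when $n_k=0$).

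For the complexity, conditioned on the $1-o(1)$ event above, every guess that is actually queried satisfies $g_j\ge n_k/4$, so by the third item of Lemma~\ref{lem:Main} the expected cost of that iteration is at most $(\min\{n\alpha^{k-1}/n_k,\; n/n_k^{1/k}\}+m\alpha^{k-2}/n_k)\cdot\poly(\log n,1/\eps,k^k)$. Summing over the $O(k\log n)$ iterations (folding that factor into the polynomial), and using $m\le n\alpha$ — so that the additive term $C:=m\alpha^{k-2}/n_k$ is at most $A:=n\alpha^{k-1}/n_k$ and hence $\min\{A,B\}+C=O(\min\{A,B+C\})$ with $B:=n/n_k^{1/k}$ — yields the claimed running-time bound $\min\{n\alpha^{k-1}/n_k,\; n/n_k^{1/k}+m\alpha^{k-2}/n_k\}\cdot\poly$. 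The complementary $o(1)$-probability event contributes negligibly, since \Main\ runs in $\poly(n)$ time in the worst case (its sample sizes are explicitly capped and it aborts otherwise), so this event adds at most $n^{-2k}\cdot\poly(n)=o(1)$ to the expectation; and the calls to the moments-estimation algorithm of~\cite{ERS17} inside \Main\ cost only $O(n\alpha/m)\cdot\poly(\log n)$ each and are absorbed. The query complexity is bounded by the same expression (with the query bound $q(\op)$ in place of $\rho(\op)$, which only helps); to get the sharper form with $\min\{m\alpha^{k-2}/n_k,m\}$ and the implicit $O(m+n)$ ceiling, I additionally halt the whole procedure and compute $n_k$ exactly from the graph once $\Theta(m+n)$ queries have been spent, after which a short case analysis finishes the bound (when $m\alpha^{k-2}/n_k>m$ one has $n_k<\alpha^{k-2}$, which forces $n/n_k^{1/k}$ together with $m$ to be $\Theta(m+n)$ up to a $\poly(k^k)$ factor). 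Choosing the $\delta_j$'s as above makes the overall failure probability $o(1)<1/3$.

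I expect the main obstacle to be getting the geometric-search bookkeeping exactly right rather than any single deep step: the stopping criterion must compare the estimate to the \emph{current} guess (not, say, to half of it), so that it is not triggered prematurely by a mild over-estimate, while one must still argue that the correct-scale estimate becomes available within one extra halving; and the complexity accounting must preserve the precise nested-$\min$ form of the bound, which hinges on keeping the capped term $n/n_k^{1/k}$ and on the inequality $m\le n\alpha$.
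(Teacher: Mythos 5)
Your re-derivation of the geometric search is a valid alternative to the paper's citation of the Search Theorem (Theorem~18 of~\cite{ERS18}) — it is more self-contained and sidesteps the ``caveat'' the paper must discuss — and your running-time accounting (including the algebraic identity $\min\{A,B\}+C = O(\min\{A,\,B+C\})$ using $C\leq A$ via $m\leq n\alpha$) is correct.

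However, there is a genuine gap in the query-complexity argument. You cap queries at $O(m+n)$ via a ``halt and compute exactly'' fallback, and justify this with the claim that whenever $m\alpha^{k-2}/n_k>m$ (equivalently $n_k<\alpha^{k-2}$), the target bound $n/n_k^{1/k}+m$ is within a $\poly(k^k)$ factor of $m+n$. This is \emph{false} once $\alpha$ is a loose upper bound on the true arboricity, which the theorem explicitly allows. Concretely, take $k=3$ and let $G$ be a $w$-clique together with $w^3-w$ isolated vertices, with the input bound $\alpha=w^4$ (the true arboricity is $\Theta(w)$). Then $n=w^3$, $m\approx w^2/2$, $n_3\approx w^3/6$, so $n_3<\alpha^{k-2}=\alpha$ and hence $C=m\alpha/n_3=\Theta(w^3)>m$; yet $\min\{A,B\}=B=n/n_3^{1/3}=\Theta(w^2)$, so the corollary's query bound is $\Theta\bigl(w^2\cdot\poly(\log w)\bigr)$, whereas $m+n=\Theta(w^3)$. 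For $w$ large, your cap $O(m+n)$ strictly exceeds the claimed bound, and the search's actual query count (Lemma~\ref{lem:Main}'s $\min\{A,B\}+C$ term is realized here because $s_2=\Theta(w^3)$) does too, unless queries are deduplicated. The paper's route is different and necessary: it \emph{caches} query answers, noting that distinct neighbor queries number at most $2m$, pair queries are preceded by neighbor queries, and the initial degree queries are bounded by $s_1=O(\min\{A,B\})\cdot\poly$, yielding the tighter cap $O(m)+s_1$. Combining $\min\bigl\{\min\{A,B\}+C,\;O(m)+s_1\bigr\}\leq O(\min\{A,B\}+\min\{C,m\})$ closes the argument. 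Your fallback needs to be replaced by (or augmented with) this caching observation; the $\Theta(m+n)$ threshold alone cannot recover $\min\{C,m\}$.
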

\begin{proof}
\sloppy
		We prove the corollary by relying on Theorem~18 of~\cite{ERS18}, which we refer to as the {\em Search Theorem\/}.
		Let $\vec{V}=(n,k,\alpha)$ and let $\mAlg(\tnk, \eps,\delta, \vec{V})=$\Main$(n,k,\alpha, \eps, \delta,\tnk)$.
		By Lemma~\ref{lem:Main}, an invocation of $\mAlg$ with parameters
% $\ov=
$\tnk, \eps$, $\vec{V}=(n,k,\alpha)$ and $\delta=\eps/\log \log (n^k)$,
		 meets (with a small caveat that will be discusses momentarily) the first two requirements of the search theorem.
		  Let $\mu\big(\mA(\tnk, \eps,\delta, \vec{V})\big)$ denote the expected %query complexity and
running time of $\mAlg(\tnk, \eps,\delta, \vec{V})$. It holds that $\mu\big(\mAlg(\tnk, \eps,\delta, \vec{V})\big)$ is monotonically non-increasing with $\tnk$, and that  $\mu\big(\mAlg(\tnk, \eps,\delta, \vec{V})\big) \leq  \mu\big(\mAlg(\tnk, \eps,\delta, \vec{V})\big) \cdot (\tnk/n_k)^{\ell}$ for $\ell=2$. Also, the maximal value of $n_k$ is $B=n^k$.

\sloppy		
	Therefore, by the Search Theorem, there exists an algorithm that, given access to \Main, returns a value $\hnk$ such that $\hnk \in (1\pm \eps)$ with probability at least $2/3$.
	Furthermore, the expected running time of the algorithm is $\mu\big(\mbox{\Main}(n, k,\alpha, \eps, \delta, \tnk)\big)\cdot \poly(k \log n, 1/\epsilon)$. By Lemma~\ref{lem:Main}, this equals
	\[O\left(\min\left\{ \frac{n \alpha^{k-1}}{\tnk},\; \frac{n}{n_k^{1/k}} \right\} +
	\frac{m \cdot \alpha^{k-2}}{n_k} 	\right)\cdot
	\poly(\log (n/\delta), 1/\epsilon, k^k).
	\numberthis \label{eqn:complex}
	\]
		
	The aforementioned caveat is that the second item in Theorem~18 of~\cite{ERS18} requires that if $\tnk>n_k$, then with probability at least $\eps/4$, \Approx\ returns a value such that $\hnk<(1+\eps)n_k$, while the discussion at the beginning of the proof only gives that if $\tnk>n_k$ then with probability at least $1-\delta$, $\hnk \leq \tnk$. However, it can be easily verified that this condition is also sufficient for the proof of Theorem 18 to hold\footnote{This inequality is only used in the last equation in the first column of page 732, which holds if the second term is removed.}.
	
	Finally, since each pair query of the algorithm is preceded by a neighbor query, by saving the answers to its previous queries, the algorithm can avoid performing more than $O(m)$ queries. Hence, the corollary follows from Equation~\eqref{eqn:complex}.
\end{proof}

\newcommand{\INT}{\textsf{INT}^N}
\newcommand{\calE}{\mathcal{E}}
\newcommand{\ALG}{ALG}
\section{Lower Bound}\label{sec:lb}

In~\cite{Blais2012}, Blais et al. developed a framework for proving property testing lower bounds via reductions from communication complexity. Their work was later generalized by Goldreich~\cite{Gol13_CC} and recently was formalized for the setting of sublinear graph estimation
by Eden and Rosenbaum~\cite{ER_CC}. All of our constructions in this section are slight variations of the constructions described in~\cite{ER_CC} (which themselves rely on previous papers).

We start with a very high-level overview of the framework of~\cite{ER_CC} for proving lower bounds on parameter estimation problems.
Let $P$ be the parameter at question, let \textsf{Approx-P} be an algorithm for approximating $P$ up to some factor and with high probability, and let $Q$ be the set of \textsf{Approx-P}'s allowable queries.
 The starting point is  choosing a ``hard'' communication problem $\Pi:\{0,1\}^N\times\{0,1\}^N \rightarrow \{0,1\}$ and reducing solving $\Pi$ to solving the estimation problem of $P$.  	This is done by defining an embedding $\mathcal{E}: \{0,1\}^N\times\{0,1\}^N \rightarrow \mG_n$, where $\mG_n$ is the family of graphs over $n$ vertices, as follows.
 \[
P(x,y)=\begin{cases}
p & \Pi(x,y)=0\\
p' & \Pi(x,y)=1
\end{cases}
 \]
% For an instance $x,y$ such that $\Pi(x,y)=1$, $P(\calE(x,y))=p$ and for an instance $\Pi(x,y)=0$, $P(\calE(x,y))=p'$
 such that an algorithm for estimating $P$ %could
 can differentiate between $p$ and $p'$. Furthermore, the embedding function $\calE$ should be such that for any input $x,y$, Alice and Bob can answer any query form $Q$ on $\calE(x,y)$ by at most $O(\beta)$ bits of communication. %This would imply that
 Thus, Alice and Bob can solve $\Pi(x,y)$ by invoking \textsf{Approx-P} on the (implicit) graph $\calE(x,y)$, answering its queries by communicating and at last answering according to \textsf{Approx-P}'s result. This implies that $QC(\mbox{\textsf{Approx-P}})=\Omega(CC(\Pi) /\beta)$, where $QC(\mbox{\textsf{Approx-P}})$ is the expected query complexity of \textsf{Approx-P} and $CC(\Pi)$ is the lower bound on the expected query complexity of $\Pi$.

%In all the following proofs we let the hard communication problem be the following.
Here we consider the following (hard) communication problem.
Let $\INT_r$ be a promise problem that receives two inputs $x,y:\{0,1\}^N \times \{0,1\}^N$ such that either $\sum_{i=1}^N x_i \cdot y_i=r$ or $\sum_{i=1}^N x_i \cdot y_i=0$, and
\[\INT_r(x,y)=
\begin{cases}
1 & \sum_{i=1}^N x_i \cdot y_i=r \\
0 & \sum_{i=1}^N x_i \cdot y_i=0
\end{cases}
\;.
\]
This problem is a simple generalization of the well-known communication problem Set-Disjointness, denoted $\textsf{Set-Disj}$, in which Alice and Bob are required to distinguish between the case that the inputs $x$ and $y$ intersect in a single bit to the case that they do not intersect at all. The randomized communication complexity of $\textsf{Set-Disj}$ is $\Omega(N)$ even for the case that Alice and Bob have access to shared randomness and are only required to reply correctly with high constant probability~\cite{Kalyanasundaram1992,Razborov1992}. An easy corollary is that $CC(\INT_r) = \Omega(CC({\textsf{Set-Disj}})/r)=\Omega(N/r)$ (see Corollary 2.7 in \cite{ER_CC}).

\begin{thm}\label{thm:lb-full}
Let $Q$ be the set of degree, neighbor and pair queries.
Let $\ALG$ be a multiplicative approximation algorithm for estimating the number of $k$-cliques that succeeds with probability at least $2/3$ for any graph $G$ with $n$ vertices, $m$ edges, $n_k$ cliques and arboricity at most $\alpha$ and let $q(\ALG)$ denote its expected query complexity over the set of queries $Q$. It holds that  \[q(\ALG)=\Omega\left(\min\left\{\frac{n}{k^k \cdot n_k^{1/k}}, \frac{n \alpha^{k-1}}{k\cdot n_k}\right\}+\min\left\{\frac{m(\alpha/k)^{k-2}}{n_k},m\right\} \right)\;.\]
\end{thm}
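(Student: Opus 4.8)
The plan is to prove Theorem~\ref{thm:lb-full} by reducing the promise communication problem $\INT_r$ to the $k$-clique estimation problem, via the communication-complexity-to-sublinear-algorithms framework of~\cite{ER_CC}. Concretely, I would design, for each of the two summands in the claimed lower bound, an embedding $\calE:\{0,1\}^N\times\{0,1\}^N\to\mG_n$ of $\INT_r$ (for an appropriate choice of $N$ and $r$) into a graph of arboricity at most $\alpha$, with $n$ vertices, $m$ edges, and roughly $n_k$ cliques, such that $\INT_r(x,y)=1$ produces $\Theta(n_k)$ cliques while $\INT_r(x,y)=0$ produces a number of cliques that differs by a constant multiplicative factor (e.g., zero extra cliques in the $0$ case). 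The final bound is then the maximum of the two bounds obtained, which I would write as the sum since each term may dominate in a different regime. The key quantitative identity to track throughout is $CC(\INT_r)=\Omega(N/r)$, and the per-query communication cost $\beta$ must be shown to be $O(\polylog)$ (or $O(1)$) so that $q(\ALG)=\Omega(N/(r\beta))$.

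For the first term $\min\{n/(k^k n_k^{1/k}),\, n\alpha^{k-1}/(k\, n_k)\}$, I would adapt the standard ``hidden clique gadget'' construction: build a base graph on $\Theta(n)$ vertices consisting of many disjoint small dense blobs, where each coordinate $i\in[N]$ of the $\INT_r$ instance controls whether a particular gadget contributes a batch of $k$-cliques, and the contribution is switched on only when $x_i=y_i=1$. The size of each gadget (hence the number of cliques per active coordinate) is chosen so that: (i) the total vertex count is $n$; (ii) the arboricity stays bounded by $\alpha$, which forces each clique-producing gadget to have at most $\alpha^{k-1}$ cliques by Claim~\ref{clm:d^k_leq_d^k-1}/Corollary~\ref{cor:nk_vs_nt}-type reasoning on the gadget; and (iii) in the ``yes'' case there are $\Theta(n_k)$ cliques while in the ``no'' case there are $o(n_k)$. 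Balancing $N\cdot(\text{cliques per gadget})=\Theta(n_k)$ with $N\cdot(\text{vertices per gadget})=\Theta(n)$ yields $N=\Theta(n_k^{1/k})$ or $N=\Theta(n_k/\alpha^{k-1})$ depending on whether the gadget is vertex-bottlenecked or arboricity-bottlenecked, and taking $r=\Theta(1)$ (a single planted collision suffices) gives the first term up to the $k^k$, $k$ factors that come from accounting for the number of cliques inside a $k$-vertex gadget and for query simulation overhead.

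For the second term $\min\{m(\alpha/k)^{k-2}/n_k,\, m\}$, I would use an edge-based gadget reminiscent of the Chiba--Nishizeki tight example: start from a graph with $\Theta(m)$ edges arranged so that each ``hub'' vertex of degree $\Theta(\alpha)$ together with its neighborhood can host $\Theta(\alpha^{k-2})$ many $k$-cliques through a single ``trigger'' edge, and let coordinate $i$ of $\INT_r$ control the presence of that trigger edge (present iff $x_i=y_i=1$). Here the number of coordinates is $N=\Theta(m)$ and the number of active coordinates in the ``yes'' case is $r=\Theta(n_k/\alpha^{k-2})$ so that the total clique count is $\Theta(n_k)$; then $CC(\INT_r)=\Omega(N/r)=\Omega(m\alpha^{k-2}/n_k)$, which after the $(1/k)^{k-2}$ loss from packing cliques into $k$-vertex structures gives $m(\alpha/k)^{k-2}/n_k$, capped at $m$ because the query complexity cannot exceed the trivial $O(m+n)$ (equivalently, when $n_k$ is small we just use the $N=\Theta(m)$, $r=\Theta(1)$ instance). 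In both constructions I must verify that a degree, neighbor, or pair query on $\calE(x,y)$ can be answered by Alice and Bob with $O(\log n)$ bits of communication, which follows because the only ``unknown'' part of any query's answer is whether a small, locally-identifiable set of edges is present, and that depends on a single bit $x_i\wedge y_i$.

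The main obstacle will be the arboricity bookkeeping in the first construction: I need the planted $k$-cliques to genuinely raise $n_k$ by the target amount while keeping the arboricity of the \emph{entire} graph (not just a gadget) at most $\alpha$, and simultaneously keeping $n$ and $m$ at their nominal values; the interaction of these three constraints is what forces the precise form $\min\{n/n_k^{1/k},\, n\alpha^{k-1}/n_k\}$, and getting the two sub-regimes to glue correctly (and to match the algorithm's upper bound) requires care in choosing gadget sizes as a function of how $n_k$ compares to $\alpha^k$ — exactly the same dichotomy that appears in the setting of $\tauu_1$ in Definition~\ref{def:tau_i}. A secondary technical point is ensuring the promise of $\INT_r$ is respected: the embedding must never produce an ``intermediate'' number of cliques, which I handle by making every gadget's clique contribution all-or-nothing and independent across coordinates, so that $n_k(\calE(x,y))$ is an exact affine function of $\sum_i x_iy_i$.
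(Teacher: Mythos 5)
Your proposal captures the right high-level architecture---reduce from $\INT_r$, gadgets whose clique contribution is switched on by a single bit $x_i\wedge y_i$, $O(1)$ bits of communication per simulated query---and the second (edge-based) term tracks the paper's Chiba--Nishizeki-style construction closely, including the correct choice $N=\Theta(m)$ and $r=\Theta(n_k/(\alpha/k)^{k-2})$. But the first (vertex-based) term contains a genuine parameter error that would make the bound come out wrong.

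You write that balancing yields ``$N=\Theta(n_k^{1/k})$ or $N=\Theta(n_k/\alpha^{k-1})$'' and then set $r=\Theta(1)$. With $r=\Theta(1)$ the lower bound from $\INT_r$ is $\Omega(N/r)=\Omega(N)$, which for these values of $N$ gives $\Omega(n_k^{1/k})$ or $\Omega(n_k/\alpha^{k-1})$---not the claimed $\Omega\big(n/(k^k n_k^{1/k})\big)$ or $\Omega\big(n\alpha^{k-1}/(k\, n_k)\big)$. The correct bookkeeping is: if each gadget has $w$ vertices (so it can contribute up to $\Theta(w^k)$ cliques) and the gadgets partition $\Theta(n)$ vertices, then $N=\Theta(n/w)$, and the number of \emph{active} gadgets must be $r=\Theta(n_k/w^k)$; this gives $N/r=\Theta(n w^{k-1}/n_k)$, and taking $w=\min\{\alpha,\,\Theta(k n_k^{1/k})\}$ yields the two regimes. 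In particular, in the arboricity-bottlenecked regime $w=\alpha$ you must take $r=\Theta(n_k/\alpha^k)$, which is typically $\gg 1$; a single planted collision is only enough when $w\approx n_k^{1/k}$, i.e.\ when a single $w$-clique already accounts for essentially all of $n_k$. The paper treats exactly this small-$n_k$ case (namely $n_k\le\binom{\alpha}{k}$) separately, and in fact dispenses with communication complexity entirely there, using a direct hitting argument (two graph families differing only in whether a planted size-$w$ subset is a clique or an independent set; distinguishing requires $\Omega(n/w)$ queries). Your idea of unifying both regimes under a single CC reduction with varying $r$ is workable, but you need to fix $N$ and $r$ as above; as written, the first term does not follow.

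One further point worth making explicit: the paper does \emph{not} ``take the maximum of the two bounds''---it states the lower bound as a sum, which is justified because the two constructions are proved on essentially disjoint query budgets and one can take whichever is larger (max and sum differ by at most a factor $2$). Your phrasing (``write as the sum since each term may dominate'') glosses over this, but it is not the source of a gap.
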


%We start with proving the first term, and we divide the proof to two
%and in particular, that for graph such that $n_k^{1/k}<\alpha$, $\Omega(n/n_k^{1/k})$ queries are required to estimate $n_k$.

\begin{proof} 		
For the sake of simplicity, for parameters $n,m$ and $n_k$,
we shall %prove the claim for
consider graphs
with $\Theta(n)$ vertices, $\Theta(m)$ edges, and $\Theta(n_k)$ $k$-cliques.
We shall prove that for different settings of these parameters, it is difficult to distinguish
(with fewer queries than stated in the theorem) between such graphs that have arboricity at most $\alpha$ and graphs over the same number of vertices,
the same bound on the arboricity,
and {\em no $k$-cliques\/}. (We note that the latter graphs will also have $\Theta(m)$ edges, which implies that the lower bound holds also when the algorithm is provided with a constant factor estimate of the number of edges.)
Recall that by Corollary~\ref{cor:nk_vs_nt}, for graphs with arboricity at most $\alpha$, we have that $m=O(n\alpha)$ and $n_k=O(m\alpha^{k-2}/k!)$.

\paragraph{Establishing the first (additive) term.}
%\dnote{Observe that within the first additive term, if $n_k \leq \alpha^k$, then the term evaluates to $\frac{n}{k^k \cdot n_k^{1/k}}$, and otherwise it evaluates to $\frac{n \alpha^{k-1}}{k^k\cdot n_k}$. We start by addressing the first case.}

We first consider the case that $n_k \leq \binom{\alpha}{k}$, and prove that in this case, $q(ALG)=\Omega\left(\frac{n}{k \cdot n_k^{1/k}}\right)$. We establish this (relatively simple) claim without relying on the communication complexity framework.
Let $w$ be the maximal integer for which $\binom{w}{k}\leq n_k$, so that $w \leq \alpha$, and a clique over $w$ vertices contains
$\Theta(n_k)$ $k$-cliques.
%Then, it holds
Observe that $(k/e)\cdot n_k^{1/k} \leq w \leq k \cdot n_k^{1/k}$.
Let $G'$ be a graph over $2n$ vertices, $m$ edges, arboricity $\alpha$ and no $k$-cliques
 (e.g., a bipartite graph with $m/\alpha$ vertices on each side, each having $\alpha$ neighbors, and the remaining $2n-2m/\alpha$ vertices are isolated vertices).
Consider the following two families of graphs, $\mG_1$ and $\mG_2$. Both families consist of the subgraph $G'$ and an additional subgraph over $w$ vertices, denoted $H$.
In all the graphs of the family $\mG_1$, $H$ is a clique, and in the graphs of the family $\mG_2$,
%$H$ is a graph  with $\binom{w}{2}$ edges, no $k$-cliques and arboricity $O(\alpha)$.
$H$ is an independent set.
The graphs within the families differ only by the labeling of the vertices. By the above construction, for any graph $G\in \mG_1\cup \mG_2$, $n(G)=n+w=\Theta(n)$, as $w\leq k \cdot n_k^{1/k}\leq n$ where the last inequality is by Corollary~\ref{cor:nk_vs_nt}. Since $w\leq \alpha$,
%  $m(G)=k^k \cdot \binom{w}{2}+ m = \Theta(m)$.
in both families graphs have $\Theta(m)$ edges (since $\binom{w}{2}\leq \binom{\alpha}{2}\leq m$),
and the graphs in
% the first family have arboricity at most $\alpha$
% $\alpha(G)=\max\{\alpha, w\}=\alpha$,
both families have arboricity at most $\alpha$.
% and the graphs in the second family also have arboricity $\alpha$.
Finally, for any $G\in \mG_1$, $n_k(G)=\Theta(n_k)$, and for any $G\in \mG_1$, $n_k(G)=0$.

Clearly, in order for any algorithm to distinguish with high constant probability between graphs drawn from the first family, and graph drawn from the second family, it must hit the set $H$ with high constant probability. Since the probability of hitting a vertex in the set $H$ is $w/n$, it follows that distinguishing between the two families requires
$\Omega(n/w) = \Omega\left(\frac{n}{k \cdot n_k^{1/k}}\right)$ queries in expectation.

%Let $H$ be a clique on $k$ vertices, and let $G'$ be a graph on $n$ vertices, $m$ edges, arboricity $\alpha$ and no $k$-cliques. Also, let $\mu=n_k$. Then by Theorem 4.1 of~\cite{ER_CC}, any algorithm
%\ALG that distinguishes between a graph $G$ on $n$ vertices and $m$ edges with either $h_H(G')=0$ or $h_H(G') + n_k=n_k$ cliques requires $\Omega(n/n_k^{1/k})$ queries in expectation. The proof holds since the construction in the proof of~\cite{ER_CC} is a graph with arboricity $n_k^{1/k}+\alpha(G')=O(\alpha)$.

%\end{proof}
%\begin{claim}
%For a graph $G$ with $n$ vertices, $m$ edges, $n_k$ cliques and arboricity $\alpha$,
%if  $n_k^{1/k}>\alpha$, then $T=\Omega(n\alpha^{k-1}/n_k)$.
%\end{claim}
%\begin{proof}
We now %turn to %prove that for graphs such that $n_k>\binom{\alpha}{k}$,
show that for $n_k>\binom{\alpha}{k}$,
$q(\ALG)=\Omega\left(\frac{n\alpha^{k-1}}{k^k\cdot n_k}\right)$.
Let $r$ be the maximal integer such that $r\binom{\alpha}{k} \leq n_k$.
We will reduce from the aforementioned problem $\INT_r$  for $N= n/\alpha $, and $r$ as above.
(We assume for the sake of simplicity that $n$ is divisible by $\alpha$ - the construction can be easily modified if this is not the case.)
For two inputs $x,y\in \{0,1\}^N$ we define $G_{x,y}$ as follows.
%Let $w$ be the integer for which $\binom{w}{k}\alpha$, so that $\alpha(k/e)\leq w\leq $.
For every index $i \in [N]$, there is a set $S_i$ over $\alpha$ vertices that is either an independent set if $x_i \cdot y_i =0$ or a clique if $x_i \cdot y_i =1$. In addition, for every $x$ and $y$ the graph $G_{x,y}$ contains a fixed subgraph $G'$ over $n$ vertices  and $m$ edges, arboricity $\alpha$ and no $k$-cliques.
Hence, if $\sum _{i=1}^N x_i \cdot y_i=r$ then $n_k(G_{x,y})=r\cdot \binom{\alpha}{k} = \Theta(n_k)$, and if $\sum _{i=1}^N x_i \cdot y_i=0$ then $n_k(G_{x,y})=0.$ Also, by Corollary~\ref{cor:nk_vs_nt}, $n_k \leq n\alpha^{k-1}/k!$, so for any $x,y$ $n(G_{x,y})=n + r\cdot \alpha = \Theta(n)$. For $x,y$ such that $\sum _{i=1}^N x_i \cdot y_i=r$ it holds by Corollary~\ref{cor:nk_vs_nt}, that $m(G_{x,y})=m + r\cdot \binom{\alpha}{2} = \Theta(m)$, and for $x,y$ such that $\sum _{i=1}^N x_i \cdot y_i=0$, $m(G_{x,y})=m$. Finally, for any $x,y$, $\alpha(G_{x,y})=\alpha$.

 Alice and Bob can answer any degree, neighbor or pair query by exchanging at most two bits. Specifically, if the query is a degree query about a vertex in the $S_i$, then its degree can be determined by Alice and Bob exchanging the value of their $i\th$ bit. Similarly, if the query is a neighbor query on a vertex in $S_i$ or a pair query on two vertices in a set $S_i$. On all other queries Alice and Bob can answer with no communication.
  Therefore, by Theorem 3.3 of~\cite{ER_CC},  $q(\ALG)=\Omega(N/r)/2=\Omega\left(\frac{n\alpha^{k-1}}{k^k\cdot n_k}\right)$.

\paragraph{Establishing the second term.}
	We first deal with the case that $n_k=\Theta(r\cdot (\alpha/k)^{k-2})$ for some integer $r\geq 1$, and prove that $q(\ALG)=\Omega\left(\frac{m(\alpha/k)^{k-2}}{n_k}\right)$.
	We use a variation of the proof of Theorem B.1 of~\cite{ER_CC}.
	We reduce from the problem  $\INT_r$ for $N=m$ and $r$ as above, where we  view strings in $\{0,1\}^m$ as strings in $\{0,1\}^{m/\alpha}\times \{0,1\}^{\alpha}$. That is, we view the indices of $x$ as pairs $(i,j)$ for $i \in [m/\alpha]$ and $j \in [{\alpha}]$. 
	For every $x,y$  the graph $G_{x,y}$ consists of sets $A,A',B,B'$,  sets $S_1, \ldots, S_{k-2}$ and an additional set $C$ as detailed next.  The size of each of the sets $A,A',B,B'$  is $m/\alpha$, the size of each of the sets $S_1, \ldots, S_{k-2}$ is $\alpha/k$ and the size of $C$ is $n-m/\alpha$. (Here we assume for the sake of simplicity that $m$ is divisible by $\alpha$ and $\alpha$ is divisible by $k$.)
There is an edge between every vertex in   $A\cup B$  and every and vertex in $S=S_1\cup \ldots\cup S_{k-2}$. Also,
	there is a complete bipartite graph between any two sets $S_{\ell}$ and $S_{\ell'}$ for every $\ell, \ell'\in[k-2]$. The set $C$ is an independent set.
	The rest of the edges of the graph $G_{x,y}$ depend on $x$ and $y$ as follows. Let $A=\{a_1, \ldots, a_{m/\alpha}\}$, $B=\{b_1, \ldots, b_{m/\alpha}\}$ and similarly for $A'$ and $B'$.
%	We think of the indices of $[N]=[m]$ as indices of $[n] \times [\bar{d}]$.
	For the sake of simplicity, when we write $a_{i}$ we mean $a_{(i \mod n + 1)}$, and similarly for all other vertices in $A',B$ and $B'$.
	For any two indices $i\in [n], j \in [{\alpha}]$, if $x_{i,j} \cdot y_{i,j}=0$ the graph $G_{x,y}$ contains the edges $(a_i,a'_{i+j})$ and $(b_{i+j},b'_{i})$ and for any two indices $i\in[n], j \in [\alpha]$ such that $x_{i,j}\cdot y_{i,j}=1$ the graph $G_{x,y}$ contains the edges $(a_i,b_{i+j})$ and $(a'_{i+j}, b'_i)$.
	Furthermore, for every $a_i$, we label the neighbors to $A'\cup B$ as the first $\alpha$ neighbors, and the remaining $(k-2)\cdot \alpha/k$ neighbors are the vertices of $S$ (with some arbitrary but fixed order). We similarly label the neighbors of $B$.
	
	It follows that if $\INT_r(x,y)=0$ then $n_k(G_{x,y})=0$ and otherwise $n_k(G_{x,y})=r \cdot (\alpha/k)^{k-2}$ (as every edge between the sets $A$ and $B$ creates $(\alpha/k)^{k-2}$ $k$-cliques with the vertices of the sets $S_1, \ldots, S_{k-2}$).
	Also, for any $x$ and $y$,
%	there are $m$ edges between the vertices of the sets $A\cup A'$ and $B\cup B'$,
	$m(G_{x,y})=2m+(m/\alpha)\cdot(k-2)\cdot (\alpha/k)+ \binom{k-2}{2}(\alpha/k)^2=\Theta(m)$, where the last inequality is due to Corollary~\ref{cor:nk_vs_nt}. Finally, for every $x$ and $y$, $\alpha(G_{x,y})=\alpha.$
	
	It remains to prove that Alice and Bob can answer any degree, neighbor or pair query with bounded communication.	
	 Observe that the degrees of the vertices in the graph are not affected by the input $x,y$. Therefore, Alice and Bob can answer degree queries with zero communication. This is also the case for neighbor queries on the vertices of the sets $S_1, \ldots, S_{k-2}$ and of $C$, as well as neighbor queries $(v_i,j)$ for $v\in\{a,a',b,b'\}$ and $j>\alpha$ (by the order of labels defined above).
	 For any neighbor query $(v_i,j)$ such that $v \in \{a,a',b,b'\}$ and $j\in [\alpha]$, Alice and Bob can respond by communicating $x_{i,j}$ and $y_{i,j}$ to each other and deciding according to $x_{i,j}\cdot y_{i,j}$.
	For example, if the algorithm performs a neighbor query $(a_i, j)$ then Alice sends Bob $x_{i,j}$ and Bob sends Alice $y_{i,j}$ and if $x_{i,j}\cdot y_{i,j}=0$ then they respond $a'_{i+j}$ and otherwise $b_{i+j}$.	
	Pair queries $(v_i,u_{i+j})$ for $v_i\in A$ and $u_{i+j} \in A' \cup B$
	can again be answered by exchanging the two bits $x_{i,j}$ and $y_{i,j}$ and similarly pair queries for pairs $(v_i,u_{i+j})$ for $v_i\in B$ and $u_{i+j} \in B' \cup A$. All other pair queries can be answered without any communication. Hence, all queries can be answered by exchanging at most $\beta=2$ bits of communication.
	Therefore, by Theorem 3 of~\cite{ER_CC},
	$q(\ALG)= \Omega((N/r)/\beta)=\Omega(m/r)=\Omega\left(\frac{m(\alpha/k)^{k-2}}{n_k}\right).
	$

	In the case where $n_k\leq (\alpha/k)^{k-2}$, we modify the above construction as follows. We reduce the size of each set in $S$ to $n_k^{\frac{1}{k-2}}$, and we let $r=1$. Hence, if $\INT_r (x,y)=0$ then $n_k(\calE(x,y)) = 0$, and if $\INT_r (x,y)=1$ then $n_k(\calE(x,y)) = \Theta(n_k)$.
	The proof follows as before, and we get that
	$q(\ALG)= \Omega((m/r)/\beta) =\Omega(m)$.
\end{proof}

\ifnum\conf=0
\bibliographystyle{alpha}
\else
\bibliographystyle{plain}
\fi
\bibliography{k_cliques_bib}			
 \newpage
\appendix
%\section{Appendix A}

\section{Table of notations}\label{app:prel}

In the following table we gather the various notations used throughout the paper. 
In all notations, $\vT$ is an ordered $t$-clique $(v_1,\dots,v_t)$ (for some $t\in[k]$) 
and $\mR$ is a (multi)-set of ordered $t$-cliques. Each notation also has a hyperlink to the location where it was first defined.

\hypersetup{linkcolor=black}

\begin{centering}
	\begin{table}[ht!]
 \begin{tabular}{ !{\vrule width 1.3pt} m{5.4cm} | m{10cm} !{\vrule width 1.3pt}}
 	     \noalign{\hrule height 2pt}
			{\bf Notation} & {\bf Meaning} \\ 	
    \noalign{\hrule height 2pt}
   \vspace{1ex}
		\hyperref[par:notation]{$U(\vT)$}  & The unordered clique $\{v_1, \ldots, v_t\}$ corresponding to $\vT=(v_1, \ldots, v_t)$. \\  \hline
%       \hyperref[par:notation]{$U(\vT)$} for $T=\{v_1, \ldots, v_t\}$  & The $t!$ ordered cliques $(v_{i_1}, \ldots, v_{i_t})$ for $\{v_{i_1}, \ldots, v_{i_t}\}=\{v_1, \ldots, v_t\}$. \\  \hline 
%         \vspace{1ex}
%         \\ \hline
         \vspace{1ex}
		\hyperref[par:notation]{$\mC_t = \mC_t(G)$} &  The set of $t$-cliques in $G$. \\ \hline
         \vspace{1ex}
		\hyperref[par:notation]{$n_t=|\mC_t|$} & The number of $t$-cliques in $G$. \\  \hline
         \vspace{1ex}
        \hyperref[def:least_deg_ver]{$d(\vT)$}  &  The degree of the minimal-degree vertex in $\vT$ \\ \hline
         \vspace{1ex}
        \hyperref[def:least_deg_ver]{$d(\mR) =\sum_{\vT\in \mR} d(\vT)$}  &  The sum of degrees of minimal-degree vertices in ordered cliques belonging to $\mR$ 
                                  \\ \hline
         \vspace{0.7ex}
		\hyperref[par:Tleqj]{$\vT_{\leq j}$} for $j \leq t$ & The $j$-tuple $(v_1,\dots,v_j)$ formed by the first $j$ elements in $\vT$. \\ \hline
        \vspace{1ex}
		\hyperref[par:tuple]{$(\vT, u)$} & The ordered $(t+1)$-tuple $(v_1, \ldots, v_t, u)$. \\ \hline
         \vspace{1ex}
		\hyperref[par:mC_jT]{$\mC_j(\vT)= \mC_j(U(\vT))$} for $j\geq t$ & The set of unordered $j$-cliques that $\vT$ participates in. That is, the set of  $j$-cliques $T'$ such that $U(\vT) \subseteq T'$.\\ \hline
        \vspace{1ex}
		\hyperref[par:mC_jT]{$c_j(\vT)=|\mC_j(\vT)|$}, for $j\geq t$ & The number of $j$-cliques that $\vT$ participates in.\\ \hline
         \vspace{1ex}
		\hyperref[par:mC_jT]{$\mC_j(\mR)$}  for $j \geq t$ & The union of $\mC_j(T)$ taken over
		all $\vT \in \mR$, where here in ``union'' we mean with multiplicity.
             \\ \hline
              \vspace{1ex}
				\hyperref[par:c_k(T)]{$c_k(\mR) =\sum_{\vT\in \mR}c_k(\vT) %$ \phantom{ck(R)x}  $
                 =|\mC_k(\mR)|$} &  The number of $k$-cliques that ordered cliques in $\mR$ participate in (with multiplicity) \\ \hline
                  \vspace{1ex}
		\hyperref[par:mC_jT]{$\mOC_j(\vT)$} for $j \geq t$ & The set of all ordered $j$-cliques  $\vT' = (v_1,\dots,v_t,u_{t+1},\dots,u_j)$ that are extensions of $\vT$. \\ \hline
   \vspace{1ex}
	 	\hyperref[par:mC_jT]{$\mOC_j(\mR)$} for $j \geq t$ & The union of $\mOC_j(T)$ taken over
	 	all $\vT \in \mR$ (with multiplicity). \\ \hline
  \vspace{1ex}
  \hyperref[par:active]{$\mA$} & A subset of active ordered cliques.\\ \hline
  \vspace{1ex}
  \hyperref[def:fully-active]{$\mFA_t$}  & The set of fully-active ordered $t$ cliques w.r.t. $\mA$, i.e., they and each of their prefixes belong to $\mA$.\\ \hline
  \vspace{1ex}
  \hyperref[par:wt]{$\wtA(\vT)$} & The  weight of $\vT$ with respect to $\mA$. \\ 
  \hline
  \vspace{1ex}
  \hyperref[def:fully-active-I]{$\mFAI_t$}  & The set of fully-active ordered $t$ cliques w.r.t. $\mA$ and $\vI$.\\ \hline
  \vspace{1ex}
  \hyperref[par:wtI]{$\wtAI(\vT)$} & The  weight of $\vT$ with respect to $\mA$ and $\vI$. \\ 
  
 	    \noalign{\hrule height 1.5pt}
	\end{tabular}
\caption{The notations used throughout the paper.} \label{tab:notation}
\end{table}
\end{centering}
We note that we use $\mC_j(\cdot)$ in two different ways: $\mC_j(G)$ is the number of $j$-cliques {\em in\/} the graph $G$, while $\mC_j(T)$ is the number of $j$-cliques (in $G$) that $T$ {\em participates in\/} (belongs to). Since we use $\mC_j$ as a shorthand for $\mC_j(G)$, there should			be no ambiguity.
%		\footnote{The formal term should be ``sum'', but since ``sum'' is usually used in the context
%		 of number, we prefer to use ``union''.}

\hypersetup{linkcolor=blue} 

\ifnum\conf=1
\section{Missing proofs from Section~\ref{sec:prel}}\label{appendix:prel-proofs}

\SumMindCliques

\medskip

\BoundNkProof

\fi

\end{document}